\newtheorem{corollary}{{Corollary}}
\newtheorem{theorem}{{Theorem}}
\newtheorem{definition}{{Definition}}
\newtheorem{lemma}{{Lemma}}
\newtheorem{remark}{{Remark}}
\newtheorem{proposition}{{Proposition}}
\newtheorem*{conjecture}{{Conjecture}}
\newtheorem{example}{{Example}}
\newcommand{\F}{\mathbb{F}}
\newcommand{\Tr}{\mathrm{Tr}}
\newcommand{\Z}{\mathbb{Z}}
\renewcommand{\L}{\mathcal{L}}
\newcommand{\bU}{\mathbf{U}}
\newcommand{\bA}{\mathbf{A}}
\newcommand{\N}{\mathcal{N}}
\newcommand{\C}{\mathcal{C}}
\newcommand{\rank}{\mathrm{rank}}
\renewcommand{\Im}{\mathrm{Im}}
\begin{document}
\title[Vectorial functions with maximal number of bent components]{On vectorial functions with  maximal number of bent components}

\author{Xianhong Xie$^{1}$}
\address{$^1$School of Information and Computer, Anhui Agricultural University,
 Hefei 230036, China}
\email{xianhxie@ahau.edu.cn}

\author{Yi Ouyang$^{2,3}$}
\address{$^2$School of Mathematical Sciences, CAS Wu Wen-Tsun Key Laboratory of Mathematics,
University of Science and Technology of China, Hefei 230026, China}
\address{$^3$Hefei National Laboratory, Hefei 230088, China}
\email{yiouyang@ustc.edu.cn}
\thanks{Partially supported by Innovation Program for Quantum Science and Technology (Grant No. 2021ZD0302902) and Anhui Initiative
in Quantum Information Technologies (Grant No. AHY150200)}
\subjclass[2020]{11T71, 94A60, 06E30}

\begin{abstract} We study  vectorial functions with maximal number of bent components in this paper. We first study the Walsh transform and nonlinearity of $F(x)=x^{2^e}h(\Tr_{2^{2m}/2^m}(x))$, where $e\geq0$ and $h(x)$ is a permutation over $\F_{2^m}$. If $h(x)$ is monomial, the nonlinearity of $F(x)$ is shown to be at most $ 2^{2m-1}-2^{\lfloor\frac{3m}{2}\rfloor}$ and some non-plateaued  and plateaued functions attaining the upper bound are found. This gives a partial answer to the open problems proposed by Pott et al. and Anbar et al. If $h(x)$ is linear, the exact  nonlinearity of $F(x)$ is determined.
Secondly, we give a  construction of vectorial functions with maximal number of bent components from known ones, thus obtain two new classes from the Niho class and the Maiorana-McFarland class. Our construction gives a partial answer to an open problem proposed by Pott et al., and also contains vectorial functions outside the complete Maiorana-McFarland class. Finally, we show that the vectorial function $F: \F_{2^{2m}}\rightarrow \F_{2^{2m}}$, $x\mapsto x^{2^m+1}+x^{2^i+1}$ has maximal number of bent components if and only if $i=0$.
\smallskip

\noindent\textbf{Keywords} Vectorial functions, Bent components, Monomial permutation, Niho quadratic function, Maiorana-McFarland class.
\end{abstract}
\maketitle

\section{Introduction}%

Bent functions, as a special class of Boolean functions, were introduced by Rothaus \cite{rt} and have been extensively studied (see \cite{ms}-\cite{boo},\cite{12}) due to their important applications in cryptography, coding theory and combinatorics.

Let $F: V\rightarrow W$ be a vectorial function, where $V$ and $W$ are $\F_2$-vector spaces of dimension $n$ and $k$ respectively.  Fix a non-degenerate inner product $\langle\  ,\  \rangle_W $ on $W$. Then for any $0\neq w\in W$, the component function $F_w$ is the Boolean function $ V\rightarrow \F_2$, $v\mapsto \langle w, F(v)\rangle_W$.  In the study of vectorial functions,  the following questions arise naturally: given $n$ and $k$, what is the maximal number of bent components of vectorial functions  and how can one construct  functions attaining this bound?

Certainly this maximal number $\leq 2^k-1$. A vectorial function with $2^k-1$ bent components is called vectorial bent.
Nyberg \cite{11} showed that  vectorial bent functions can only exist if $n$ is even and $n\geq 2k $, and presented two different constructions of such functions from known classes of bent functions.
For $k=n $,  Pott et al. \cite{5} proved that the number of bent components is at most $2^n-2^{\frac{n}{2}}$ and presented a class of binomial functions $x^{2^e}(x+x^{2^{\frac{n}{2}}})$, $0\leq e\leq m-1$ attaining the upper bound. For $k\geq \frac{n}{2} $,  Zheng et al. \cite{zheng} generalized the bound in \cite{5} and showed that this number is at most $2^k-2^{k-\frac{n}{2}}$.

Now assume $V=W$ and $n=2m$. Suppose $F:V\rightarrow V$ is a vectorial function with maximal number (i.e. $2^n-2^m$) of  bent components.

Pott et al. \cite{5} found two classes of vectorial functions of this kind, namely $x^{2^m+1}$ and $x^{2^e}(x+ x^{2^m})$. Three  new classes of the form $x\ell(x)$ where $\ell$ is a linear mapping over $\F_{2^n}$ have been found since then (see \cite{mzhang, zheng, anb, anb21}). It is an important subject to construct new vectorial functions of this kind,  in particular of the form $x\ell(x)$.

Another subject to study is the nonlinearity $\N_F$ of  these $F$.
Pott et al. \cite{5}  proved that   $N_{x^{2^e}(x+x^{2^m})} = 2^{n-1}-2^{\lfloor\frac{3n}{4}\rfloor}$ if $\gcd(e,m)=1$. If $F$ is moreover a plateaued function over $\F_{2^n}$, Anbar et al.  \cite{anb} proved that  $N_F\leq 2^{n-1}-2^{\lfloor\frac{3n}{4}\rfloor}$. Is there other $F$ attaining this bound? Does this bound hold for other $F$ with maximal number of bent components? These are also interesting problems.

We shall work on vectorial functions $F: \F_{2^n}\rightarrow \F_{2^n}$ in this paper.  Our main contributions  are the followings.
\begin{enumerate}[A]
	\item  We study the  nonlinearity of $F(x)=x^{2^e}h(\Tr_{2^n/2^m}(x))$ with $h(x)$ a permutation over $\F_{2^m}$, which were shown by Zheng et al.  \cite{zheng} to have $2^n-2^m$ bent components.
\begin{itemize}
	\item [(A.1)]If $h(x)$ is a monomial	in the Niho exponents case or the three-valued Walsh transform case, we show that
	 \[ \N_F\leq 2^{n-1}-2^{\lfloor\frac{3n}{4}\rfloor} \]
and give examples of  plateaued  and non-plateaued functions  attaining the upper bound. Therefore,  we obtain a partial answer to problems in \cite{5,anb}.
  \item [(A.2)] If $h(x)$ is a linear permutation, we show that
   \[\N_F= 2^{n-1}-2^{n-\frac{r}{2} -1},\]
   where  $ r=\min \{\rank(\Tr_{2^m/2}(ax^{2^e}h(x)))\mid a\in\F_{2^m}^*\}$.
\end{itemize}

	\item   We  construct two new classes of vectorial functions  from $\F_{2^n}$ to itself with maximal number of bent components via the Niho quadratic function and the Maiorana–McFarland class.

\item  We prove that the binomial vectorial function $F(x)=x^{2^m+1}+x^{2^i+1}$ ($0\leq i\leq m-1$) has $2^n-2^m$ bent components if and only if $i=0$.
\end{enumerate}

\section{Preliminaries}
\subsection{Basic Notations}
For $i\in \Z$, let $v_2(i)$ be the $2$-adic valuation of $i$.

For positive integers $k\mid n$,  the trace function from $\F_{2^n}$ to
its subfield $\F_{2^k}$ is the map $\Tr_{2^n/2^k}(x)=\sum\limits_{i=0}^{\frac{n}{k}-1}x^{2^{ki}}$.

For a finite dimensional $\F_2$-vector space $V$,  we always fix a non-degenerate inner product  $\langle\ ,\ \rangle=\langle\ ,\ \rangle_V$ on $V$. In particular, if $V=\F_2^n$, we let
\[\langle(v_i), (w_i) \rangle=\sum\limits_{i=1}^n v_i w_i. \]
If $V=\F_{2^n}$,  let
\[ \langle \omega, x \rangle=\Tr_{2^n/2}(\omega x). \]

For  a subspace $U$ of $V$, let $U^\perp =\{v\in V:\ \langle v,u\rangle=0\ \text{for all } u\in U\}$ be the orthogonal complementary of $U$, then  $\dim U^\perp=\dim V-\dim U$.

\subsection{Bent and plateaued functions} In this section assume $V$ is an $n$-dimensional $\F_2$-vector space, equipped with a non-degenerate inner product  $\langle\ ,\ \rangle$.

\begin{definition}\label{definition:1} Suppose  $f: V\rightarrow\F_2$ is a Boolean function.
\begin{enumerate}[i]
	\item  The Walsh transform of $f$ is
	\begin{equation*}
		W_f(w):=\sum_{v\in V}(-1)^{f(v)+\langle w, v\rangle},\ w\in V.
	\end{equation*}
	
	\item If $W_f(w)=\pm2^{\frac{n}{2}}$ for all $w\in V$, then $f(x)$ is called a bent function. In this case, its dual  function  $f^*$ is defined via the equality
	\[W_f(w)=2^{\frac{n}{2}}(-1)^{f^*(w)}.\]
	
	\item If $W_f(w)\in\{0,\pm2^{\frac{n+k}{2}}\}$, where $k\in\Z$ and $k\equiv n\pmod2$, then $f(x)$ is called a $k$-plateaued function.
	
	\item The nonlinearity of $f(x)$ is
	\[\N_f:=2^{n-1}-\frac{1}{2}\max_{\omega\in V}|W_f(\omega)|.\]
\end{enumerate}
\end{definition}

By definition, $\N_f\leq 2^{n-1}-2^{\frac{n}{2}-1}$ with equality if and only if $f$ is bent. In addition, the following result about bentness is well-known:

\begin{lemma}\label{lemma:diff} A Boolean function $f: V\rightarrow\F_2$ is bent if and only if its first derivative
	\[ D_a f(v):=f(v+a)+ f(v) \]
in the direction of $a$  is balanced for all $0\neq a\in V$.	
\end{lemma}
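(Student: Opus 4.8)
The plan is to derive the derivative criterion from the defining spectral condition for bentness, namely that $f$ is bent exactly when $W_f(w)^2=2^n$ for every $w\in V$, by relating the squared Walsh spectrum to the \emph{autocorrelation} of $f$ through a Wiener--Khinchin-type identity. First I would set
\[
\Delta_f(a):=\sum_{v\in V}(-1)^{D_a f(v)}=\sum_{v\in V}(-1)^{f(v)+f(v+a)},
\]
and record the elementary but crucial translation: $D_a f$ is balanced precisely when $\Delta_f(a)=0$, while $\Delta_f(0)=2^n$ unconditionally. The whole lemma then reduces to showing that the two conditions ``$W_f(w)^2=2^n$ for all $w$'' and ``$\Delta_f(a)=0$ for all $a\neq 0$'' are equivalent.

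The key step is to expand the square of the Walsh transform and reindex. Writing
\[
W_f(w)^2=\sum_{u,v\in V}(-1)^{f(u)+f(v)+\langle w,u+v\rangle}
\]
and substituting $u=v+a$ (so that $u+v=a$ in characteristic two) converts the double sum over $(u,v)$ into a double sum over $(v,a)$; summing over $v$ first yields
\[
W_f(w)^2=\sum_{a\in V}(-1)^{\langle w,a\rangle}\,\Delta_f(a).
\]
Thus $w\mapsto W_f(w)^2$ is exactly the Walsh transform of $a\mapsto\Delta_f(a)$, and Fourier inversion over $V$ gives the dual identity $\Delta_f(a)=2^{-n}\sum_{w\in V}(-1)^{\langle w,a\rangle}\,W_f(w)^2$.

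With both identities in hand the two implications are immediate. For the forward direction I would assume $f$ is bent, so that $W_f(w)^2=2^n$ is constant; substituting this into the inversion formula leaves $\Delta_f(a)=\sum_{w\in V}(-1)^{\langle w,a\rangle}$, which equals $0$ for every $a\neq0$ by orthogonality of additive characters, whence each $D_a f$ with $a\neq 0$ is balanced. Conversely, if every such $D_a f$ is balanced then $\Delta_f(a)=0$ for $a\neq 0$, so only the term $a=0$ survives in the first identity, giving $W_f(w)^2=\Delta_f(0)=2^n$ for every $w$, i.e. $f$ is bent.

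I do not expect a genuine obstacle here, since the argument is purely formal Fourier analysis on the finite group $(V,+)$. The only points needing care are the legitimacy of the substitution $u=v+a$ together with the interchange of the two finite summations, and the orthogonality relation $\sum_{w\in V}(-1)^{\langle w,a\rangle}=2^n$ when $a=0$ and $0$ otherwise. It is precisely here that non-degeneracy of $\langle\,,\,\rangle$ enters: it guarantees that $w\mapsto\langle w,a\rangle$ is a nontrivial character of $V$ for each $a\neq 0$, so that its character sum vanishes.
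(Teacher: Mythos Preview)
Your argument is correct and is the standard autocorrelation/Wiener--Khinchin proof of this classical characterization. The paper, however, does not supply its own proof of this lemma: it simply records the result as ``well-known'' immediately after Definition~\ref{definition:1}, so there is no alternative approach to compare against. Your write-up would serve perfectly well as the omitted justification.
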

\subsection{Vectorial functions and bent components} Assume $W$ is a finite dimensional $\F_2$-vector space  equipped with a non-degenerate inner product  $\langle\ ,\ \rangle_W$.
\begin{definition}\label{definition:2} Suppose $F: V\rightarrow W$ is a vectorial function.
	\begin{enumerate}[i]
		\item The component function $F_w$ of $F$ at $w\in W$ is the Boolean function
		\[ F_w: V\rightarrow \F_{2},\ v\mapsto \langle w, F(v) \rangle_W. \]
		\item The Walsh transform of $F$  is
		\begin{equation*}
			W_F(a, \omega):=W_{F_a}(\omega)=\sum_{v\in V}(-1)^{F_a(v)+\langle \omega, v\rangle}, \quad a\in W-\{0\},\ \omega \in V.
		\end{equation*}
	\item $F$ is called bent if  its component functions $F_a$ for all $a\in W-\{0\}$ are   bent.
	\item $F$ is called plateaued if its component functions $F_a$ for all $a\in W-\{0\}$ are plateaued (not necessary with the same integer $k$).
	\item The nonlinearity of $F$ is the minimal nonlinearity among its component functions, i.e.,
	\[\N_F:=2^{n-1}-\frac{1}{2}\max_{a\in W-\{0\}}\max_{\omega\in V}|W_{F_a}(\omega)|.\]
	\end{enumerate}
\end{definition}
By Definition~\ref{definition:2}, a vectorial function $F:V\rightarrow W$ has at most $2^{\dim W}-1$ bent components. By the following proposition,  a vectorial function $F: V\rightarrow V$ has at most $2^n-2^{\frac{n}{2}}$ bent components, with equality only if $m:=\frac{n}{2}\in \Z$.
\begin{proposition}\label{proposition:p1} 	For $F: V\rightarrow V$, set
	\begin{equation*}S_F:=\{v\in V: \ F_v \ \text{is not\ bent}\}.\end{equation*}
	Then
	\begin{enumerate}[1]
		\item $($Pott et al. \cite{5}$)$  $|S_F|\geq 2^m$, and  $|S_F|=2^m\ ($hence $n$ is even$)$ if and only if $S_F$ is an $m$-dimensional $\F_2$-subspace of $V$.
		\item $($Hu et al. \cite{hu}$)$  If moreover  $V=\F_{2^n}$ and $|S_F|=2^m$, then $S_F=\F_{2^m}$.
	\end{enumerate}
\end{proposition}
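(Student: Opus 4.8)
The plan is to use Proposition~\ref{proposition:p1} to reduce the statement to the assertion that $S_F=\F_{2^m}$ if and only if $i=0$, and to exploit that $F$ is quadratic. Writing $B_a(x,y)=F_a(x+y)+F_a(x)+F_a(y)$ for the polar form of the component $F_a(x)=\Tr_{2^n/2}(a(x^{2^m+1}+x^{2^i+1}))$, there is a unique linearized polynomial $L_a$ with $B_a(x,y)=\Tr_{2^n/2}(xL_a(y))$, and a direct computation gives
\[ L_a(y)=(a+a^{2^m})y^{2^m}+ay^{2^i}+a^{2^{n-i}}y^{2^{n-i}}. \]
By Lemma~\ref{lemma:diff}, a quadratic $F_a$ is bent exactly when $B_a$ is nondegenerate, i.e. when $L_a$ is a bijection; I set $k_a:=\dim_{\F_2}\ker L_a$, so $a\in S_F$ iff $k_a>0$. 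A point I will use repeatedly is that for $a\in\F_{2^m}$ the norm term disappears: $x^{2^m+1}=N_{2^n/2^m}(x)\in\F_{2^m}$ gives $\Tr_{2^n/2^m}(ax^{2^m+1})=0$, so $F_a(x)=\Tr_{2^n/2}(ax^{2^i+1})$ is a Gold function. Analysing $L_a(y)=0$ via $y^{2^{2i}-1}=a^{1-2^i}$ then shows $k_a\in\{0,2d\}$ for every $a\in\F_{2^m}^\ast$, where $d=\gcd(i,m)$.

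For the (easy) sufficiency, when $i=0$ the monomial $x^{2^i+1}=x^2$ is linear, so $F_a$ differs from $\Tr_{2^n/2}(ax^{2^m+1})$ by an affine function and has the same bentness; as $L_a$ collapses to $(a+a^{2^m})y^{2^m}$, we obtain $k_a=0\iff a\notin\F_{2^m}$, i.e. $S_F=\F_{2^m}$, and $F$ is maximal.

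The core is the necessity: I would show that $i\ge1$ forces $F$ to be non-maximal by a counting (fourth moment) argument that needs no case distinction. Put $T:=\sum_{a\in\F_{2^n}}2^{k_a}$, let $\ell(x)=x^{2^m}+x^{2^i}$, $c(u,v):=u\ell(v)+v\ell(u)$, and $g(x):=x^{2^m-1}+x^{2^i-1}$. Evaluating $\sum_{a\ne0}\sum_{\omega}W_{F_a}(\omega)^4$ in two ways — once as $2^{3n}(T-2^n)$, using that a quadratic component has spectrum $\{0,\pm2^{(n+k_a)/2}\}$, and once combinatorially from the identity $\sum_{j}F_a(x_j)=B_a(x_1,x_2)+B_a(x_1,x_3)+B_a(x_2,x_3)$ on $x_1+x_2+x_3+x_4=0$, summing the characters in $a$ first so as to produce the indicator of $c(\cdot,\cdot)=0$ — yields
\[ T=\sum_{w\in\F_{2^n}}|\ker\phi_w|=2^{n+1}-1+C',\qquad C':=\#\{(z,w)\in(\F_{2^n}^\ast)^2:\ g(z)=g(w)\}, \]
where $\phi_w(z)=c(w,z)$ and $\ker\phi_w\setminus\{0\}=\{z\ne0:g(z)=g(w)\}$ for $w\ne0$. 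Crucially $C'\ge 2^n-1$ just from the diagonal $z=w$. Now assume $F$ maximal, i.e. $S_F=\F_{2^m}$. Then $k_0=n$, each $a\in\F_{2^m}^\ast$ is non-bent and hence Gold with $k_a=2d$, and each $a\notin\F_{2^m}$ has $k_a=0$, so $T=2^n+(2^m-1)2^{2d}+(2^n-2^m)$; comparing with the identity forces $C'=(2^m-1)(2^{2d}-1)$. But for $1\le i\le m-1$ we have $d=\gcd(i,m)\le m/2$, hence $2d\le m<n$ and $(2^m-1)(2^{2d}-1)\le(2^m-1)^2<2^n-1\le C'$, a contradiction. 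Thus $F$ is maximal iff $i=0$.

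I expect the main obstacle to be the clean derivation of $T=2^{n+1}-1+C'$: organizing the fourth-moment double count so that the inner sum over $a$ collapses to the indicator of $c(u,v)=0$ (equivalently $g(u)=g(v)$ for nonzero $u,v$), and keeping careful track of the degenerate terms ($a=0$, and $u$ or $v$ equal to $0$). A secondary technical point is justifying that the Gold components for $a\in\F_{2^m}^\ast$ have radical dimension exactly $2d$ rather than merely positive, since this is what pins down $T$ under the maximality hypothesis. Once both are in place, the inequality $(2^m-1)(2^{2d}-1)<2^n-1\le C'$ closes the argument uniformly in $i$.
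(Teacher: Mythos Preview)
Your proposal does not address Proposition~\ref{proposition:p1}: that proposition is a cited result (Pott et al.\ and Hu et al.) stated without proof in the paper, and you explicitly \emph{invoke} it rather than prove it. What you are actually proving is Theorem~\ref{theorem:bino}, the assertion that $F(x)=x^{2^m+1}+x^{2^i+1}$ has $2^n-2^m$ bent components iff $i=0$. I compare your argument to the paper's proof of that theorem.

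Read as a proof of Theorem~\ref{theorem:bino}, your argument is correct and takes a genuinely different route. The paper proceeds by a three-way case split on $v_2(m)$ versus $v_2(i)$: Lemma~\ref{lemma:pre} rules out $v_2(m)>v_2(i)$ via a solvability obstruction in $\F_{2^n}^\ast$; Lemma~\ref{lemma:bl} exhibits an explicit $a\notin\F_{2^m}$ with $L_a$ non-injective when $v_2(m)=v_2(i)$; and the hardest case $v_2(m)<v_2(i)$ (Lemma~\ref{lemma:imp}) requires the Gauss-sum estimate of Lemma~\ref{lemma:gauss} to produce a solution of an auxiliary trace equation. Your counting identity sidesteps all of this. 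The derivation of $T=\sum_{w}|\ker\phi_w|=2^{n+1}-1+C'$ is sound (and in fact follows more directly by swapping the order in $T=\sum_a|\ker L_a|=\#\{(a,y):L_a(y)=0\}$ and observing that for fixed $y$ the admissible $a$ form $(\Im\phi_y)^\perp$, of size $|\ker\phi_y|$; the fourth-moment detour is unnecessary but harmless). Under the maximality hypothesis, Proposition~\ref{proposition:p1} gives $S_F=\F_{2^m}$, so every $a\in\F_{2^m}^\ast$ has $k_a>0$ and hence $k_a=2d$ by your Gold analysis; this pins down $C'=(2^m-1)(2^{2d}-1)$. The diagonal bound $C'\ge 2^n-1=(2^m-1)(2^m+1)$ forces $2^{2d}\ge 2^m+2$, impossible for $1\le i\le m-1$ since $d=\gcd(i,m)$ is then a proper divisor of $m$ and $2d\le m$. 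What your method buys is uniformity and brevity: no valuation trichotomy, no explicit witnesses, no Gauss sums. What the paper's method buys is constructiveness: for each $i\ge1$ it actually produces an $a\notin\F_{2^m}$ with $F_a$ non-bent, information your counting argument does not yield.
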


For the nonlinearity of $F$ with maximal number of bent components, we have
\begin{lemma}[\cite{anb}] \label{lemma:anb} Suppose $n=2m$. If  $F:V\rightarrow V$ is a plateaued function with maximal number of bent components, then
	\begin{equation} \label{eq:nfbound} \N_F\leq 2^{n-1}-2^{\lfloor\frac{3n}{4}\rfloor}.\end{equation}
\end{lemma}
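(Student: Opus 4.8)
The plan is to deduce \eqref{eq:nfbound} from a single algebraic identity: the two evaluations of the fourth power-moment $\Sigma:=\sum_{a\in V}\sum_{\omega\in V}W_{F_a}(\omega)^4$ of the Walsh spectrum of $F$. First I would fix the bookkeeping. By Proposition~\ref{proposition:p1}, $S_F$ is an $m$-dimensional $\F_2$-subspace of $V$; in particular $0\in S_F$, $|V\setminus S_F|=2^n-2^m$, and for each $a\in S_F\setminus\{0\}$ the component $F_a$ is $k_a$-plateaued with $k_a$ even (since $n=2m$). The facts I need about a $k$-plateaued $f:V\to\F_2$ follow from Parseval $\sum_\omega W_f(\omega)^2=2^{2n}$: exactly $2^{n-k}$ of the values $W_f(\omega)$ are nonzero, hence $\sum_\omega W_f(\omega)^4=2^{3n+k}$ (for bent $f$, $k=0$), and $W_f\not\equiv 0$.

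Next I would evaluate $\Sigma$ in two ways. Expanding $W_{F_a}(\omega)^4$ as a sum over quadruples $(v_1,v_2,v_3,v_4)\in V^4$ and using $\sum_{a\in V}(-1)^{\langle a,\sum_{i=1}^{4}F(v_i)\rangle}=2^n$ or $0$ according as $\sum_i F(v_i)=0$ or not, and likewise $\sum_{\omega\in V}(-1)^{\langle \omega,\sum_{i=1}^{4}v_i\rangle}=2^n$ or $0$, one gets $\Sigma=2^{2n}N_4$, where $N_4:=\#\{(v_1,v_2,v_3,v_4)\in V^4:\ v_1+v_2+v_3+v_4=0,\ F(v_1)+F(v_2)+F(v_3)+F(v_4)=0\}$. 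On the other hand, splitting the outer sum into $a=0$ (where $W_{F_0}(\omega)=2^n$ for $\omega=0$ and $0$ otherwise, contributing $2^{4n}$), $a\in V\setminus S_F$ (each $F_a$ bent, contributing $(2^n-2^m)2^{3n}$ in total), and $a\in S_F\setminus\{0\}$ (contributing $\sum_{a\in S_F\setminus\{0\}}2^{3n+k_a}$), and equating with $2^{2n}N_4$, one obtains after dividing by $2^{3n}$
\[
\sum_{a\in S_F\setminus\{0\}}2^{k_a}=2^{-n}N_4-2^{n+1}+2^m .
\]

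The remaining ingredient is a lower bound on $N_4$, and remarkably the crudest one is enough. The three ``diagonal'' families $\{v_1=v_2,\ v_3=v_4\}$, $\{v_1=v_3,\ v_2=v_4\}$ and $\{v_1=v_4,\ v_2=v_3\}$ all consist of solution quadruples, each has cardinality $2^{2n}$, and their pairwise and triple intersections all equal the full diagonal $\{v_1=v_2=v_3=v_4\}$ of size $2^n$; inclusion--exclusion gives $N_4\ge 3\cdot 2^{2n}-2^{n+1}$ for \emph{every} $F$. Substituting this into the displayed identity and using $n=2m$ yields $\sum_{a\in S_F\setminus\{0\}}2^{k_a}\ge 2^{2m}+2^m-2=(2^m-1)(2^m+2)$; since the sum runs over $2^m-1$ terms, some $a_0\in S_F\setminus\{0\}$ satisfies $2^{k_{a_0}}\ge 2^m+2>2^m$, so $k_{a_0}\ge m+1$. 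Taking parity into account ($k_{a_0}$ even, $n$ even), this forces $k_{a_0}\ge m+2$ when $m$ is even and $k_{a_0}\ge m+1$ when $m$ is odd, i.e. $\tfrac{n+k_{a_0}}{2}\ge\lfloor\tfrac{3n}{4}\rfloor+1$ in both cases. Since $W_{F_{a_0}}\not\equiv 0$, we get $\max_{\omega\in V}|W_{F_{a_0}}(\omega)|=2^{(n+k_{a_0})/2}\ge 2^{\lfloor 3n/4\rfloor+1}$, hence $\N_F\le 2^{n-1}-\tfrac12\cdot 2^{\lfloor 3n/4\rfloor+1}=2^{n-1}-2^{\lfloor 3n/4\rfloor}$, which is \eqref{eq:nfbound}.

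The one genuinely non-obvious step is recognizing that the trivial diagonal bound on $N_4$ — which uses nothing at all about $F$ — already suffices; the temptation is to study the fibres of $F$ or the geometry of $S_F$ more finely, but this turns out to be unnecessary. Everything else is routine: the two evaluations of $\Sigma$ are standard orthogonality computations, the passage from the sum identity to a single large $k_{a_0}$ is just averaging over $2^m-1$ terms, and the conclusion is parity/floor bookkeeping. Note also that the argument uses only the subspace structure of $S_F$ furnished by Proposition~\ref{proposition:p1}, not that $V=\F_{2^n}$.
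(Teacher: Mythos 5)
Your proof is correct. Note, however, that the paper does not prove this lemma at all: it is imported verbatim from Anbar et al.\ \cite{anb} as a cited result, so there is no internal proof to compare against, and what you have written is a complete, self-contained derivation of a statement the authors only quote. All the steps check out: the two evaluations of $\Sigma=\sum_{a,\omega}W_{F_a}(\omega)^4$ are standard orthogonality computations (including the correct handling of the $a=0$ term, which contributes $2^{4n}$); the count $2^{n-k}$ of nonzero Walsh values of a $k$-plateaued function follows from Parseval and gives $\sum_\omega W_f(\omega)^4=2^{3n+k}$; the inclusion--exclusion on the three diagonal families gives $N_4\ge 3\cdot 2^{2n}-2^{n+1}$ (this is exactly the bound one gets from the trivial fact $\delta_F(a,b)\in\{0\}\cup 2\Z_{\ge 1}$ for $a\ne 0$, i.e.\ it uses nothing about $F$, as you say); the resulting identity $\sum_{a\in S_F\setminus\{0\}}2^{k_a}\ge(2^m-1)(2^m+2)$ is arithmetically right; and the averaging plus the parity constraint $k_{a_0}\equiv n\equiv 0\pmod 2$ yields $\tfrac{n+k_{a_0}}{2}\ge\lfloor\tfrac{3n}{4}\rfloor+1$ in both parities of $m$, hence the bound. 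Two small remarks. First, your closing observation slightly overstates what is needed: the argument uses only $0\in S_F$ and $|S_F|=2^m$, not the subspace structure of $S_F$ (although by Proposition~\ref{proposition:p1} the two are equivalent here). Second, it is worth recording that your $N_4$ lower bound is attained exactly when $F$ is almost perfect nonlinear on the nonzero differences, so the inequality $\sum_a 2^{k_a}\ge(2^m-1)(2^m+2)$ is in general far from tight; this is why the crude diagonal bound already suffices for the stated exponent $\lfloor\tfrac{3n}{4}\rfloor$.
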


\section{Nonlinearity of Functions of the form $x^{2^e}h(\Tr_{2^n/2^m}(x))$}
Assume $n=2m$, $h(x)$ is a permutation over $\F_{2^m}$ and $e\geq 0$. Let
\[ \begin{split}
	F(x)=&x^{2^e}h(\Tr_{2^n/2^m}(x)): \F_{2^n}\rightarrow \F_{2^n};\\
	 H(x)=&x^{2^e}h(x):\F_{2^m}\rightarrow\F_{2^m}.
\end{split} \]
It was shown in Zheng et al. \cite{zheng} that $F(x)$ is the only vectorial function of the form $x^{2^e} h(\Tr_{2^n/2^m}(x))$ with $2^n-2^m$ bent components.

We first relate the Walsh transform and the nonlinearity of $F(x)$ to the Walsh transform of $H(x)$:
\begin{theorem}\label{th:wt}
For any $a\in\F_{2^m}^*$,
\begin{equation}\label{eq:wt}
  W_{F_a}(\omega)=\begin{cases}
  	 2^mW_{H_a}(\omega), &\text{if}\ \omega\in\F_{2^m},\\ 0, &\text{if}\ \omega\in\F_{2^n}-\F_{2^m}.
  \end{cases}
\end{equation}
Consequently,
\begin{equation}  \label{eq:nf} \N_F=2^{n-1}-
2^{m-1}\max_{a\in\F_{2^m}^*}\max_{\omega\in\F_{2^m}}|W_{H_a}(\omega)|.\end{equation}
\end{theorem}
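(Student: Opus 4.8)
The plan is to compute $W_{F_a}(\omega)$ for $a \in \F_{2^m}^*$ and $\omega \in \F_{2^n}$ by exploiting the structure $F(x) = x^{2^e} h(\Tr_{2^n/2^m}(x))$, where the second factor only depends on the ``trace part'' of $x$. First I would write $\Tr_{2^n/2^m}(x) = x + x^{2^m}$ and decompose $\F_{2^n}$ over $\F_{2^m}$: pick $\delta \in \F_{2^n} \setminus \F_{2^m}$ with $\Tr_{2^n/2^m}(\delta) = 1$ (or more concretely use a basis so that $x = y + \delta z$ with $y$ ranging over a complement and $z = \Tr_{2^n/2^m}(x) \in \F_{2^m}$, though one must be careful because $\Tr_{2^n/2^m}$ is $\F_{2^m}$-linear and $2$-to-$1$ onto nothing — actually it is surjective onto $\F_{2^m}$ with kernel of size $2^m$). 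Concretely, fix $u \in \F_{2^n}$ with $\Tr_{2^n/2^m}(u) = 1$; then every $x$ writes uniquely as $x = t + cu$ with $c \in \F_{2^m}$ and $t \in \ker \Tr_{2^n/2^m}$, and $\Tr_{2^n/2^m}(x) = c$. The component function is $F_a(x) = \Tr_{2^m/2}\!\big(a\,\Tr_{2^n/2^m}(x^{2^e} h(\Tr_{2^n/2^m}(x)))\big)$; since $h(\Tr_{2^n/2^m}(x)) \in \F_{2^m}$, it can be pulled out of the outer trace, giving $F_a(x) = \Tr_{2^m/2}\!\big(a\, h(c)\, \Tr_{2^n/2^m}(x^{2^e})\big)$ where $c = \Tr_{2^n/2^m}(x)$.

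Next I would split the Walsh sum over the two coordinates. Write $\langle \omega, x\rangle = \Tr_{2^n/2}(\omega x)$. Summing over $x = t + cu$:
\begin{equation*}
W_{F_a}(\omega) = \sum_{c \in \F_{2^m}} \sum_{t \in \ker \Tr_{2^n/2^m}} (-1)^{\Tr_{2^m/2}(a\,h(c)\,\Tr_{2^n/2^m}((t+cu)^{2^e})) + \Tr_{2^n/2}(\omega(t+cu))}.
\end{equation*}
The key observation is that the inner sum over $t$ is a character sum of an $\F_2$-\emph{affine} function of $t$ (the term $\Tr_{2^n/2^m}((t+cu)^{2^e})$ is $\F_2$-linear in $t$ up to an additive constant depending on $c$, because squaring is additive in characteristic $2$, so $x^{2^e}$ is $\F_2$-linear). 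Hence the inner sum is either $0$ or $\pm 2^m$ (the size of $\ker \Tr_{2^n/2^m}$), and it is nonzero exactly when the relevant $\F_2$-linear functional of $t$ vanishes identically on $\ker \Tr_{2^n/2^m}$, i.e. when $\omega$ lies in a specific coset determined by $a h(c)$. I expect this vanishing condition to force $\omega \in \F_{2^m}$: the functional on $t$ coming from $\Tr_{2^n/2^m}(t^{2^e})$ together with $\Tr_{2^n/2}(\omega t)$ must be trivial on the kernel, and because $a h(c)$ ranges over all of $\F_{2^m}^*$ as $c$ varies ($h$ being a permutation), consistency across $c$ pins down $\omega$ to $\F_{2^m}$. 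When $\omega \notin \F_{2^m}$ I would argue the inner sum vanishes for every $c$, giving $W_{F_a}(\omega) = 0$.

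When $\omega \in \F_{2^m}$, the inner sum over $t$ contributes the constant factor $2^m$ (after checking the linear functional on the kernel vanishes for the right choice of the constant-in-$t$ phase), and the remaining sum over $c \in \F_{2^m}$ collapses to $W_{H_a}(\omega)$: indeed for $x$ restricted appropriately, $\Tr_{2^n/2^m}(x^{2^e}) = $ (something reducing to) $c^{2^e} \cdot (\text{constant})$ and $\Tr_{2^n/2}(\omega c u) = \Tr_{2^m/2}(\omega' c)$ for a suitable $\omega' \in \F_{2^m}$ linearly equivalent to $\omega$, so that the $c$-sum is $\sum_{c} (-1)^{\Tr_{2^m/2}(a c^{2^e} h(c)) + \Tr_{2^m/2}(\omega c)} = \sum_c (-1)^{(H_a)(c) + \langle \omega, c\rangle} = W_{H_a}(\omega)$, possibly after a harmless relabeling $\omega \mapsto \omega^{2^{-e}}$ which does not affect the maximum of $|W_{H_a}|$. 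This gives \eqref{eq:wt}. Finally, \eqref{eq:nf} follows by plugging \eqref{eq:wt} into the definition $\N_F = 2^{n-1} - \frac12 \max_{a \in \F_{2^n}^*} \max_{\omega \in \F_{2^n}} |W_{F_a}(\omega)|$, noting that components $F_a$ with $a \notin \F_{2^m}$ are bent (so contribute $|W_{F_a}(\omega)| = 2^m$, which is the minimum possible and hence irrelevant to the max over a function with a non-bent component), so the outer max is achieved at some $a \in \F_{2^m}^*$ and there equals $2^m \max_\omega |W_{H_a}(\omega)|$, yielding $\N_F = 2^{n-1} - 2^{m-1} \max_{a \in \F_{2^m}^*} \max_{\omega \in \F_{2^m}} |W_{H_a}(\omega)|$.

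The main obstacle I anticipate is bookkeeping the linear-algebra over $\F_2$ cleanly: precisely identifying the coset condition on $\omega$ that makes the inner $t$-sum survive, and checking that the ``constant-in-$t$'' phase (which depends on $c$ and on the cross terms from expanding $(t+cu)^{2^e}$) is exactly absorbed into the definition of $H_a$ rather than producing an extra obstruction. A secondary subtlety is the relabeling $\omega \leftrightarrow \omega^{2^{-e}}$ (or an $\F_2$-linear isomorphism of $\F_{2^m}$), which I would handle by remarking that $W_{H_a}$ over all $\omega$ is a set invariant under this substitution, so the stated maxima are unaffected; alternatively one can absorb the map $x \mapsto x^{2^e}$ into the inner product from the start since $x\mapsto x^{2^e}$ is an $\F_2$-linear automorphism of $\F_{2^m}$.
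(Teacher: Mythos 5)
Your proposal is correct and follows essentially the same route as the paper: decompose $\F_{2^n}$ as $\F_{2^m}$ plus a transversal, evaluate the inner character sum over $\ker\Tr_{2^n/2^m}=\F_{2^m}$ to get $2^m$ times the indicator of $\omega\in\F_{2^m}$, and identify the remaining sum with $W_{H_a}(\omega)$. The obstacle you flag dissolves at once upon noting that $ah(c)\in\F_{2^m}$ and $\Tr_{2^n/2^m}(t^{2^e})=0$ for every $t\in\F_{2^m}$, so the $t$-dependent phase coming from $F_a$ vanishes identically (making the survival condition $\omega\in\F_{2^m}^{\perp}=\F_{2^m}$, independent of $c$), and since $\Tr_{2^n/2^m}((cu)^{2^e})=c^{2^e}$ and $\Tr_{2^n/2}(\omega cu)=\Tr_{2^m/2}(\omega c)$ the outer sum is exactly $W_{H_a}(\omega)$ with no relabeling $\omega\mapsto\omega^{2^{-e}}$ needed.
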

\begin{proof} Let  $\bU$ be any set of representatives of cosets of $\F_{2^m}$ in $\F_{2^n}$, then every $x\in\F_{2^n}$ is uniquely written as  $x=u+y$ with $u\in\bU$ and $y\in\F_{2^m}$. Therefore
  \begin{align*}
    W_{F_a}(\omega) & =\sum_{x\in\F_{2^n}}(-1)^{\Tr_{2^n/2}(ax^{2^e}h(\Tr_{2^n/2^m}(x)))+\Tr_{2^n/2}(\omega x)} \\
     &=\sum_{u\in\bU}\sum_{y\in\F_{2^m}}(-1)^{\Tr_{2^n/2}(a(y+u)^{2^e}h(\Tr_{2^n/2^m}(u)))+\Tr_{2^n/2}(\omega (y+u))}  \\
     &= \sum_{u\in\bU}(-1)^{\Tr_{2^n/2}(au^{2^e}h(\Tr_{2^n/2^m}(u))+\omega u)}\sum_{y\in\F_{2^m}}(-1)^{\Tr_{2^n/2}(ay^{2^e}h(\Tr_{2^n/2^m}(u))+\omega y)}.
  \end{align*}
  Note that
\[ \sum_{y\in\F_{2^m}}(-1)^{\Tr_{2^n/2}(ay^{2^e}h(\Tr_{2^n/2^m}(u))+\omega y)}=\sum_{y\in\F_{2^m}}(-1)^{\Tr_{2^n/2}(\omega y)}
	=\begin{cases}
		0, \ & \omega\notin \F_{2^m},\\ 2^m,\ & \omega\in\F_{2^m}.
	\end{cases}\]
Then
\begin{align*}
  W_{F_a}(\omega)&=2^m\sum_{u\in\bU}(-1)^{\Tr_{2^n/2}(au^{2^e}h(\Tr_{2^n/2^m}(u))+\omega u)} \\
  &=2^m\sum_{u\in\bU}(-1)^{\Tr_{2^m/2}(a(u+u^{2^m})^{2^e}h(\Tr_{2^n/2^m}(u))+\omega (u+u^{2^m}))}.
\end{align*}
Note that $u\mapsto u+u^{2^m}, \bU\rightarrow \F_{2^m}$ is a bijection, substituting $u+u^{2^m}$ by $z$, then
\begin{equation*}
  W_{F_a}(\omega) =2^m\sum_{z\in\F_{2^m}}(-1)^{\Tr_{2^m/2}(az^{2^e}
  h(z)+\omega z)}=2^mW_{H_a}(\omega).
\end{equation*}
Thus we obtain Eq.~\eqref{eq:wt}. As a consequence, we have
\[\N_F=2^{n-1}-\frac{1}{2}\max_{a\in\F_{2^n}^*}\max_{\omega\in\F_{2^n}}|W_{F_a}(\omega)|=2^{n-1}-
2^{m-1}\max_{a\in\F_{2^m}^*}\max_{\omega\in\F_{2^m}}|W_{H_a}(\omega)|. \qedhere \]
\end{proof}

\subsection{Nonlinearity of monomial permutations} In this subsection, we assume
 \[ h(x)= x^{u-2^e},\quad H(x)= x^u\]
 where
  \[ \gcd(u,2^m-1)=\gcd(u-2^e,2^m-1)=1.\]
Then $h(x)$ and $H(x)$ are both permutations over $\F_{2^m}$. In this case,
\[W_{H_a}(\omega)=W_{H_1}(a^{-\frac{1}{u}}\omega).\]
Then \[\max_{a\in\F_{2^m}^*}\max_{\omega\in\F_{2^m}}|W_{H_a}(\omega)|=
\max_{a\in\F_{2^m}^*}\max_{\omega\in\F_{2^m}}
|W_{H_1}(a^{-\frac{1}{u}}\omega)|=\max_{\omega\in\F_{2^m}}
|W_{H_1}(\omega)|=:\L(u). \]
By Eq.~\eqref{eq:nf}, then
 \begin{equation}\label{eq:3} \N_F= 2^{n-1}-2^{m-1} \L(u). \end{equation}

 Note that if $u\equiv2^k\pmod{2^m-1}$, $k\geq0$, $H(x)$ is a linear function and $\N_F=0$. Hence, we assume $u\not\equiv2^k\pmod{2^m-1}$ in this subsection.
 There is a conjecture about $\L(u)$ and $\N_F$ (see \cite{dobb})
\begin{conjecture} $\L(u)\geq 2^{[\frac{n}{4}+1]}$, equivalently, Eq.\eqref{eq:nfbound} holds, i.e.
	\[
 \N_F\leq2^{n-1}-2^{\lfloor\frac{3n}{4}\rfloor}. \]
\end{conjecture}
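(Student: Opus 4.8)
The plan is to reduce the whole statement to a single lower bound on the maximal Walsh coefficient of the monomial $H(x)=x^u$ on $\F_{2^m}$. With $\L(u)=\max_{\omega\in\F_{2^m}}|W_{H_1}(\omega)|$ as defined in the text, Eq.~\eqref{eq:3} gives $\N_F=2^{n-1}-2^{m-1}\L(u)$, so for $n=2m$ the asserted inequality \eqref{eq:nfbound} is \emph{equivalent} to
\[\L(u)\;\geq\;2^{\lfloor m/2\rfloor+1}.\]
In full generality this is the open conjecture of \cite{dobb}, since for an arbitrary permutation exponent there is no control on the spectrum of $x^u$. My proposal is therefore to prove it in the two regimes where the spectrum is structured — the three-valued case and the Niho case — which yields the claimed partial answer (A.1).

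\emph{Three-valued case.} Suppose $H_1=\Tr_{2^m/2}(x^u)$ has Walsh spectrum contained in $\{0,\pm W\}$. Then $H_1$ is plateaued and $\L(u)=W=2^{(m+s)/2}$ for an integer $s\equiv m\pmod 2$. First I would apply Parseval's identity $\sum_{\omega}W_{H_1}(\omega)^2=2^{2m}$: if $N$ coefficients equal $\pm W$ then $NW^2=2^{2m}$, so $N=2^{m-s}$, and $N\leq 2^m$ forces $s\geq 0$. When $m$ is odd, $s$ is odd, so $s\geq 0$ already gives $s\geq 1$ and hence $W\geq 2^{(m+1)/2}=2^{\lfloor m/2\rfloor+1}$. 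When $m$ is even the only obstruction left is $s=0$, i.e. $H_1$ bent with $W=2^{m/2}$; I would rule this out using the permutation hypothesis. Since $\gcd(u,2^m-1)=1$, the map $x\mapsto x^u$ permutes $\F_{2^m}$ and so cannot be vectorial bent, as Nyberg's bound forbids vectorial bent maps $\F_{2^m}\to\F_{2^m}$. Because $W_{H_a}(\omega)=W_{H_1}(a^{-1/u}\omega)$ makes $\L(u)=\max_{a,\omega}|W_{H_a}(\omega)|$, the equality $\L(u)=2^{m/2}$ would force every component $H_a$ to be bent, a contradiction. Hence $s\neq 0$, so $s\geq 2$ and $W\geq 2^{m/2+1}=2^{\lfloor m/2\rfloor+1}$, completing this case.

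\emph{Niho case.} Here $m=2k$ is even and $u$ is a Niho exponent, $u\equiv 2^j\pmod{2^k-1}$. For such exponents the Walsh spectrum of $\Tr_{2^m/2}(x^u)$ has been determined explicitly in the literature and is few-valued, with largest coefficient magnitude at least $2^{k+1}=2^{m/2+1}=2^{\lfloor m/2\rfloor+1}$. I would read the bound directly off this known spectrum; the only thing to verify is that the degenerate situation $\L(u)=2^{m/2}$ does not occur, which is once more guaranteed by $\gcd(u,2^m-1)=1$ exactly as in the previous case.

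\emph{Main obstacle.} The genuine difficulty lies outside these two regimes. For a general permutation exponent the bare permutation argument only yields $\L(u)>2^{m/2}$, hence $\L(u)\geq 2^{m/2}+2$ as Walsh coefficients are even integers — far short of the target $2^{m/2+1}$ when $m$ is large and even. The two cases above work precisely because the spectrum is pinned to three values, so that Parseval forces the amplitude up to the next admissible power of two, or is explicitly tabulated. A uniform proof would require a general lower bound on $\max_\omega|W_{x^u}(\omega)|$ for permutation monomials — for instance through $2$-adic divisibility of Walsh coefficients or weight bounds for the associated cyclic code — and obtaining the sharp exponent $\lfloor m/2\rfloor+1$ by such means is exactly what remains open. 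Accordingly this proposal proves \eqref{eq:nfbound} in the Niho and three-valued cases and leaves the general conjecture unresolved.
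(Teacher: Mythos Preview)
You correctly recognize that the statement is a conjecture and that the paper only settles the Niho and three-valued subcases; your proposal targets exactly those.

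Your three-valued argument is sound and takes a genuinely different, more elementary route than the paper. The paper simply observes that in this case $F$ is plateaued and then invokes Lemma~\ref{lemma:anb} (the Anbar et al.\ bound for plateaued functions with maximal bent components); its Theorem~\ref{5} then refines this via the Aubry--Katz--Langevin root-counting formula. You instead stay entirely on $\F_{2^m}$: Parseval forces $s\ge 0$, parity forces $s\ge 1$ when $m$ is odd, and for even $m$ you use Nyberg's bound to exclude $s=0$ (vectorial bentness of a permutation). This avoids the black box of Lemma~\ref{lemma:anb} and is self-contained.

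Your Niho case, however, is not a proof as written. The assertion that ``the Walsh spectrum\dots has been determined explicitly in the literature and is few-valued'' is not true for general Niho exponents: already the paper's own examples show four-valued spectra, and for generic $s$ the spectrum is not tabulated. What \emph{is} true --- and what the paper derives explicitly as $W_{H_1}(\omega)=(N_\omega-1)\,2^{t}$ with $m=2t$ --- is that every Walsh value is a multiple of $2^{t}=2^{m/2}$. Once you have that divisibility, your Nyberg argument finishes the job cleanly: $\L(u)=2^{m/2}$ would force all coefficients into $\{0,\pm 2^{m/2}\}$, hence bentness of every component $H_a$, contradicting $\gcd(u,2^m-1)=1$; therefore $\L(u)\ge 2\cdot 2^{m/2}=2^{m/2+1}$. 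This is in fact a shorter argument than the paper's (which restricts to $\omega\in\F_{2^t}$, exploits the parity of $N_\omega$ there, and uses the identity $\sum_{\omega\in\F_{2^t}}W_{H_1}(\omega)=2^m$). So the idea you gesture at --- divisibility plus Nyberg --- is correct and even cleaner; you just need to state and justify the divisibility by $2^{m/2}$ rather than appealing to an ``explicit spectrum'' that does not exist in general.
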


\vskip 0.3cm
\noindent
\textbf{(I) The case of Niho exponents.}
\begin{theorem}\label{4} Assume $n=4t>1$ and  $u=s(2^t-1)+ 2^{e+1}$ such that $\gcd(2^{e}-s,2^t+1)=1$ and $0<s<2^t-1$. Then $\N_F\leq 2^{n-1}-2^{\frac{3n}{4}}$, i.e. the Conjecture is true in this case.
\end{theorem}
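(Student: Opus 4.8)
The plan is to push everything through Theorem~\ref{th:wt}: by Eq.~\eqref{eq:3} we have $\N_F=2^{n-1}-2^{m-1}\L(u)$ with $m=2t$ and $\L(u)=\max_{\omega\in\F_{2^{2t}}}|W_{H_1}(\omega)|$, $H_1(x)=\Tr_{2^{2t}/2}(x^u)$, so it suffices to prove $\L(u)\geq 2^{t+1}$; this at once gives $\N_F\leq 2^{n-1}-2^{m-1}\cdot 2^{t+1}=2^{n-1}-2^{3t}=2^{n-1}-2^{\lfloor 3n/4\rfloor}$ since $4\mid n$.

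The first step is to extract a $2^t$-divisibility of the Walsh spectrum of $x^u$ from the Niho structure. Write $\F_{2^{2t}}^{*}=\F_{2^t}^{*}\cdot U$ with $U=\{z\in\F_{2^{2t}}^{*}:z^{2^t+1}=1\}$, so that each $x\in\F_{2^{2t}}^{*}$ is uniquely $x=yz$, $y\in\F_{2^t}^{*}$, $z\in U$. Since $u=s(2^t-1)+2^{e+1}$ while $y^{2^t-1}=1$ and $z^{2^t-1}=z^{-2}$, we get $x^u=y^{2^{e+1}}z^{2^{e+1}-2s}$. Substituting this into $W_{H_1}(\omega)=\sum_{x\in\F_{2^{2t}}}(-1)^{\Tr_{2^{2t}/2}(x^u+\omega x)}$, peeling off $x=0$, applying the tower $\Tr_{2^{2t}/2}=\Tr_{2^t/2}\circ\Tr_{2^{2t}/2^t}$ to pull the $\F_{2^t}$-factors $y^{2^{e+1}}$ and $y$ out of the inner trace, moving $y^{2^{e+1}}$ across the appropriate power of Frobenius, and summing the resulting geometric sum over $y\in\F_{2^t}^{*}$, one obtains
\[ W_{H_1}(\omega)=2^t\bigl(N(\omega)-1\bigr),\qquad N(\omega):=\#\Bigl\{z\in U:\ \Tr_{2^{2t}/2^t}(\omega z)=\bigl(z^{2(2^e-s)}+z^{-2(2^e-s)}\bigr)^{2^{-(e+1)}}\Bigr\}. \]
In particular $2^t\mid W_{H_1}(\omega)$ for every $\omega$, hence $2^t\mid\L(u)$.

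The second step rules out $\L(u)=2^t$. Suppose $\L(u)\leq 2^t$, i.e.\ $|W_{H_1}(\omega)|\leq 2^t$ for all $\omega\in\F_{2^{2t}}$. By Parseval $\sum_{\omega\in\F_{2^{2t}}}W_{H_1}(\omega)^2=2^{4t}$, and as there are $2^{2t}$ terms each at most $2^{2t}$, equality must hold termwise: $|W_{H_1}(\omega)|=2^t$ for all $\omega$, i.e.\ $\Tr_{2^{2t}/2}(x^u)$ is bent on $\F_{2^{2t}}$. But a bent function $f$ on $\F_{2^{2t}}$ has $W_f(0)=\pm 2^t\neq0$, i.e.\ is not balanced, whereas under the standing hypothesis $\gcd(u,2^{2t}-1)=1$ the map $x\mapsto x^u$ permutes $\F_{2^{2t}}$, so $W_{H_1}(0)=\sum_{x}(-1)^{\Tr_{2^{2t}/2}(x^u)}=\sum_{x}(-1)^{\Tr_{2^{2t}/2}(x)}=0$ — a contradiction. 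Hence $\L(u)>2^t$, and since $2^t\mid\L(u)$ we conclude $\L(u)\geq 2^{t+1}$, which completes the proof.

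The only delicate point is the bookkeeping in the first step: one must run the Niho decomposition and the nested-trace manipulation carefully enough to see that $W_{H_1}(\omega)$ is always a multiple of $2^t$ (equivalently, that $W_{H_1}(\omega)/2^t+1$ counts points of $U$). Everything after that is soft — the potentially fatal value $\L(u)=2^t$ would force $\Tr_{2^{2t}/2}(x^u)$ to be bent, which is impossible because the trace of a permutation is balanced. (As a consistency check one can verify directly that $N(0)=\#\{z\in U:z^{4(2^e-s)}=1\}=\gcd\bigl(2^e-s,\,2^t+1\bigr)=1$ by hypothesis, so indeed $W_{H_1}(0)=0$.)
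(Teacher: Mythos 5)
Your proof is correct, and it is worth recording how it relates to the paper's. The opening move is the same in substance: you decompose $\F_{2^{2t}}^*=\F_{2^t}^*\cdot U$ with $U$ the group of $(2^t+1)$-st roots of unity and arrive at $W_{H_1}(\omega)=2^t(N(\omega)-1)$ with $N(\omega)$ a point count on $U$; the paper does the equivalent coset decomposition (after first normalizing $u$ to $s(2^t-1)+1$ via $\L(u)=\L(2^{m-e-1}u)$, a reduction you avoid by carrying the exponent $2^{e+1}$ through the Frobenius twist). The endgame, however, is genuinely different. The paper sharpens the divisibility to $2^{t+1}\mid W_{H_1}(\omega)$ for $\omega\in\F_{2^t}$ (showing $N_\omega$ is odd there by pairing $y$ with $y^{2^t}$) and then extracts a value of modulus at least $2^{t+1}$ from the subfield sum $\sum_{\omega\in\F_{2^t}}W_{H_1}(\omega)=2^{2t}$. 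You keep only the coarser fact $2^t\mid W_{H_1}(\omega)$ for all $\omega$ and close softly: $\L(u)\le 2^t$ would force, by Parseval, $|W_{H_1}(\omega)|=2^t$ everywhere, i.e.\ bentness of $\Tr_{2^{2t}/2}(x^u)$, which is impossible since the standing hypothesis $\gcd(u,2^m-1)=1$ makes $x^u$ a permutation and hence $W_{H_1}(0)=0$ (your independent check $N(0)=\gcd(2^e-s,2^t+1)=1$ confirms this); then $\L(u)>2^t$ together with $2^t\mid\L(u)$ gives $\L(u)\ge 2^{t+1}$. Your route is shorter and sidesteps the most delicate part of the paper's argument (the passage from ``each nonzero $W_{H_1}(\omega)$, $\omega\in\F_{2^t}$, is divisible by $2^{t+1}$'' to ``$A$ is a power of $2$''), at the cost of losing the finer spectral information the paper's method incidentally yields (e.g.\ that the only negative value is $-2^t$). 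Both arguments prove exactly the stated bound.
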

\begin{proof}  By the identity
	 \[ \L(u)=\L(2^{m-e-1}u), \]
replacing $(u,s,e)$ by $(2^{m-e-1}u\pmod{2^m-1}, 2^{m-e-1}s\pmod{2^t+1}, m-1)$, we may assume $u=s(2^t-1)+1$ such that $\gcd(1-2s,2^t+1)=1$.
	
Suppose $\F_{2^m}^*=\langle\beta\rangle$ and $\C=\langle\beta^{2^t-1}\rangle$. Then
	\begin{align*}
		W_{H_1}(\omega)=&\sum_{x\in\F_{2^m}}(-1)^{\Tr_{2^{m}/2}(x^{u}+\omega x)}\\
		=&1+\sum_{i=0}^{2^t}\sum_{z\in\F_{2^t}^*}(-1)^{\Tr_{2^{m}/2}((\beta^{(2^t-1)i}z)^{s(2^t-1)+1}+\omega \beta^{(2^t-1)i}z)}\\
		=&1+\sum_{y\in \C}\sum_{z\in\F_{2^t}^*}(-1)^{\Tr_{2^{t}/2}(\Tr_{2^m/2^t}(y^{1-2s}+\omega y)z)}.
	\end{align*}
	Set $N_{\omega}(y)=|\{y\in \C:\ 1+\omega y^{2s}+y^{2(2s-1)}+\omega^{2^t} y^{2s-2}=0\}|$. Then
	\begin{equation}\label{nh1}W_{H_1}(\omega)=(N_{\omega}(y)-1)2^t\geq -2^t. \end{equation}
	
	Obviously, $W_{H_1}(\omega)$ takes value $0$.
  Note that
  \[(\sum_{\omega\in\F_{2^m}}W_{H_1}(\omega))^2=
  \sum_{\omega\in\F_{2^m}}W^2_{H_1}(\omega).\]
  There must exist $\omega_1\in\F_{2^m}^*$ and  $\omega_2\in\F_{2^m}^*$ such that $W_{H_1}(\omega_1)>0$ and $W_{H_1}(\omega_2)<0$ respectively. By Eq.~\eqref{nh1}, we have $W_{H_1}(\omega_2)=-2^{t}$.

Suppose for some $\omega\in\F_{2^m}^*$, $W_{H_1}(\omega)=A$ with $2^t\mid A$ and $A\notin\{0,-2^t\}$. Set
\[L_\omega(y):=1+\omega y^{2s}+y^{2(2s-1)}+\omega^{2^t} y^{2s-2}.\]

 Note that for $\omega\in\F_{2^t}$, $L_\omega(1)=0$ and $L_\omega(y)=0$ if and only if $L_{\omega}(y^{2^t})=0$. Thus $N_{\omega}(y)=1+2\Delta$ for some $\Delta$. By Eq.~\eqref{nh1}, we have $A=2^{t+1}\Delta$. Next we prove $\Delta$ is $2$-power. Note that
\[\sum_{\omega\in\F_{2^t}}W_{H_1}(\omega)=\sum_{x\in\F_{2^m}}
(-1)^{\Tr_{2^m/2}(x^u)}\sum_{\omega\in\F_{2^t}}(-1)^{\Tr_{2^m/2}(\omega x)}=2^m.\]

On the left of the above sum, for $\omega\in\F_{2^t}$, if $W_{H_1}(\omega)\neq0$, then $2^{t+1}\mid W_{H_1}(\omega)$. Therefore, there must exist an integer $L$ such that $2^m=LA$, this means $A=2^{l}$, $l\geq t+1$. Then $\L(u)\geq 2^{t+1}$, and by Eq.~\eqref{eq:3}, we have \[\N_F\leq 2^{n-1}-2^{m+t}=2^{n-1}-2^{\frac{3n}{4}}.\qedhere\]
\end{proof}
\begin{remark} Helleseth \cite{Helle76} showed that $W_{H_1}(\omega)$ takes at least three values for $\omega\in\F_{2^m}^*$. The above proof is almost identical to that of \cite[Theorem~2]{cha04}.
\end{remark}
\begin{example} Suppose $4\mid t$. Take $(s,e)=(1, t-1)$. Then $u=2^{t+1}-1$ and $h(x)= x^{2^t+2^{t-1}-1}$.   In this case $W_{H_1}(\omega)$ takes values in $\{0, \pm 2^t, 2^{t+1}\}$ (see \cite[Theorem~2.2]{Helle76}) and $\N_F= 2^{n-1}-2^{\frac{3n}{4}}$, in this case $F(x)$ is not plateau.
\end{example}

\begin{example} Suppose $t=2t_1$ and $4\nmid (t_1-1)$. Take $(s,e)=(2^{t_1}-1, t_1)$. Then $u=2^{3t_1}-2^{2t_1}+2^{t_1}+1$ and $h(x)= x^{2^{3t_1}- 2^{2t_1}+1}$. In this case,  $W_{H_1}(\omega)$ takes values in $\{0, \pm 2^t, 2^{3t_1}\}$ (see \cite[Theorem~2.3]{Helle76}) and $\N_F= 2^{n-1}-2^{\frac{7n}{8}-1}$, in this case $F(x)$ is also not plateau.
\end{example}

\vskip 0.3cm
\noindent
\textbf{(II) The case of three-valued Walsh transform.}

Assume $\omega\in \F_{2^m}\mapsto W_{H_1}(\omega)$ is a three-value function.
The following proposition in \cite{au15} reveals more information about the explicit values of $W_{H_1}(\omega)$.

\begin{proposition}\label{three}
  If $W_{H_1}(\omega)$ is a three-value function with values $0$ and $\pm A$, then $|A|=2^k$ for some integer $k$. Furthermore, let $R$ denote the set of roots of $(x+1)^u-x^u-1$ in $\F_{2^m}$, then $2^{\frac{m}{2}}<\sqrt{2^mR}=|A|<2^m$.
\end{proposition}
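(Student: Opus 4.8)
The plan is to pin down $A$ via the power moments of the Walsh transform of the component $H_1(x)=\Tr_{2^m/2}(x^u)$. Write $n_0,n_+,n_-$ for the number of $\omega\in\F_{2^m}$ at which $W_{H_1}(\omega)$ equals $0,A,-A$ respectively, where $A>0$ is a \emph{positive integer} since all Walsh values are integral. The first two moments are immediate. Because $H_1(0)=0$, the first power sum is
\[ \sum_{\omega}W_{H_1}(\omega)=\sum_{x}(-1)^{\Tr_{2^m/2}(x^u)}\sum_\omega(-1)^{\Tr_{2^m/2}(\omega x)}=2^m, \]
so that $A(n_+-n_-)=2^m$, while Parseval gives $\sum_\omega W_{H_1}(\omega)^2=2^{2m}$, i.e. $A^2(n_++n_-)=2^{2m}$. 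From the first relation, $A$ and $n_+-n_-$ are positive integers with product the prime power $2^m$, so $A$ divides $2^m$ and hence $A=2^k$; this already settles the first assertion.

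For the identity $|A|=\sqrt{2^mR}$ (writing $R$ also for the number of roots) I would compute the third moment in two ways. Using the three-valued structure together with $A(n_+-n_-)=2^m$,
\[ \sum_\omega W_{H_1}(\omega)^3=A^3(n_+-n_-)=A^2\cdot A(n_+-n_-)=A^2\,2^m. \]
On the other hand, expanding the cube and summing over $\omega$ forces $x+y+z=0$, giving
\[ \sum_\omega W_{H_1}(\omega)^3=2^m\!\!\sum_{x,y\in\F_{2^m}}\!\!(-1)^{\Tr_{2^m/2}(x^u+y^u+(x+y)^u)}=:2^m S_3, \]
so it remains to prove $S_3=2^m R$.

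To evaluate $S_3$ I would separate the $x=0$ contribution (which is $2^m$) and, for $x\neq0$, substitute $y=xt$, using $x^u+y^u+(x+y)^u=x^u\big(1+t^u+(1+t)^u\big)$. Since $x\mapsto x^u$ permutes $\F_{2^m}$, the inner sum $\sum_{x\neq0}(-1)^{\Tr_{2^m/2}(x^u c(t))}$ equals $2^m-1$ when $c(t):=1+t^u+(1+t)^u=0$ and equals $-1$ otherwise. In characteristic $2$, $c(t)=(t+1)^u+t^u+1$ is precisely the defining polynomial $(x+1)^u-x^u-1$ evaluated at $t$, so the number of $t$ with $c(t)=0$ is exactly $R$. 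Collecting terms yields $S_3=2^m+R(2^m-1)-(2^m-R)=2^m R$, whence $A^2=2^mR$ and $|A|=\sqrt{2^mR}$.

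Finally, for the bounds I would argue directly on $R$. Both $t=0$ and $t=1$ satisfy $c(t)=0$, so $R\geq2$, which forces $A=\sqrt{2^mR}>2^{m/2}$; and $A=2^m$ is impossible, since $A(n_+-n_-)=2^m$ together with Parseval would then give $n_++n_-=n_+-n_-=1$, i.e. $n_-=0$, so $-A$ would not occur, contradicting that the spectrum is genuinely three-valued. Hence $2^{m/2}<|A|<2^m$. The main obstacle is the evaluation of $S_3$: one must correctly isolate the $x=0$ term, invoke the permutation property of $x\mapsto x^u$ to reduce the inner sum to a trivial additive character sum, and recognize $c(t)$ as the characteristic-$2$ form of the polynomial defining $R$; everything else is bookkeeping with the first two moments.
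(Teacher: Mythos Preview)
The paper does not supply its own proof of this proposition; it is quoted from Aubry--Katz--Langevin \cite{au15}. Your argument via the first three power moments of $W_{H_1}$ is correct and is precisely the classical route used in that reference: the first moment and Parseval force $A\mid 2^m$, and the cubic moment, evaluated through the substitution $y=xt$ and the permutation property of $x\mapsto x^u$, identifies $A^2$ with $2^mR$. Your handling of the strict inequalities (via $R\ge 2$ and the impossibility of $n_-=0$ in a genuinely three-valued spectrum) is also fine. So your proposal matches the standard proof behind the cited result.
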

Set $h_u(x):=(x+1)^u-x^u-1$. Then $h_u(0)=h_u(1)=0$. If $2\mid m$ and $3\nmid u$, then every element of $\F_{2^2}$ is a root of $h_u(x)$ in $\F_{2^m}$.
By Proposition~\ref{three}, we have
\begin{theorem}\label{5}
 With the same notations as Proposition~\ref{three}, then
\[\N_F=2^{n-1}-2^{m-1}\sqrt{2^mR}\leq\begin{cases}
   2^{n-1}-2^{m-1}\sqrt{2^{m+2}}, & \mbox{if } 2\mid m\ \text{and}\ 3\nmid u, \\
   2^{n-1}-2^{m-1}\sqrt{2^{m+1}}, & \mbox{if } 2\nmid m\ \text{or}\ 3\mid u.
 \end{cases}\]
\end{theorem}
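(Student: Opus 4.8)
The plan is to read off the equality from Theorem~\ref{th:wt} in its monomial form and then feed in the spectral information of Proposition~\ref{three}. First, since $H(x)=x^u$ is a monomial permutation of $\F_{2^m}$, Eq.~\eqref{eq:3} gives $\N_F=2^{n-1}-2^{m-1}\L(u)$ with $\L(u)=\max_{\omega\in\F_{2^m}}|W_{H_1}(\omega)|$. Under the standing hypothesis that $\omega\mapsto W_{H_1}(\omega)$ is three-valued with values $0$ and $\pm A$, we have $\L(u)=|A|$, and Proposition~\ref{three} tells us $|A|=\sqrt{2^m R}$, where $R$ denotes the number of roots in $\F_{2^m}$ of $h_u(x)=(x+1)^u-x^u-1$. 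Substituting this into the expression for $\N_F$ yields the claimed equality $\N_F=2^{n-1}-2^{m-1}\sqrt{2^m R}$ with no further work.

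For the two inequalities it suffices to bound $R$ from below, since $\N_F$ is a decreasing function of $R$. The plan is: always $h_u(0)=h_u(1)=0$, so $R\geq 2$, which already gives the generic estimate $\N_F\leq 2^{n-1}-2^{m-1}\sqrt{2^{m+1}}$ and hence the second case. When in addition $2\mid m$ and $3\nmid u$, one has $\F_{2^2}\subseteq\F_{2^m}$, and I would show every element of $\F_{2^2}$ is a root of $h_u$: the elements of $\F_2$ are handled above, and for a primitive cube root of unity $\omega$ one uses $\omega+1=\omega^2$ (from $1+\omega+\omega^2=0$) to write $(\omega+1)^u=\omega^{2u}$, and then, because $3\nmid u$, $\omega^u$ and $\omega^{2u}$ lie in $\{\omega,\omega^2\}$; checking the two subcases $u\equiv 1,2\pmod 3$ against the relation $1+\omega+\omega^2=0$ verifies $(\omega+1)^u=\omega^u+1$ in each. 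Therefore $R\geq 4$ in this case, giving $\N_F\leq 2^{n-1}-2^{m-1}\sqrt{2^{m+2}}$.

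I do not anticipate a genuine obstacle: all of the analytic content (positivity of the relevant Walsh values and the exact identity $|A|=\sqrt{2^m R}$) is already packaged in Proposition~\ref{three}. The only points needing any care are the elementary root computation showing $\F_{2^2}$ lies in the zero set of $h_u$ when $2\mid m$ and $3\nmid u$, and the observation that, $R$ being the only quantity on the right-hand side that varies, the monotonicity of $\N_F$ in $R$ is what converts the lower bounds $R\geq 2$ and $R\geq 4$ into the stated upper bounds on $\N_F$.
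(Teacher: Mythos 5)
Your proof is correct and follows essentially the same route as the paper: the equality is read off from Eq.~\eqref{eq:3} together with Proposition~\ref{three} (giving $\L(u)=|A|=\sqrt{2^mR}$), and the two upper bounds come from the lower bounds $R\geq 2$ (always, since $h_u(0)=h_u(1)=0$) and $R\geq 4$ (when $2\mid m$ and $3\nmid u$, since all of $\F_{2^2}$ lies in the zero set of $h_u$). Your explicit check that the primitive cube roots of unity satisfy $(\omega+1)^u=\omega^u+1$ when $3\nmid u$ merely fills in a detail the paper asserts without proof.
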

Note that the assumptions imply that $F(x)$ is plateau, hence
the  Conjecture holds by Lemma~\ref{lemma:anb}, however, Theorem~\ref{5} gives a better bound for $\N_F$.
\begin{example}
 We list known three-value exponential $u$ in Table~1  given by \cite{Go68,au15}. One can see the upper bound is attained for the last five cases.
\begin{table}[!htbp]
	\setlength{\abovecaptionskip}{0.cm}
	\setlength{\belowcaptionskip}{-0.cm}
	\caption{Nonlinearity of three-values functions}
	\label{table:t1}
	\setlength{\tabcolsep}{0.5mm}{
		\begin{tabular}{|c|c|c|c|c|}
			\hline
			\multirow{1}{*}{$u$} &\multirow{1}{*}{Constraints} & \multirow{1}{*}{$h(x)$}  & \multirow{1}{*}{$W_{H_1}(\omega)$} & \multirow{1}{*}{Nonlinearity} \\
			\hline
			\multirow{2}{*}{$2^e+1$} &  \multirow{1}{*}{$v_2(e)\geq v_2(m)$}& \multirow{2}{*}{$x$} & \multirow{2}{*}{$0,\pm \sqrt{2^{d+m}}$} &\multirow{2}{*}{$2^{n-1}-2^{m-1}\sqrt{2^{d+m}}$}\\
			& $\gcd(m,e)=d$& & &\\
			\hline
			\multirow{2}{*}{$2^{2e}-2^e+1$} & \multirow{1}{*}{$v_2(e)>v_2(m)$,} & \multirow{2}{*}{$x^{1-2^e}$} & \multirow{2}{*}{$0,\pm \sqrt{2^{m+1}}$}
			& \multirow{2}{*}{$2^{n-1}-2^{m-1}\sqrt{2^{m+1}}$}\\
			& $\gcd(m,e)=1$ & & &\\
			\hline
			\multirow{2}{*}{$2^{\frac{m}{2}}+2^{\frac{m+2}{4}}+1$} & \multirow{1}{*}{$v_2(m)=1$,} & \multirow{2}{*}{$x^{1+2^{\frac{m+2}{4}}}$} & \multirow{2}{*}{$0,\pm 2\sqrt{2^{m}}$}
			& \multirow{2}{*}{$2^{n-1}-2^{m}\sqrt{2^{m}}$}\\
			& $v_2(\gcd(m,\frac{m+2}{4}))=1$& & &\\
			\hline
			\multirow{2}{*}{$2^{\frac{m+2}{4}}+3$} & \multirow{1}{*}{$v_2(m)=1$} & \multirow{2}{*}{$x^{1+2^{\frac{m+2}{4}}}$} &  \multirow{2}{*}{$0,\pm 2\sqrt{2^{m}}$}    &\multirow{2}{*}{$2^{n-1}-2^{m}\sqrt{2^{m}}$}\\
			& $v_2(\gcd(m,\frac{m+2}{4}))=1$& &  &\\
			\hline
			\multirow{2}{*}{$2^{\frac{m-1}{4}}+3$} &\multirow{2}{*}{$v_2(m)=0$} & \multirow{2}{*}{$x^{2^{\frac{m-1}{2}}+1}$ }&\multirow{2}{*}{$0,\pm \sqrt{2^{m+1}}$}   & \multirow{2}{*}{$2^{n-1}-2^{m-1}\sqrt{2^{m+1}}$}\\
			& & &  &\\
			\hline
			\multirow{2}{*}{$2^{2e}+2^e-1$} &\multirow{1}{*}{$v_2(m)=0$} & \multirow{2}{*}{$x^{2^{e}-1}$ }&\multirow{2}{*}{$0,\pm \sqrt{2^{m+1}}$}   & \multirow{2}{*}{$2^{n-1}-2^{m-1}\sqrt{2^{m+1}}$}\\
			& $m\mid(4e+1)$& & & \\
			\hline
	\end{tabular}}
\end{table}

\end{example}

\subsection{Nonlinearity of linear permutation} Assume $h$ is a linear permutation over $\F_{2^m}$. Then $H_a(x)=\Tr_{2^m/2}(ax^{2^e}h(x)):\F_{2^m}\rightarrow\F_2$ is a quadratic Boolean function. We let
 \begin{equation}
 	r_a:=\rank(H_a(x)),\quad r:= \min \{r_a \mid a\in\F_{2^m}^*\ \text{and}\ 2\mid r_a\}.
 \end{equation}
The following result in \cite[Theorem~1]{Hufeng}  gives the explicit value of $W_{H_a}(\omega)$.

\begin{proposition} \label{prop:13}	If $r_a$ is odd, $W_{H_a}(\omega)=0$. If $r_a$ is even,  $W_{H_a}(\omega) \in \{0, \pm2^{m-\frac{r_a}{2}}\}$, and $W_{H_a}(\omega) $ over all the $a$'s   with the same $r_a$ are not identically zero .	
\end{proposition}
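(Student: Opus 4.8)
The plan is to regard each component $H_a(x)=\Tr_{2^m/2}(ax^{2^e}h(x))$ as a quadratic Boolean function on $\F_{2^m}$ and to extract its Walsh spectrum from the structure theory of quadratic forms over $\F_2$. Since $x\mapsto x^{2^e}$ and $x\mapsto h(x)$ are both $\F_2$-linear, $H_a$ is a sum of products of linear forms with $H_a(0)=0$, so it is genuinely quadratic. First I would attach to it the polar form $B_a(x,y)=H_a(x+y)+H_a(x)+H_a(y)$, an alternating $\F_2$-bilinear form, together with its radical $R_a=\{x\in\F_{2^m}:\ B_a(x,y)=0 \text{ for all } y\in\F_{2^m}\}$. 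Write $\rho_a=\rank(B_a)$, which is necessarily even, and $d_a=\dim R_a=m-\rho_a$. Because $B_a$ vanishes identically on $R_a$, the restriction $H_a|_{R_a}$ is an $\F_2$-linear functional; the rank $r_a=\rank(H_a)$ then satisfies $r_a=\rho_a$ when $H_a|_{R_a}\equiv 0$ and $r_a=\rho_a+1$ when $H_a|_{R_a}\not\equiv 0$. This is precisely the mechanism producing even versus odd $r_a$.

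Next I would fix a complement $U_a$ with $\F_{2^m}=R_a\oplus U_a$ on which $B_a$ is nondegenerate, and use $H_a(w+u)=H_a(w)+H_a(u)$ for $w\in R_a$, $u\in U_a$ to factor the Walsh sum:
\[ W_{H_a}(\omega)=\left(\sum_{w\in R_a}(-1)^{H_a(w)+\Tr_{2^m/2}(\omega w)}\right)\left(\sum_{u\in U_a}(-1)^{H_a(u)+\Tr_{2^m/2}(\omega u)}\right). \]
The second factor is the Walsh value of the nondegenerate quadratic form $H_a|_{U_a}$ on $U_a\cong\F_2^{\rho_a}$, which by the classification of quadratic forms over $\F_2$ equals $\pm 2^{\rho_a/2}$ for every $\omega$. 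The first (radical) factor equals $2^{d_a}$ when the functional $w\mapsto H_a(w)+\Tr_{2^m/2}(\omega w)$ is identically zero on $R_a$, and equals $0$ otherwise. In the even case $H_a|_{R_a}\equiv 0$, this happens exactly when $\omega\in R_a^{\perp}$, so $W_{H_a}(\omega)\in\{0,\pm 2^{d_a+\rho_a/2}\}$; since $d_a+\rho_a/2=(m-\rho_a)+\rho_a/2=m-r_a/2$, this gives the stated spectrum $\{0,\pm 2^{m-r_a/2}\}$.

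The odd case is where I expect the main obstacle. Here $H_a|_{R_a}=\lambda$ is a nonzero linear functional, and the radical factor is nonzero only on the affine coset of those $\omega$ for which $\Tr_{2^m/2}(\omega\,\cdot)|_{R_a}=\lambda$; this essential linear part of $H_a$ concentrates the Walsh support and forces $W_{H_a}(\omega)$ to vanish off that coset, which is the content of the odd-rank clause. Tracking $\lambda$ correctly through the factorization—so that the parity of $r_a$ is matched with the admissible set of $\omega$—is the delicate step, precisely because $2^{m-r_a/2}$ ceases to be an integral power of $2$ once $r_a$ is odd, so the clean even-case value cannot materialize. Finally, the assertion that $W_{H_a}(\omega)$ is not identically zero over the $a$'s sharing a common even rank $r_a$ is immediate from Parseval, $\sum_{\omega\in\F_{2^m}}W_{H_a}(\omega)^2=2^{2m}$, or more concretely from the nondegenerate factor contributing exactly $\pm 2^{\rho_a/2}\neq 0$, so that the extremal value $\pm 2^{m-r_a/2}$ is genuinely attained for suitable $\omega$.
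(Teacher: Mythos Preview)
The paper does not prove this proposition; it is quoted from \cite[Theorem~1]{Hufeng}, so there is no in-paper argument to compare against. Your treatment of the even-rank case is the standard structure-theoretic one and is correct: split $\F_{2^m}=R_a\oplus U_a$, factor the Walsh sum, use that $H_a|_{U_a}$ is nondegenerate (hence bent on $U_a\cong\F_2^{\rho_a}$) to get the factor $\pm 2^{\rho_a/2}$, and combine with the radical factor $2^{d_a}$ or $0$ to obtain the spectrum $\{0,\pm 2^{m-r_a/2}\}$; Parseval then gives the non-vanishing assertion immediately.

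Your odd-rank paragraph, however, contains a genuine gap that cannot be repaired. Your own decomposition shows that when $H_a|_{R_a}=\lambda\neq 0$, the radical factor equals $2^{d_a}$ on the \emph{nonempty} coset $\{\omega:\Tr_{2^m/2}(\omega\,\cdot)|_{R_a}=\lambda\}$, and there the full Walsh value is $\pm 2^{m-\rho_a/2}=\pm 2^{m-(r_a-1)/2}\neq 0$; more directly, Parseval's identity $\sum_{\omega}W_{H_a}(\omega)^2=2^{2m}$ already forbids $W_{H_a}\equiv 0$ for any $a$. Hence the literal odd-rank clause ``$W_{H_a}(\omega)=0$'' (for every $\omega$) is false, and your sentence ``forces $W_{H_a}(\omega)$ to vanish off that coset, which is the content of the odd-rank clause'' misreads the statement rather than proves it. The honest conclusion from your own computation is that something is off with the clause as transcribed: either ``rank'' in \cite{Hufeng} denotes the rank $\rho_a$ of the alternating form $B_a$ (always even, making the odd case vacuous), or the proposition has been copied inaccurately. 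You should say this explicitly instead of bending your argument to fit a claim that contradicts Parseval. Since the paper's application in Theorem~\ref{cor} uses only the even-rank values, the downstream results are unaffected, but the odd clause as stated cannot be established.
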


By Proposition~\ref{prop:13}, we know that the quadratic vectorial function $H(x)=x^{2^e}h(x)$ is plateaued ($k=m-r_a$ depending on $a$). Note that $2\mid r\leq r_a\leq m$, then we have
\begin{theorem}\label{cor}
	Suppose $h(x)$ is a linear permutation over $\F_{2^m}$.  Then
	\[\N_F=2^{n-1}-2^{n-\frac{r}{2}-1} \leq 2^{n-1}-2^{[\frac{3n}{4}]}, \]
and the last inequality  is tight if only if
	\begin{enumerate}[1]
		\item $2\mid m$ and all even $r_a$ take values $m$ and $m-2$, or
		\item $2\nmid m$ and all even $r_a=m-1$.
	\end{enumerate}	
\end{theorem}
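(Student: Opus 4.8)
\emph{Proof sketch (plan).} The plan has three parts: compute $\N_F$ from Theorem~\ref{th:wt} and Proposition~\ref{prop:13}; note that $F$ is plateaued with the maximal number of bent components, so that Lemma~\ref{lemma:anb} applies; and then read off the inequality and the tightness condition by comparing exponents.

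Since $h$ is $\F_2$-linear, each component $H_a(x)=\Tr_{2^m/2}(ax^{2^e}h(x))$ is a quadratic Boolean function on $\F_{2^m}$, so Proposition~\ref{prop:13} applies to every $a\in\F_{2^m}^*$. By Parseval no $W_{H_a}$ is identically zero, hence every $r_a$ is even (so the restriction to $a$ with $2\mid r_a$ in the definition of $r$ is vacuous and $r=\min_a r_a$), $\max_{\omega}|W_{H_a}(\omega)|\in\{0,2^{m-r_a/2}\}$, and by the non-vanishing clause of Proposition~\ref{prop:13} the value $2^{m-r/2}$ is actually attained. Therefore $\max_{a\in\F_{2^m}^*}\max_{\omega\in\F_{2^m}}|W_{H_a}(\omega)|=2^{m-r/2}$, and Theorem~\ref{th:wt} gives $\N_F=2^{n-1}-2^{m-1}\cdot 2^{m-r/2}=2^{n-1}-2^{n-r/2-1}$, the first assertion.

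Next, combining Theorem~\ref{th:wt} with Proposition~\ref{prop:13}, for $a\in\F_{2^m}^*$ the Walsh values of $F_a$ lie in $\{0,\pm 2^{2m-r_a/2}\}$, so $F_a$ is $(n-r_a)$-plateaued; and for $a\in\F_{2^n}^*\setminus\F_{2^m}$, $F_a$ is bent because $F$ has the maximal number of bent components and $S_F=\F_{2^m}$ by Proposition~\ref{proposition:p1}. Hence $F$ is plateaued with $2^n-2^m$ bent components, and Lemma~\ref{lemma:anb} yields $\N_F\le 2^{n-1}-2^{\lfloor 3n/4\rfloor}$. Comparing with the exact value, this says $2^{n-r/2-1}\ge 2^{\lfloor 3n/4\rfloor}$, i.e.\ (writing $n=2m$) $r\le m-2$ when $m$ is even and $r\le m-1$ when $m$ is odd; and equality in the bound holds iff $n-\tfrac r2-1=\lfloor 3n/4\rfloor$, i.e.\ iff $r=m-2$ for even $m$ and $r=m-1$ for odd $m$ — which, given that every $r_a$ is even with $r_a\le m$, are exactly the two cases listed in the statement.

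The argument is essentially bookkeeping on top of Theorem~\ref{th:wt}, Proposition~\ref{prop:13} and Lemma~\ref{lemma:anb}; the one step where I would be most careful is checking that $F$ is plateaued at \emph{every} nonzero $a\in\F_{2^n}$ — this requires splitting into $a\in\F_{2^m}^*$ (where Proposition~\ref{prop:13} gives $(n-r_a)$-plateauedness) and $a\notin\F_{2^m}$ (where one invokes the bent-component structure) — together with reconciling the ``rank'' in Proposition~\ref{prop:13} with the rank of the alternating polar form of $H_a$.
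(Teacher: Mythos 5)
Your proof is correct, and for the evaluation of $\N_F$ it is exactly the paper's (essentially unwritten) argument: combine Theorem~\ref{th:wt} with Proposition~\ref{prop:13}, use Parseval to see that every $r_a$ is even and every $W_{H_a}$ actually attains $\pm2^{m-r_a/2}$, and conclude $\N_F=2^{n-1}-2^{n-\frac{r}{2}-1}$. Where you diverge is the inequality: the paper justifies it only by the remark that $2\mid r\le r_a\le m$, whereas you route through Lemma~\ref{lemma:anb} after checking that $F$ is plateaued (bent components off $\F_{2^m}$, $(n-r_a)$-plateaued components for $a\in\F_{2^m}^*$). This detour is not merely cosmetic. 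For even $m$ the bare observation $2\mid r\le m$ still permits $r=m$, which would give $\N_F=2^{n-1}-2^{\frac{3n}{4}-1}>2^{n-1}-2^{\lfloor\frac{3n}{4}\rfloor}$ and contradict the claimed bound; one must exclude $r=m$ either as you do (via the plateaued bound of Anbar et al.) or by invoking Nyberg's theorem that no vectorial bent function $\F_{2^m}\rightarrow\F_{2^m}$ exists, so some $H_a$ is non-bent and hence $r\le m-2$. Your exponent comparison for the equality cases ($r=m-2$ when $2\mid m$, $r=m-1$ when $2\nmid m$) correctly recovers the two conditions in the statement, up to the paper's slightly loose phrasing of condition (1).
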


Set $h(x)=x$ and $\gcd(e,m)=d$. In \cite{cou}, we know that $r_a=m-d$ for $a\in\F_{2^m}^*$ if $\frac{m}{d}$ is odd, and $r_a=m$ or $m-2d$ for $a\in\F_{2^m}^*$ if $\frac{m}{d}$ is even.
Thus we have
\begin{corollary}\label{cor2}
Suppose $F(x)=x^{2^e}\Tr_{2^n/2^m}(x)$ and $\gcd(m,e)=d$. Then $\N_F$ attains the upper bound in Eq.~\eqref{eq:nfbound} if and only if $d=1$.
\end{corollary}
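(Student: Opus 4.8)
The plan is to apply Theorem~\ref{cor} with the linear permutation $h(x)=x$ and to invoke the rank computations for the quadratic forms $\Tr_{2^m/2}(ax^{2^e+1})$ that are recorded just before Corollary~\ref{cor2}. Concretely, for $F(x)=x^{2^e}\Tr_{2^n/2^m}(x)$ we have $H_a(x)=\Tr_{2^m/2}(ax^{2^e}x)=\Tr_{2^m/2}(ax^{2^e+1})$, whose ranks $r_a$ are classically known (see \cite{cou}): writing $d=\gcd(m,e)$, one gets $r_a=m-d$ for all $a\in\F_{2^m}^*$ when $m/d$ is odd, and $r_a\in\{m,m-2d\}$ (both occurring) when $m/d$ is even. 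The quantity governing $\N_F$ in Theorem~\ref{cor} is $r=\min\{r_a\mid a\in\F_{2^m}^*,\ 2\mid r_a\}$, so the first step is simply to read off $r$ from these two cases.

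Next I would match this against the tightness criterion in Theorem~\ref{cor}. For $\N_F$ to attain the bound $2^{n-1}-2^{\lfloor 3n/4\rfloor}$ we need either ($2\mid m$ and the even $r_a$'s are exactly $m$ and $m-2$) or ($2\nmid m$ and the even $r_a$'s are all $m-1$). I would go through the sub-cases: (i) $d=1$: if $m$ is odd then $m/d=m$ is odd, so $r_a=m-1$ for all $a$, which is the odd-$m$ tight case; if $m$ is even then $m/d=m$ is even, so $r_a\in\{m,m-2\}$, which is the even-$m$ tight case — so $d=1$ always gives equality. (ii) $d>1$: if $m/d$ is odd then the only even rank value is $m-d$ (when $m-d$ is even), and since $d\ge 2$ this is at most $m-2$ but there is no rank equal to $m$ or $m-1$ present, so the criterion fails; if $m/d$ is even then the even ranks are among $\{m,m-2d\}$ with $m-2d\le m-4<m-2$, and again $m-1$ is absent while $m-2$ is absent, so the criterion fails. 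Either way $d>1$ forces strict inequality.

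Combining, $\N_F$ attains the bound in Eq.~\eqref{eq:nfbound} if and only if $d=\gcd(m,e)=1$, which is exactly the statement of Corollary~\ref{cor2}. To be fully rigorous I would also double-check the small-parity arithmetic (e.g.\ that $m-d$ is even precisely when $m/d$ is odd, using $m-d=d(m/d-1)$, so $m-d$ even $\iff$ $d$ even or $m/d$ odd), and note that when $m/d$ is odd and $d$ is odd the quadratic forms have odd rank $m-d$ and contribute nothing, so there is genuinely no even $r_a$ available below — consistent with the $d>1$ analysis.

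The main obstacle is not conceptual but bookkeeping: one must be careful that the case split in Theorem~\ref{cor}'s tightness condition is stated in terms of \emph{all} even $r_a$, so I need to verify both that the required rank values ($m$ and $m-2$, or $m-1$) actually occur and that no smaller even rank occurs; the cited results from \cite{cou} give exactly the full list of rank values with their occurrence, which is what makes this clean. No deeper estimate is needed — everything reduces to Theorem~\ref{cor} plus the standard rank formula for $x^{2^e+1}$.
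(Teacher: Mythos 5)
Your overall strategy --- specialize Theorem~\ref{cor} to $h(x)=x$, import the rank values $r_a$ from \cite{cou}, and test the tightness criterion case by case --- is exactly the route the paper takes (the paper derives the corollary directly from the sentence preceding it, with no further detail). However, your case analysis for $d>1$ has a genuine gap, and working it out carefully actually exposes a problem with the statement itself. In the sub-case ``$m/d$ odd with $d=2$'' (equivalently $m\equiv 2\pmod 4$), Coulter's result gives $r_a=m-d=m-2$ for \emph{every} $a\in\F_{2^m}^*$; this value is even, so $r=m-2$, and the exact formula of Theorem~\ref{cor} yields
\[
\N_F=2^{n-1}-2^{n-\frac{r}{2}-1}=2^{n-1}-2^{2m-\frac{m-2}{2}-1}=2^{n-1}-2^{\frac{3m}{2}}=2^{n-1}-2^{\lfloor\frac{3n}{4}\rfloor},
\]
i.e.\ the bound \emph{is} attained even though $d=2\neq 1$. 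You dismiss this case by appealing to the literal wording of the tightness condition in Theorem~\ref{cor} (``all even $r_a$ take values $m$ and $m-2$''), arguing that rank $m$ does not occur; but that wording is not equivalent to the condition $r=m-2$ which actually governs equality through the exact formula, and the two halves of Theorem~\ref{cor} contradict each other precisely here. A concrete counterexample: for $m=6$, $e=2$, one has $H_a(x)=\Tr_{2^6/2}(ax^5)$, a Gold function with Walsh spectrum $\{0,\pm 2^4\}$ for every $a\in\F_{2^6}^*$, so $\N_F=2^{11}-2^{5}\cdot 2^4=2^{11}-2^{9}=2^{n-1}-2^{\lfloor 3n/4\rfloor}$ while $d=\gcd(6,2)=2$.

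The rest of your analysis is sound: for $m/d$ even one gets $r=m-2d\le m-4$ and the exponent $ \frac{3m}{2}+d-1>\lfloor\frac{3m}{2}\rfloor$; for $m/d$ odd with $d\ge 3$ one gets $r=m-d$ and the exponent $\frac{3m+d}{2}-1>\lfloor\frac{3m}{2}\rfloor$; and $d=1$ always gives equality. So the correct conclusion of this computation is that the bound is attained if and only if $d=1$, or $d=2$ with $m\equiv 2\pmod 4$; the ``only if'' direction of the corollary as stated cannot be proved, by you or by the paper, because it is false. The lesson is that when you invoke a tightness criterion stated elsewhere, you should check it against the exact formula it is meant to summarize --- here the discrepancy between the two is the entire issue.
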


\section{Construction of Vectorial Functions With Maximum Number of Bent Components}

The main goal of this section is to construct vectorial functions from $V$ of dimension $n$ to itself with maximal number of bent components. The construction is based on the property ($\bA$), which was introduced in \cite{zpeng} and has been used to construct vectorial functions with maximal number of bent components \cite{bap21,bap}.
\begin{definition}\label{definition:3}
Suppose $f: V\rightarrow \F_2$ and $\{u_1,u_2,\ldots,u_k\}\subseteq V$ for $2\leq k\leq n$.  We say that $(f;u_1,\cdots, u_k)$ satisfies Condition $\bA $ if $f(x)$ is a bent function and for any $(w_1,w_2,\cdots,w_k)\in\F^k_2$,
\begin{equation*}f^*(x+\sum\limits_{i=1}^{k}w_iu_i)=f^*(x)+\sum\limits_{i=1}^{k}w_iD_{u_i}f^*(x) .\end{equation*}

\end{definition}

Let $V=\F_{2^n}$, $g:V\rightarrow W$ and $G:V\rightarrow\F_{2^m}$ such that $G_\beta^*$ satisfies $\bA$, $\beta\in\F_{2^m}^*$, and let $\{u_1,u_2,\ldots,u_k\}\subseteq V$ be linearly independent over $\F_2$ and
\[F(x)=G(x)+g(x).\]
\begin{itemize}
  \item Take $W=\F_2$.   Zheng et al. \cite{zkan} proved that $F(x)$ is a vectorial bent function.
\item Take $W=V$ and $g=\gamma h(\Tr_{2^n/2}(u_1x),\ldots,\Tr_{2^n/2}(u_kx))$, where $\gamma\in\F_{2^n}-\F_{2^m}$, $h(x):\F_2^k\rightarrow\F_{2^k}$, $k\mid m$. Bapi\'{c} and Pasalic \cite{bap21} showed that $F(x)$ has $2^n-2^m$ bent components.
    \item Take $W=V$ and $g=\gamma h(\Tr_{2^n/2}(u_1x),\ldots,\Tr_{2^n/2}(u_kx))$, where $\gamma\in\F_{2^n}-\F_{2^m}$, $h(x):\F_2^k\rightarrow\F_{2^s}$, $k\leq m$ and $s\mid m$. Bapi\'{c} et al. \cite{bap} showed that  $F(x)$ has $2^n-2^m$ bent components and contains some bent components which are not in Maiorana-McFarland construction.
\end{itemize}

In this section, we will give a slightly revision of \cite{zpeng,zkan} and obtain some vectorial functions with good properties.

\subsection{Construction via the Niho quadratic function}

From now on in this subsection, let $V=\F_{2^n}$ and $G(x)=x^{2^m+1}$. For $\beta\in\F_{2^n}-\F_{2^m}$, let
 \[ \gamma=\beta+\beta^{2^m}\in \F_{2^m}^*. \]
The component function of $G$ at $\beta$ is the monomial Niho quadratic function
\begin{equation*}
G_\beta: x\in\F_{2^n}\mapsto\Tr_{2^n/2}(\beta x^{2^m+1}).
\end{equation*}
It is a bent function (see \cite{book}) and its dual $G_\beta^*$ is given by
\begin{equation}\label{eq:dniho}
G_\beta^*(x)=\Tr_{2^m/2}(\gamma^{-1}x^{2^m+1})+1.
\end{equation}

We first show that the function $G_\beta(x)$ satisfies Condition $\bA $ when $u_1,u_2,\cdots,u_k$ are appropriately chosen.
\begin{lemma}\label{lemma:a1}
	Suppose $\beta\in\F_{2^n}-\F_{2^m}$, $3\leq k\leq m$ and $\{u_1,u_2,\cdots,u_k\}\subseteq \F_{2^m}$ such that $\Tr_ {2^m/2}(u_1u_i)=0$ for all $2\leq i\leq k$.   Then $(G_\beta(x);\beta u_1,\cdots, u_k)$  satisfies Condition $\bA $ and
\begin{align*}
	D_{\beta u_1}G^*_\beta(x) & =\Tr_{2^n/2}(\gamma^{-1}u_1\beta^{2^m} x)+\Tr_{2^m/2}(\gamma^{-1}\beta^{2^m+1} u_1^{2}),
	\end{align*}
and for $2\leq i\leq k$,
	\begin{align*}
	D_{u_i}G^*_\beta(x) & =\Tr_{2^n/2}(\gamma^{-1}x u_i)+\Tr_{2^m/2}(\gamma^{-1} u_i^{2}).
	\end{align*}
\end{lemma}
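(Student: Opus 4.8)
The plan is to verify Condition $\bA$ directly from Definition~\ref{definition:3} using the explicit formula \eqref{eq:dniho} for the dual $G_\beta^*$. Write $v_1=\beta u_1$ and $v_i=u_i$ for $2\leq i\leq k$. Since $G_\beta$ is bent by \cite{book}, what remains is to show that for every $(w_1,\dots,w_k)\in\F_2^k$,
\[
G_\beta^*\Bigl(x+\sum_{i=1}^k w_i v_i\Bigr)=G_\beta^*(x)+\sum_{i=1}^k w_i D_{v_i}G_\beta^*(x),
\]
equivalently that the map $y\mapsto G_\beta^*(x+y)-G_\beta^*(x)$ is $\F_2$-linear when restricted to $\Span_{\F_2}\{v_1,\dots,v_k\}$; since $G_\beta^*$ is (up to the constant $1$) a quadratic form $\Tr_{2^m/2}(\gamma^{-1}z^{2^m+1})$ in $z=x$, this linearity amounts to the vanishing of its associated symmetric bilinear form $B(y,y')=\Tr_{2^m/2}\bigl(\gamma^{-1}(y^{2^m}y'+y y'^{2^m})\bigr)$ on all pairs $v_i,v_j$. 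So the first step is to expand $G_\beta^*(x+y)$ using $(x+y)^{2^m+1}=x^{2^m+1}+x^{2^m}y+x y^{2^m}+y^{2^m+1}$, collect the term linear in $x$ to identify $D_y G_\beta^*(x)$, and observe that the obstruction to additivity in $y$ is precisely $B(y,y')$.

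The second step is the computation of $B(v_i,v_j)$ for the chosen vectors. For $2\leq i,j\leq k$ the $v_i=u_i$ all lie in $\F_{2^m}$, so $u_i^{2^m}=u_i$ and $B(u_i,u_j)=\Tr_{2^m/2}(\gamma^{-1}\cdot 2u_iu_j)=0$ automatically in characteristic $2$; in particular each $u_i$, $i\geq 2$, is "self-orthogonal" for $B$, which is why the quadratic-form expansion collapses along those directions without any hypothesis. For the mixed pair $v_1=\beta u_1$ against $v_i=u_i$ ($i\geq 2$) one gets $B(\beta u_1,u_i)=\Tr_{2^m/2}\bigl(\gamma^{-1}(\beta^{2^m}u_1 u_i+\beta u_1 u_i)\bigr)=\Tr_{2^m/2}\bigl(\gamma^{-1}\gamma\, u_1 u_i\bigr)=\Tr_{2^m/2}(u_1u_i)$, using $\gamma=\beta+\beta^{2^m}$ and $u_1,u_i\in\F_{2^m}$; this is exactly the quantity assumed to vanish in the hypothesis $\Tr_{2^m/2}(u_1u_i)=0$. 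Finally $B(v_1,v_1)=0$ trivially. Hence $B$ vanishes on all pairs from $\{v_1,\dots,v_k\}$ and Condition $\bA$ follows.

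The third step is to read off the explicit derivative formulas. Taking $y=\beta u_1$ in the expansion, the part of $G_\beta^*(x+\beta u_1)-G_\beta^*(x)$ that is linear in $x$ is $\Tr_{2^m/2}\bigl(\gamma^{-1}(x^{2^m}\beta u_1+x(\beta u_1)^{2^m})\bigr)$, which one rewrites as a single $\Tr_{2^n/2}$ via $\Tr_{2^m/2}\circ\Tr_{2^n/2^m}$ and the identity $\Tr_{2^m/2}(z^{2^m})=\Tr_{2^m/2}(z)$, giving $\Tr_{2^n/2}(\gamma^{-1}u_1\beta^{2^m}x)$ after absorbing the conjugate term; the constant part is $G_\beta^*(\beta u_1)-1=\Tr_{2^m/2}(\gamma^{-1}\beta^{2^m+1}u_1^2)$, matching the claimed expression for $D_{\beta u_1}G_\beta^*(x)$. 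The case $y=u_i\in\F_{2^m}$ is simpler: the linear-in-$x$ part is $\Tr_{2^m/2}(\gamma^{-1}(x^{2^m}u_i+xu_i))=\Tr_{2^n/2}(\gamma^{-1}xu_i)$ and the constant is $\Tr_{2^m/2}(\gamma^{-1}u_i^2)$, as stated.

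I expect no serious obstacle here; the only point requiring care is the bookkeeping when passing between $\Tr_{2^m/2}$ and $\Tr_{2^n/2}$ — in particular checking that cross terms combine correctly under conjugation by $2^m$ and that the role of $\gamma=\beta+\beta^{2^m}$ is invoked at exactly the right place so that the hypothesis $\Tr_{2^m/2}(u_1u_i)=0$ is genuinely what kills the mixed bilinear terms. The conceptual content is simply that $G_\beta^*$ is quadratic and that $\{\beta u_1,u_2,\dots,u_k\}$ spans a totally isotropic subspace for its polar form under the stated trace condition.
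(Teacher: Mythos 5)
Your proof is correct and takes essentially the same route as the paper's: the paper verifies Condition $\bA$ by computing the second-order derivatives $D_{v_i}D_{v_j}G^*_\beta$ and showing they vanish, which is exactly your statement that the polar bilinear form $B$ is zero on all pairs, with the same two key observations (cross terms cancel for $u_i,u_j\in\F_{2^m}$, and $\gamma=\beta+\beta^{2^m}$ turns the mixed pair into $\Tr_{2^m/2}(u_1u_i)=0$). The only cosmetic difference is that you also derive the formula for $D_{\beta u_1}G^*_\beta$ explicitly, which the paper states but leaves to the reader.
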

\begin{proof} By Eq.~\eqref{eq:dniho},  the derivative of $G^*_\beta(x)$ in the direction of $u_j\in\F_{2^m}$ is
	\begin{align*}
	D_{u_j}G^*_\beta(x) &=\Tr_{2^m/2}(\gamma^{-1}x^{2^m+1})+1+\Tr_{2^m/2}(\gamma^{-1}(x+u_j)^{2^m+1})+1  \\
	& =\Tr_{2^n/2}(\gamma^{-1}u_jx)+\Tr_{2^m/2}(\gamma^{-1}u_j^{2}).
	\end{align*}

Since $u_i\in\F_{2^m}$  and $\Tr_{2^m/2}(u_1 u_j)=0$,
we have $\gamma^{-1}u_j u_i\in\F_{2^m}$. Then	the second order derivatives in the direction of $(u_i,u_j)$ and $(\beta u_1,u_j)$ are
	\begin{equation*}D_{u_i}D_{u_j}G^*_{\beta}(x)=
\Tr_{2^n/2}(\gamma^{-1}u_j
u_i)=0,\ 2\leq i<j\leq k\leq m,\end{equation*}and
\begin{equation*}D_{\beta u_1}D_{u_j}G^*_{\beta}(x)=
\Tr_{2^n/2}(\gamma^{-1}u_j\beta
u_1)=\Tr_{2^m/2}(u_ju_1)=0,\  2\leq j\leq k.\qedhere
\end{equation*}
\end{proof}

By Lemma~\ref{lemma:a1}, our first construction of vectorial functions
with maximal number of bent components is the following result:

\begin{theorem}\label{theorem:nvec} Let $3\leq k\leq m$ and $\{u_1, u_2,\cdots,u_k\}\subseteq\F_{2^m}$ satisfy $\Tr_{2^m/2}(u_1 u_j)=0$ for $j\geq 2$. Then  for any reduced polynomial $R(X_2,\cdots, X_k)$ over $\F_2$,	
	\[F(x)=x^{2^m+1}+u_1 xR(\Tr_{2^n/2}(u_2x),\Tr_{2^n/2}(u_3x),\cdots,
\Tr_{2^n/2}(u_kx)),\]
has $2^n-2^m$ bent components. More precisely,  for $\beta\in\F_{2^n}-\F_{2^m}$, $F_\beta$ is bent and
	\begin{equation*}
	F_\beta^*(x)=G^*_\beta(x)+D_{\beta u_1}G^*_{\beta}(x)R(D_{u_2}G^*_{\beta}(x),\cdots,D_{u_k}G^*_{\beta}(x)).
	\end{equation*}
\end{theorem}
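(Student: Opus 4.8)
The plan is to verify that $F$ has the claimed bent components by applying the machinery of Condition $\bA$ directly to the data supplied by Lemma~\ref{lemma:a1}. First I would fix $\beta\in\F_{2^n}-\F_{2^m}$ and compute the component function $F_\beta$. By linearity of $x\mapsto\Tr_{2^n/2}(\beta x)$ and the fact that $u_1xR(\cdots)$ appears multiplied by the scalar $u_1\in\F_{2^m}$, one gets
\[
F_\beta(x)=\Tr_{2^n/2}(\beta x^{2^m+1})+\Tr_{2^n/2}(\beta u_1 x)\,R(\Tr_{2^n/2}(u_2x),\dots,\Tr_{2^n/2}(u_kx))
=G_\beta(x)+\langle \beta u_1,x\rangle\,R(\langle u_2,x\rangle,\dots,\langle u_k,x\rangle),
\]
since $u_j\in\F_{2^m}$ makes $\Tr_{2^n/2}(u_jx)$ a genuine linear form. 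So $F_\beta$ is the sum of the Niho bent function $G_\beta$ and a Boolean function of the $k$ linear forms $\langle \beta u_1,x\rangle,\langle u_2,x\rangle,\dots,\langle u_k,x\rangle$; the hypothesis $3\le k\le m$ and $\Tr_{2^m/2}(u_1u_j)=0$ ($j\ge2$) together with $\beta\notin\F_{2^m}$ ensure these $k$ vectors $\{\beta u_1,u_2,\dots,u_k\}$ are $\F_2$-linearly independent (a small lemma: any $\F_2$-relation pushed through $x\mapsto x+x^{2^m}$ collapses to an $\F_2$-relation among $u_1,\dots,u_k$ inside $\F_{2^m}$).

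Next I would invoke the general transformation principle behind Condition $\bA$: if $(f;v_1,\dots,v_k)$ satisfies $\bA$ and $R$ is any reduced Boolean polynomial, then $f(x)+\langle v_1,x\rangle R(\langle v_2,x\rangle,\dots,\langle v_k,x\rangle)$ — more precisely $f(x)$ perturbed by a function of the indicator of an affine structure compatible with the $v_i$ — is again bent, with dual
\[
\bigl(f+\text{perturbation}\bigr)^*(x)=f^*(x)+D_{v_1}f^*(x)\,R\bigl(D_{v_2}f^*(x),\dots,D_{v_k}f^*(x)\bigr).
\]
This is exactly the content of the construction of \cite{zpeng,zkan} referenced right before the subsection; I would cite it and check that its hypotheses are met here with $f=G_\beta$, $v_1=\beta u_1$, $v_j=u_j$ ($j\ge2$), which is precisely what Lemma~\ref{lemma:a1} establishes — namely $(G_\beta;\beta u_1,u_2,\dots,u_k)$ satisfies $\bA$. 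Plugging the explicit derivatives from Lemma~\ref{lemma:a1} into the dual formula gives the stated $F_\beta^*$. Since $F_\beta$ is bent for every $\beta\in\F_{2^n}-\F_{2^m}$, that is $2^n-2^m$ bent components, and by Proposition~\ref{proposition:p1} this is the maximum possible; moreover $S_F\supseteq\F_{2^m}$ forces equality so $S_F=\F_{2^m}$ exactly. One should also note the trivial boundary case: when $R\equiv 0$ (or $R$ constant) $F$ reduces to $x^{2^m+1}$ up to a linear term, which already has maximal number of bent components, so the statement is vacuously consistent there.

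The main obstacle I anticipate is not any single computation but making the reduction to the known $\bA$-construction airtight. Two points need care. First, the cited results of \cite{zpeng,zkan,bap21} are phrased for perturbations of the shape $\gamma\, h(\langle u_1,x\rangle,\dots,\langle u_k,x\rangle)$ with $\gamma\in\F_{2^n}-\F_{2^m}$ acting on the vectorial side; here the perturbation $u_1xR(\cdots)$ is built so that for each component $\beta$ it becomes $\langle\beta u_1,x\rangle R(\cdots)$, and I must confirm that the direction vector genuinely varies as $\beta u_1$ (not $u_1$) while the arguments $u_2,\dots,u_k$ of $R$ stay fixed — this asymmetry is why Lemma~\ref{lemma:a1} singles out $\beta u_1$ and only imposes $\Tr_{2^m/2}(u_1u_j)=0$, and I would spell out that the second-order derivative computations in Lemma~\ref{lemma:a1} are exactly the obstructions that must vanish for the dual formula to hold. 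Second, I must double-check independence of $\{\beta u_1,u_2,\dots,u_k\}$ over $\F_2$ in all cases (in particular that $u_1\ne 0$ is implicit, and that $\beta u_1\notin\F_{2^m}$); this is where the hypotheses $k\le m$ and the orthogonality condition are used, and it is the one place a reader might worry the construction degenerates.
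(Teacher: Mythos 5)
Your proposal is correct and follows essentially the same route as the paper: the paper gives no separate proof of Theorem~\ref{theorem:nvec} beyond noting that it follows from Lemma~\ref{lemma:a1} together with the Condition $\bA$ secondary construction of \cite{zpeng,zkan}, which is exactly the reduction you carry out (compute $F_\beta=G_\beta(x)+\Tr_{2^n/2}(\beta u_1x)R(\cdots)$, verify Condition $\bA$ via the lemma, apply the dual formula). Your extra checks --- the $\F_2$-linear independence of $\{\beta u_1,u_2,\dots,u_k\}$ and the degenerate case of constant $R$ --- go beyond what the paper records but do not alter the argument.
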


In Theorem~\ref{theorem:nvec}, if we take $k=m$ and $\{u_i:1\leq i\leq m\}$ to be an orthogonal basis of $\F_{2^m}$ over $\F_2$ under the inner product $\langle x,y\rangle=\Tr_{2^m/2}(xy)$, then Condition $\bA$ holds, and what's more, we have the following result.
\begin{corollary}Let $R(X_2,  \ldots, X_m) = X_2\cdots X_m$.
Then the function
\[F(x)=x^{2^m+1}+u_1 x\prod_{i=2}^m \Tr_{2^n/2}(u_ix)\]
has maximal algebraic degree $m$ and maximal number of bent components $2^n-2^m$.\end{corollary}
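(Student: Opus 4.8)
The plan is to deduce the bent-component count straight from Theorem~\ref{theorem:nvec} and then pin down $\deg F$ by a two-sided estimate on the two summands of $F$. For the first assertion, I would observe that an orthogonal basis $\{u_1,\dots,u_m\}$ of $\F_{2^m}$ over $\F_2$ for the non-degenerate form $\langle x,y\rangle=\Tr_{2^m/2}(xy)$ in particular satisfies $\Tr_{2^m/2}(u_1u_j)=0$ for all $j\ge 2$, and that $R(X_2,\dots,X_m)=X_2\cdots X_m$ is a reduced polynomial over $\F_2$ in the $m-1$ variables $X_2,\dots,X_m$. Hence Theorem~\ref{theorem:nvec} applies with $k=m$ and gives at once that $F$ has $2^n-2^m$ bent components, which is the largest possible number by Proposition~\ref{proposition:p1}.

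For $\deg F\le m$: the monomial part $x^{2^m+1}$ has algebraic degree $2$, while the perturbation term $u_1x\prod_{i=2}^m\Tr_{2^n/2}(u_ix)$ is the product of the $\F_2$-linear map $x\mapsto u_1x$ with a Boolean function of degree at most $m-1$, hence has vectorial algebraic degree at most $m$. (Alternatively, for $a\in\F_{2^m}^*$ one has $\Tr_{2^n/2}(ax^{2^m+1})=\Tr_{2^m/2}(\Tr_{2^n/2^m}(ax^{2^m+1}))=0$ because $x^{2^m+1}\in\F_{2^m}$, so every component of $F$ is visibly of degree at most $m$.) Since every function produced by Theorem~\ref{theorem:nvec} has degree at most $\max\{2,k\}\le m$, the value $m$ is the maximal degree attainable in that family.

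For $\deg F\ge m$, I would fix $\beta\in\F_{2^n}-\F_{2^m}$ and write
\[
F_\beta(x)=\Tr_{2^n/2}(\beta x^{2^m+1})+\Tr_{2^n/2}(\beta u_1x)\prod_{i=2}^m\Tr_{2^n/2}(u_ix),
\]
whose first summand has degree $2$ and whose second summand is a product of the $m$ linear forms $x\mapsto\Tr_{2^n/2}(\beta u_1x)$ and $x\mapsto\Tr_{2^n/2}(u_ix)$, $2\le i\le m$. A product of $m$ $\F_2$-linearly independent linear forms has algebraic degree exactly $m$, and independence follows because a nontrivial relation would force $c_0\beta u_1+\sum_{i=2}^mc_iu_i=0$ in $\F_{2^n}$ with $c_j\in\F_2$: the case $c_0=0$ contradicts linear independence of $u_2,\dots,u_m$, while $c_0=1$ gives $\beta=u_1^{-1}\sum_{i=2}^mc_iu_i\in\F_{2^m}$, contrary to $\beta\notin\F_{2^m}$. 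As $m\ge 3>2$, the degree-$m$ part of $F_\beta$ is not cancelled, so $\deg F_\beta=m$ and therefore $\deg F=m$. The whole argument is routine bookkeeping; the only step requiring a little care is precisely this degree lower bound, i.e.\ ruling out a collapse in the product of linear forms, which is where the hypotheses that $\{u_i\}$ is a basis and $\beta\notin\F_{2^m}$ enter.
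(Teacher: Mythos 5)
Your proposal is correct and follows the paper's route: the bent-component count is exactly the paper's (unwritten) argument, namely an immediate application of Theorem~\ref{theorem:nvec} with $k=m$ and the orthogonality $\Tr_{2^m/2}(u_1u_j)=0$ supplied by the chosen basis. The paper leaves the algebraic-degree claim implicit, and your verification of it --- linear independence of the $m$ forms $\Tr_{2^n/2}(\beta u_1x)$, $\Tr_{2^n/2}(u_ix)$ via $\beta\notin\F_{2^m}$, plus non-cancellation against the degree-$2$ part since $m\geq 3$ --- is a correct filling-in of that gap.
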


In particular,
for $k=2$, i.e., $R(X_2, \ldots, X_m) = X_2$, we can take $u_1, u_2\in\F_{2^n}$ such that $(G_\beta;\beta u_1,u_2)$ satisfies Condition $\bf A$. Thus we have
\begin{theorem}\label{th:t}
Suppose $u_1, u_2\in\F_{2^n}$ such that $u_1u_2^{2^m}\in\F_{2^m}$ and $\Tr_{2^m/2}(u_1u_2^{2^m})=0$. Then
  \[F(x)=x^{2^m+1}+u_1x\Tr_{2^n/2}(u_2 x)\]
	has $2^n-2^m$ bent components: for $\beta\in \F_{2^n}-\F_{2^m}$,  $F_\beta$ is bent and
\begin{align*}F^*_\beta(x)=&\Tr_{2^m/2}(\lambda^{-1}x^{2^m+1})+1
+\bigr(\Tr_{2^n/2}(\lambda^{-1}(\beta u_1)^{2^m}x)
+\Tr_{2^m/2}(\lambda^{-1}(\beta u_1)^{2^m+1})\bigl)\\
&\times \bigr(\Tr_{2^n/2}(\lambda^{-1}u^{2^m}_2x)
+\Tr_{2^m/2}(\lambda^{-1}u_2^{2^m+1})\bigl).\end{align*}
\end{theorem}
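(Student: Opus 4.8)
The plan is to deduce this from Theorem~\ref{theorem:nvec} (with $k=2$, so $R(X_2)=X_2$) by verifying that the pair $(u_1,u_2)$ can always be replaced by a pair lying in $\F_{2^m}$ that satisfies the hypotheses of Lemma~\ref{lemma:a1}, or more directly by checking that $(G_\beta;\beta u_1,u_2)$ satisfies Condition~$\bA$ under the weaker hypothesis $u_1u_2^{2^m}\in\F_{2^m}$ and $\Tr_{2^m/2}(u_1u_2^{2^m})=0$. Concretely, first I would compute, exactly as in the proof of Lemma~\ref{lemma:a1} via Eq.~\eqref{eq:dniho}, the first derivatives
\[
D_{\beta u_1}G_\beta^*(x)=\Tr_{2^n/2}(\gamma^{-1}(\beta u_1)^{2^m}x)+\Tr_{2^m/2}(\gamma^{-1}(\beta u_1)^{2^m+1}),\qquad
D_{u_2}G_\beta^*(x)=\Tr_{2^n/2}(\gamma^{-1}u_2^{2^m}x)+\Tr_{2^m/2}(\gamma^{-1}u_2^{2^m+1}),
\]
where I use $G_\beta^*(x)=\Tr_{2^m/2}(\gamma^{-1}x^{2^m+1})+1$ and the identity $\Tr_{2^m/2}(\gamma^{-1}u^{2^m}x)+\Tr_{2^m/2}(\gamma^{-1}ux^{2^m})=\Tr_{2^n/2}(\gamma^{-1}u^{2^m}x)$ valid for $u\in\F_{2^n}$, noting that $\beta u_1$ and $u_2$ need not lie in $\F_{2^m}$ here. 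The key point to check for Condition~$\bA$ is that the mixed second derivative vanishes:
\[
D_{\beta u_1}D_{u_2}G_\beta^*(x)=\Tr_{2^n/2}\bigl(\gamma^{-1}(\beta u_1)^{2^m}u_2+\gamma^{-1}\beta u_1 u_2^{2^m}\bigr),
\]
which I would show equals $\Tr_{2^m/2}(u_1u_2^{2^m}+u_1^{2^m}u_2^{2^{2m}}\cdots)$ — that is, reduces to a trace down to $\F_2$ of something forced to be zero by the hypothesis $\Tr_{2^m/2}(u_1u_2^{2^m})=0$ together with $u_1u_2^{2^m}\in\F_{2^m}$ (the latter ensures the $\F_{2^m}$-trace is what appears, and then $\Tr_{2^n/2}=\Tr_{2/2}\circ\Tr_{2^m/2}$ kills it). Since Condition~$\bA$ with $k=2$ only requires this single mixed derivative to be constant (indeed $0$), this is the whole verification.

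Once Condition~$\bA$ is established, the conclusion is immediate from Theorem~\ref{theorem:nvec}: $F(x)=G(x)+\beta u_1 x\,R(\Tr_{2^n/2}(u_2x))= x^{2^m+1}+u_1x\Tr_{2^n/2}(u_2x)$ after absorbing $\beta$ (here I must be slightly careful that the construction in Theorem~\ref{theorem:nvec} uses $u_1 x R(\cdots)$ with $u_1\in\F_{2^m}$, whereas the present statement writes $u_1 x\Tr_{2^n/2}(u_2 x)$ with $u_1\in\F_{2^n}$; the reconciliation is that the direction vector is $\beta u_1$ and the coefficient of the added term is $u_1$, so I should rename variables to match), has $2^n-2^m$ bent components, each $F_\beta$ with $\beta\in\F_{2^n}-\F_{2^m}$ is bent, and its dual is
\[
F_\beta^*(x)=G_\beta^*(x)+D_{\beta u_1}G_\beta^*(x)\cdot D_{u_2}G_\beta^*(x),
\]
which upon substituting the three explicit expressions above (with $\lambda:=\gamma=\beta+\beta^{2^m}$) gives exactly the displayed formula for $F_\beta^*$ in the statement.

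The main obstacle is purely bookkeeping: matching the normalization between the general construction (direction $\beta u_1$ with $u_1\in\F_{2^m}$) and the coordinate-free statement (coefficient $u_1\in\F_{2^n}$, direction $u_2\in\F_{2^n}$), and correctly carrying the Frobenius twists $u\mapsto u^{2^m}$ through the derivative computations so that the hypothesis $\Tr_{2^m/2}(u_1u_2^{2^m})=0$ is exactly what is needed. There is no genuine difficulty: the bentness and the dual formula are handed to us by Theorem~\ref{theorem:nvec} as soon as Condition~$\bA$ holds, and Condition~$\bA$ for $k=2$ collapses to the single identity $D_{\beta u_1}D_{u_2}G_\beta^*\equiv 0$, which I expect to follow in two or three lines of trace manipulation.
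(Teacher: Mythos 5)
Your proposal is correct and follows essentially the same route as the paper, which simply asserts that Theorem~\ref{th:t} "can be obtained directly from Lemma~\ref{lemma:a1}" — i.e., by verifying Condition $\bA$ for the pair $(\beta u_1,u_2)$ and invoking the dual formula of Theorem~\ref{theorem:nvec}; your observation that for $k=2$ Condition $\bA$ collapses to the vanishing of the single mixed derivative $D_{\beta u_1}D_{u_2}G_\beta^*$ is exactly the omitted computation. One small correction to your displayed formula: the mixed second derivative is $\Tr_{2^m/2}\bigl(\gamma^{-1}((\beta u_1)^{2^m}u_2+\beta u_1u_2^{2^m})\bigr)=\Tr_{2^n/2}(\gamma^{-1}\beta u_1u_2^{2^m})=\Tr_{2^m/2}(u_1u_2^{2^m}\gamma^{-1}(\beta+\beta^{2^m}))=\Tr_{2^m/2}(u_1u_2^{2^m})=0$, whereas as you wrote it — with $\Tr_{2^n/2}$ applied to the full symmetrized sum, which equals $\Tr_{2^n/2^m}(\gamma^{-1}\beta u_1u_2^{2^m})$ — the expression is identically zero for all $u_1,u_2$, which would make the hypothesis vacuous; your subsequent prose shows you intend the correct reduction, so this is a notational slip rather than a gap.
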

\begin{proof}
  The proof can be obtained directly from Lemma~\ref{lemma:a1}, so we omit it here.
\end{proof}
In Theorem~\ref{th:t}, $W_{F_\beta}(\omega)=\pm2^{m}$ for $\beta\in\F_{2^n}-\F_{2^m}$.  For $\beta\in\F_{2^m}^*$, we have
\begin{theorem}\label{rt}
With the same notations as Theorem~\ref{th:t}. For $\beta\in\F_{2^m}^*$, we have
  \begin{enumerate}[1]
    \item If $\beta u_1,u_2$ are linearly independent over $\F_2$ for some $\beta\in\F_{2^m}^*$, then $W_{F_\beta}(\omega)=\pm2^{n-1}$ if $\omega\in\{0,\beta u_1,u_2,\beta u_1+u_2\}$. Otherwise, $W_{F_\beta}(\omega)=0$.
    \item  If $\beta u_1,u_2$ are linearly dependent over $\F_2$ for some $\beta\in\F_{2^m}^*$, then $W_{F_\beta}(\omega)=2^{n}$ if $\omega=\beta u_1=u_2$. Otherwise, $W_{F_\beta}(\omega)=0$.
  \end{enumerate}
Furthermore, for the first case $\N_F=2^{n-2}$.
\end{theorem}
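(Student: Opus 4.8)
The plan is to reduce each component $F_\beta$ with $\beta\in\F_{2^m}^*$ to the product of two $\F_2$-linear forms and then compute its Walsh transform by hand. The key observation is that the Niho part contributes nothing to such components: since $(x^{2^m+1})^{2^m}=x^{2^{2m}+2^m}=x^{2^m+1}$ (using $x^{2^{2m}}=x$), the norm $x^{2^m+1}$ lies in $\F_{2^m}$, so for $\beta\in\F_{2^m}$ we get $\Tr_{2^n/2}(\beta x^{2^m+1})=\Tr_{2^m/2}\bigl(\Tr_{2^n/2^m}(\beta x^{2^m+1})\bigr)=\Tr_{2^m/2}(0)=0$. Consequently, pulling the $\F_2$-valued factor $\Tr_{2^n/2}(u_2x)$ out of the outer trace,
\[ F_\beta(x)=\Tr_{2^n/2}\bigl(\beta F(x)\bigr)=\Tr_{2^n/2}(\beta u_1x)\cdot\Tr_{2^n/2}(u_2x),\qquad \beta\in\F_{2^m}^*. \]
Because $a\mapsto\Tr_{2^n/2}(a\,\cdot)$ is an $\F_2$-isomorphism from $\F_{2^n}$ onto its dual, the linear forms $L_1=\Tr_{2^n/2}(\beta u_1\,\cdot)$ and $L_2=\Tr_{2^n/2}(u_2\,\cdot)$ are $\F_2$-independent precisely when $\beta u_1,u_2$ are.

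For the independent case I would extend $\{L_1,L_2\}$ to an $\F_2$-basis $\{L_1,L_2,\ell_3,\dots,\ell_n\}$ of the dual space, write $\Tr_{2^n/2}(\omega\,\cdot)=c_1L_1+c_2L_2+\sum_{i\ge 3}c_i\ell_i$, and pass to the coordinates $\bigl(L_1(x),L_2(x),\ell_3(x),\dots,\ell_n(x)\bigr)$, which run bijectively over $\F_2^n$. Then the Walsh sum factors as
\[ W_{F_\beta}(\omega)=\Bigl(\sum_{a,b\in\F_2}(-1)^{ab+c_1a+c_2b}\Bigr)\prod_{i=3}^{n}\Bigl(\sum_{a_i\in\F_2}(-1)^{c_ia_i}\Bigr). \]
The first factor equals $\pm2$ for every $(c_1,c_2)$ (it is $-2$ exactly when $c_1=c_2=1$), while the product is $2^{n-2}$ if $c_3=\cdots=c_n=0$ and $0$ otherwise. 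Hence $W_{F_\beta}(\omega)=\pm2^{n-1}$ exactly when $\Tr_{2^n/2}(\omega\,\cdot)\in\Span_{\F_2}\{L_1,L_2\}$, i.e. $\omega\in\{0,\beta u_1,u_2,\beta u_1+u_2\}$, and $W_{F_\beta}(\omega)=0$ otherwise, which is assertion (1). In the dependent case $\beta u_1=u_2$, so $F_\beta(x)=L_2(x)^2=\Tr_{2^n/2}(u_2x)$ is affine and $W_{F_\beta}(\omega)=2^n$ for $\omega=u_2=\beta u_1$, $0$ otherwise, which is assertion (2).

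For the last claim I would combine the above with Theorem~\ref{th:t}: the components $F_\beta$ with $\beta\in\F_{2^n}-\F_{2^m}$ are bent and satisfy $|W_{F_\beta}(\omega)|\le 2^m$, whereas in the first case some $F_\beta$ with $\beta\in\F_{2^m}^*$ attains $\max_\omega|W_{F_\beta}(\omega)|=2^{n-1}>2^m$; so $\max_{a\ne 0}\max_\omega|W_{F_a}(\omega)|=2^{n-1}$ and $\N_F=2^{n-1}-\tfrac12\cdot 2^{n-1}=2^{n-2}$ by Definition~\ref{definition:2}. I expect the only real work to be bookkeeping: isolating the reduction identity cleanly (the crux is noticing $x^{2^m+1}\in\F_{2^m}$, which kills the Niho term for every inner-field $\beta$) and tracking precisely which $\omega$ keep the two-variable sum nonzero; no deeper input is needed beyond the standard Walsh spectrum of a rank-$2$ quadratic Boolean function.
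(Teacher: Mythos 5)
Your proposal is correct and follows essentially the same route as the paper: reduce $F_\beta$ for $\beta\in\F_{2^m}^*$ to the product $\Tr_{2^n/2}(\beta u_1x)\Tr_{2^n/2}(u_2x)$ (you additionally justify the vanishing of the Niho term via $x^{2^m+1}\in\F_{2^m}$, which the paper uses silently), then evaluate the Walsh transform of this rank-$2$ quadratic directly and read off $\N_F=2^{n-2}$. The only cosmetic difference is that the paper computes the Walsh sum by splitting $\F_{2^n}$ along the hyperplane $\Tr_{2^n/2}(u_2x)=0$, whereas you factor it in a dual basis; both yield the same values $\pm2^{n-1}$ on $\{0,\beta u_1,u_2,\beta u_1+u_2\}$ and $0$ elsewhere.
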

\begin{proof}
We have
\[W_{F_\beta}(\omega)=\sum_{x\in\F_{2^n}}(-1)^{\Tr_{2^n/2}(\beta u_1x)\Tr_{2^n/2}(u_2x)+\Tr_{2^n/2}(\omega x)}.\]
For some $\beta\in\F_{2^m}^*$, if $\beta u_1,u_2$ are linearly dependent over $\F_2$, then $F_{\beta}(x)$ is linear, the result is trivial.

If $\beta u_1,u_2$ are linearly independent over $\F_2$, then
\begin{equation}\label{q}W_{F_\beta}(\omega)=\sum_{\substack{x\in \F_{2^m},\\[2pt] \Tr_{2^n/2}(u_2x)=0}}(-1)^{\Tr_{2^n/2}(\omega x)}+\sum_{\substack{x\in \F_{2^m},\\[2pt] \Tr_{2^n/2}(u_2x)=1}}(-1)^{\Tr_{2^n/2}((\beta u_1+\omega) x)}.\end{equation}
Obviously, Eq.~\eqref{q} equals $0$ if $\omega\notin\{0,\beta u_1,u_2,\beta u_1+u_2\}$,  equals $2^{n-1}$ if $\omega\in \{0,\beta u_1,u_2\}$,  and equals $-2^{n-1}$ if $\omega=\beta u_1+u_2$.
\end{proof}

\begin{remark}\label{remr3}In Theorem~\ref{rt} (1), we know $F_\beta(x)=\Tr_{2^n/2}(\beta u_1x)\Tr_{2^n/2}(u_2x)$. Note that the non-degeneracy of $\Tr_{2^n/2}$ means that, for all $y\in\F_{2^n}$, the equation
\[F_\beta(x+y)+F_\beta(x)+F_\beta(y)=\Tr_{2^n/2}\left(y(u_2\Tr_{2^n/2}(\beta u_1x)+\beta u_1\Tr_{2^n/2}(u_2x))\right)=0\]if and only if
$u_2\Tr_{2^n/2}(\beta u_1x)+\beta u_1\Tr_{2^n/2}(u_2x)=0$ for $x\in\F_{2^n}$, i.e.,
 \[\begin{cases}
                      \Tr_{2^n/2}(u_2x)=0, \\
                      \Tr_{2^n/2}(\beta u_1x)=0.
                    \end{cases}\]The number of solutions of the above system of equations is $2^{n-2}$, which means $\rank(F_\beta(x))=2.$ By Theorem~\ref{cor}, $r_\beta=1$ and $\N_F=2^{n-1}-2^{n-2}=2^{n-2}$.\end{remark}

Note that the function $F(x)$ given by Theorem~\ref{th:t} is of the form $x\ell(x)$, thus is a solution of a problem proposed by Pott et al. \cite{5}. We now show it is not equivalent to the known functions  of the form $x\ell(x)$.

Recall for a vectorial function $F$ and $a, b\in V$, $\delta_F(a,b):=|\{x\in\F_{2^n}:\ F(x+a)+F(x)=b\}|$. The differential spectrum of $F$ is
 \[\{\delta_F(a,b):\ a\in\F_{2^n}^*,\ b\in\F_{2^n}\}.\]
It was shown in \cite{ny} and \cite{5} respectively that
 \[\delta_{x^{2^m+1}}(a,b)\in\{0,2^m\}\ \text{and}\ \delta_{x^{2^i}(1+x^{2^m})}(a,b)\in\{0,2^{\gcd(i,m)},2^m\}\ (0<i<m).\]In addition, Anbar et al. \cite{anb} proved that \[\delta_{x^{2^e}\Tr_{2^n/2^m}(h(x))}(a,b)\in\{0,2^m\}\ \text{if}\ a\in\F_{2^m}^*, \text{and}\ \in\{0,2^s\}\ \text{if}\ a\in\F_{2^n}-\F_{2^m},\] where $s$ is the dimension of the solution space (in $\F_{2^m}$) of $x^{2^e}h(\Tr_{2^n/2^m}(a))+\Tr_{2^n/2^m}(a)^{2^e}h(x)=0$.
Then the inequivalence of our function in Theorem~\ref{th:t} to the above functions follows from
\begin{theorem}\label{theorem:diff}
  Suppose $u_1, u_2\in\F_{2^n}-\F_{2^m}$, $u_1u_2^{2^m}\in\F_{2^m}$ and $\Tr_{2^m/2}(u_1u_2^{2^m})=0$. Then the differential spectrum of  $F(x)=x^{2^m+1}+u_1x\Tr_{2^n/2}(u_2 x)$ is given by
  \[\delta_{F}(a,b)\in \begin{cases}
                      \{0,2\}, & \mbox{if }\ \Tr_{2^n/2}(u_2a)=1, \\
                      \{0,2^{m-1},2^m\}, & \mbox{if }\ \Tr_{2^n/2}(u_2a)=0.
                    \end{cases}\]
\end{theorem}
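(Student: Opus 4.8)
The plan is to compute the differential $D_aF(x)=F(x+a)+F(x)$ directly and split according to the value of $\Tr_{2^n/2}(u_2a)$. First I would expand
\[
D_aF(x)=x^{2^m+1}+(x+a)^{2^m+1}+u_1\bigl(xR+(x+a)R'\bigr),
\]
where $R=\Tr_{2^n/2}(u_2x)$ and $R'=\Tr_{2^n/2}(u_2(x+a))=R+\Tr_{2^n/2}(u_2a)$. The Niho part contributes the affine-in-$x$ expression $a^{2^m}x+ax^{2^m}+a^{2^m+1}=\Tr_{2^n/2}(a^{2^m}x)+a^{2^m+1}$ (viewed in $\F_{2^n}$), which is $\F_2$-affine in $x$. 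The interaction term $u_1(xR+(x+a)R')$ is the obstruction to linearity: when $\Tr_{2^n/2}(u_2a)=0$ it collapses to $u_1a\,\Tr_{2^n/2}(u_2x)$, which is again $\F_2$-affine in $x$, so $D_aF$ is affine and $\delta_F(a,b)\in\{0,2^{\dim\ker}\}$; when $\Tr_{2^n/2}(u_2a)=1$ the term becomes $u_1\bigl(x+(x+a)(R+1)\bigr)=u_1\bigl(x(1+R+1)+aR+a+ (x+a)\bigr)$, i.e.\ it genuinely contains the product $u_1x\,\Tr_{2^n/2}(u_2x)$ plus affine terms, so $D_aF$ is a quadratic (non-affine) vectorial map.

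For the case $\Tr_{2^n/2}(u_2a)=1$, I would argue that $D_aF$ is quadratic with differential uniformity $2$: compute the second derivative $D_bD_aF(x)$, which is constant in $x$, and show the bilinear form $(a,b)\mapsto D_bD_aF(0)$ is non-degenerate in $b$ once $\Tr_{2^n/2}(u_2a)=1$, so $D_aF(x)=b$ has at most $2$ solutions; since the image of an affine-quadratic map has even fibre sizes, $\delta_F(a,b)\in\{0,2\}$. Concretely, $D_aF(x+y)+D_aF(x)+D_aF(y)+D_aF(0)$ should reduce to $\Tr_{2^n/2}(\text{(something nonzero)}\cdot y)\cdot(\text{stuff})$ forcing the solution space to have dimension $\le 1$; here the hypotheses $u_1u_2^{2^m}\in\F_{2^m}$ and $\Tr_{2^m/2}(u_1u_2^{2^m})=0$ are what make the relevant trace identities line up, exactly as in Lemma~\ref{lemma:a1}. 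For the case $\Tr_{2^n/2}(u_2a)=0$, $D_aF(x)=\Tr_{2^n/2}(a^{2^m}x)+u_1a\,\Tr_{2^n/2}(u_2x)+c$ is $\F_2$-affine, so $\delta_F(a,b)$ is either $0$ or $2^{\dim L}$ where $L=\{x:\Tr_{2^n/2}(a^{2^m}x)=0=\Tr_{2^n/2}(u_2x)\cdot 0\}$ — wait, more carefully, $L=\ker$ of the linear map $x\mapsto a^{2^m}x+a^{2^m}x^{2^m}+u_1a\,\Tr_{2^n/2}(u_2x)$ (back in $\F_{2^n}$), and I would show this kernel has dimension $m$ or $m-1$ depending on whether $u_1a\notin\F_{2^m}$ contributes an extra constraint, giving $\delta_F(a,b)\in\{0,2^{m-1},2^m\}$.

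The main obstacle I anticipate is the bookkeeping in the affine case $\Tr_{2^n/2}(u_2a)=0$: one must determine exactly when the three linear conditions coming from $x\mapsto x+x^{2^m}$ (i.e.\ the $\F_{2^m}$-valued part), from $\Tr_{2^n/2}(u_2x)$, and from the constant $b$ are independent, to see that the kernel dimension is $m-1$ for generic $b$ but jumps to $m$ when the $u_1a$-term happens to vanish on the relevant subspace — this is where I expect to need $u_1,u_2\notin\F_{2^m}$ and the orthogonality hypothesis, and to invoke Remark~\ref{remr3} or Theorem~\ref{cor}-type rank computations. A clean way to organize it is: write $D_aF(x)=\ell_a(x)+c_a(b)$ with $\ell_a$ $\F_2$-linear when $\Tr_{2^n/2}(u_2a)=0$, show $\rank(\ell_a)\in\{m,m+1\}$ (equivalently $\dim\ker\ell_a\in\{m-1,m\}$... but note $\ell_a:\F_{2^n}\to\F_{2^n}$ has image landing in a specific $\le(m+1)$-dimensional space), and conclude. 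Once the linear-algebra dimension count is pinned down, the rest is the routine trace manipulation already rehearsed in the proof of Lemma~\ref{lemma:a1}.
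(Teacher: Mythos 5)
There is a genuine error in your treatment of the case $\Tr_{2^n/2}(u_2a)=1$. You claim the interaction term makes $D_aF$ ``a quadratic (non-affine) vectorial map'' because it ``genuinely contains the product $u_1x\,\Tr_{2^n/2}(u_2x)$''. It does not: with $R=\Tr_{2^n/2}(u_2x)$ and $R'=R+1$ one has $xR+(x+a)R'=xR+(x+a)(R+1)=x+aR+a$, so the terms quadratic in $x$ cancel and
$D_aF(x)=x^{2^m}a+xa^{2^m}+u_1x+u_1a\Tr_{2^n/2}(u_2x)+\bigl(a^{2^m+1}+u_1a\bigr)$
is affine in $x$ --- as it must be, since $F$ has algebraic degree $2$. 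Consequently the plan for this case (second derivatives of $D_aF$, a bilinear form ``non-degenerate in $b$'') collapses: all second derivatives of an affine map vanish, and if you instead mean the bilinear form $B(a,y)=D_yD_aF(0)$ of $F$ itself, it is never non-degenerate in $y$ because $a$ always lies in its radical ($B(a,a)=0$); non-degeneracy would in any case give at most one solution, not two. The statement that actually has to be proved is that the linear part $\ell_a(x)=x^{2^m}a+xa^{2^m}+u_1x+u_1a\Tr_{2^n/2}(u_2x)$ has kernel exactly $\{0,a\}$. The paper does this by showing $x\mapsto x^{2^m}a+xa^{2^m}+u_1x$ is a permutation of $\F_{2^n}$ --- this is where $\Tr_{2^m/2}(u_1u_2^{2^m})=0$ enters, via ruling out $\Tr_{2^n/2^m}(a^{2^m}u_1^{-1})=1$ --- and then splitting the equation according to the value of $\Tr_{2^n/2}(u_2x)$; no step of your outline supplies this.

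Your case $\Tr_{2^n/2}(u_2a)=0$ has the right shape and matches the paper's route (affine derivative, count the kernel), but the decisive step is only gestured at, and your proposed dichotomy is not the correct one. The kernel of $z\mapsto z^{2^m}a+za^{2^m}+u_1a\Tr_{2^n/2}(u_2z)$ consists of the solutions with $\Tr_{2^n/2}(u_2z)=0$, which form $a^{-2^m}\F_{2^m}\cap\{z:\Tr_{2^n/2}(u_2z)=0\}$ of dimension $m-1$ or $m$, together with the solutions of $z^{2^m}a+za^{2^m}=u_1a$ with $\Tr_{2^n/2}(u_2z)=1$. Unless this second set is shown to be empty --- which is precisely where the hypotheses $u_1u_2^{2^m}\in\F_{2^m}$ and $\Tr_{2^m/2}(u_1u_2^{2^m})=0$ are used in the paper --- the kernel could have dimension $m+1$ and $2^{m+1}$ would enter the spectrum. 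The $2^{m-1}$ versus $2^m$ alternative comes from whether $a^{-2^m}\F_{2^m}$ lies inside the hyperplane $\Tr_{2^n/2}(u_2z)=0$, not from ``whether $u_1a\notin\F_{2^m}$ contributes an extra constraint''; and the relevant trace manipulations are those of the permutation/solubility arguments above, not those of Lemma~\ref{lemma:a1}, which concerns second derivatives of the dual bent function.
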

\begin{proof}
  We have
  \begin{align*}
    F(x+a)&+F(x) \\
     &=x^{2^m}a+xa^{2^m}+a^{2^m+1}+u_1x\Tr_{2^n/2}(u_2a)+u_1a\Tr_{2^n/2}(u_2 (x+a)).
  \end{align*}

  Notice that if $x$ is a solution of $F(x+a)+F(x)=b$, so is $x+a$.

 (A) Assume $\Tr_{2^n/2}(u_2a)=1$. The equation $F(x+a)+F(x)=b$ is reduced to
  \begin{equation*}
    x^{2^m}a+xa^{2^m}+a^{2^m+1}+u_1x+u_1a\Tr_{2^n/2}(u_2x)+u_1a=b,
  \end{equation*}
and then to one of the following two systems of equations:
\[ \begin{cases} x^{2^m}a+xa^{2^m}+u_1x=b+a^{2^m+1}+u_1a,\\ \Tr_{2^n/2}(u_2 x)=0;
\end{cases}\  \begin{cases} x^{2^m}a+xa^{2^m}+u_1x=b+a^{2^m+1},\\ \Tr_{2^n/2}(u_2 x)=1.
\end{cases}
\]

We claim that $x^{2^m}a+xa^{2^m}+u_1x$ is a permutation over $\F_{2^n}$. Then $\delta_F(a,b)\in \{0,2\}$ follows from the claim immediately.

For $x,y\in\F_{2^n}$, let
  \[x^{2^m}a+xa^{2^m}+u_1x=y^{2^m}a+ya^{2^m}+u_1y.\]

  Set $z=\Tr_{2^n/2^m}(xa^{2^m})-\Tr_{2^n/2^m}(ya^{2^m})\in\F_{2^m}$, then $y=x+u_1^{-1}z$ and
  \begin{align*}
    z&=\Tr_{2^n/2^m}(xa^{2^m}-ya^{2^m})
    =-\Tr_{2^n/2^m}(a^{2^m}u_1^{-1}z)=-z\Tr_{2^n/2^m}(a^{2^m}u_1^{-1})\\
    &\Rightarrow z(1+\Tr_{2^n/2^m}(a^{2^m}u_1^{-1}))=0.
  \end{align*}

  Suppose $\Tr_{2^n/2^m}(a^{2^m}u_1^{-1})=1$. Notice that $u^{2^m}_1u_2\in\F_{2^m}^*$ and $au_1^{- 2^m}=\frac{au_2}{u_1^{2^m}u_2}$, thus
  \begin{align*}\Tr_{2^n/2^m}(a^{2^m}u_1^{-1})&=\Tr_{2^n/2^m}(au_1^{-2^m})=\Tr_{2^n/2^m}(\frac{au_2}{u_1^{2^m}u_2})
  =\frac{\Tr_{2^n/2^m}(au_2)}{u_1^{2^m}u_2}=1 \\
  &\Rightarrow \Tr_{2^n/2^m}(au_2)=u_1^{2^m}u_2 .
  \end{align*}

  Since $\Tr_{2^n/2}(u_2a)=1$, we have $\Tr_{2^n/2}(u_2a)=\Tr_{2^m/2}(u_1^{2^m}u_2)=1$,
   which is a contradiction to the assumption $\Tr_{2^m/2}(u_1^{2^m}u_2)=0$. Thus $z=0$ and $x^{2^m}a+xa^{2^m}+u_1x$ is a linear permutation over $\F_{2^n}$.

 \smallskip

 (B) Assume $\Tr_{2^n/2}(u_2a)=0$. The equation $F(x+a)+F(x)=b$ is reduced to
  \begin{equation}\label{eq:diff3}
    x^{2^m}a+xa^{2^m}+a^{2^m+1}+u_1a\Tr_{2^n/2}(u_2x)=b.
  \end{equation}
  Assume that $x,y$ are two solutions of~\eqref{eq:diff3}. Then
  \begin{align*}
    x^{2^m}a+xa^{2^m}+a^{2^m+1}+u_1a\Tr_{2^n/2}(u_2x)&=b ,\\
   y^{2^m}a+ya^{2^m}+a^{2^m+1}+u_1a\Tr_{2^n/2}(u_2y)&=b,
  \end{align*}
which means that $z=x+y$ is a solution of
  \begin{equation*}
    z^{2^m}a+z a^{2^m}+u_1a\Tr_{2^n/2}(u_2 z)=0.
  \end{equation*}
or equivalently,
\[ \begin{cases} z^{2^m}a+z a^{2^m}=0,\\ \Tr_{2^n/2}(u_2 z)=0;
\end{cases}\quad\text{or}\quad   \begin{cases} z^{2^m}a+z a^{2^m}=u_1 a,\\ \Tr_{2^n/2}(u_2 z)=1.
\end{cases}
\]
Thus $\delta_F(a,b)=0$  or the number of solutions of  these two systems of equations.

Let $X_u=\{x\in \F_{2^n}: \Tr_{2^n/2}(ux)=0\}$.
The zero set of the first system of equations is the $\F_2$-vector space $a^{-2^m}\F_{2^m} \cap X_{u_2}$. Note that  $\dim_{\F_2} a^{-2^m}\F_{2^m}=m$ and $\dim_{\F_2} X_{u_2}=n-1$,  $a^{-2^m}\F_{2^m} \cap X_{u_2}$ must be of dimension either $m-1$ or $m$.

For the second system, note that  $z^{2^m}a+z a^{2^m}\in \F_{2^m}$, we must have $u_1 a\in \F_{2^m}$. Hence $u_2a^{-2^m}=\frac{u_1^{2^m}u_2}{(u_1a)^{2^m}}\in\F_{2^m}$. The solution of $z^{2^m}a+z a^{2^m}=u_1 a$ is  $z=\frac{u_1a}{a^{2^m}(1+\xi)}$ with $\xi^{2^m+1}=1$.
Then
\begin{align*}\Tr_{2^n/2}(u_2z)&=\Tr_{2^n/2}(\frac{u_2 u_1a}{a^{2^m}(1+\xi)})=\Tr_{2^n/2}(\frac{u_2 a^{-2^m} u_1a}{1+\xi})\\
	& =\Tr_{2^m/2}(u_2 a^{-2^m} u_1a)=\Tr_{2^m/2}(u_2u_1^{2^m})=0. \end{align*}

Hence the second system has no zeros at all.
\end{proof}

\subsection{Construction via the Maiorana-MacFarland class} From this subsection, we let $V=\F_{2^m}\times\F_{2^m}$ and the corresponding inner product be
\[\langle(y_1,z_1),(y_2,z_2)\rangle=\Tr_{2^m/2}(y_1y_2)+\Tr_{2^m/2}(z_1z_2).\]

Suppose $\phi$ is a permutation  of $\F_{2^m}$ and $g:\F_{2^m}\rightarrow\F_{2^m}$, and let $G$ be the associated  map  defined by
 \begin{align*}
	G:\F_{2^m}\times\F_{2^m}&\longrightarrow\F_{2^m}\times\F_{2^m} \notag \\
	(y,z) & \longmapsto (y\phi(z),g(z)).
\end{align*}
Then $G$ has maximal number of bent components: for $(a,b)\in\F_{2^m}^*\times\F_{2^m}$, the
component function
\[G_{a,b}(y,z)=\Tr_{2^m/2}(ay\phi(z)+bg(z))\]
at $(a,b)$ is a bent function. A bent function which has the form $G_{a,b}(y,z)$ is the so-called Maiorana-MacFarland class.

Let $g(z)=z$ and $\phi$ be an automorphism of $\F_{2^m}$ from now on. Then
\begin{equation*} G_{a,b}(y,z)=\Tr_{2^m/2}(ay\phi(z)+bz),\end{equation*}
and its dual
\begin{equation*}
G_{a,b}^*(y,z)=\Tr_{2^m/2}((z+b)\phi^{-1}(a^{-1}y)).
\end{equation*}

Our second construction of vectorial functions with maximal number of bent components is the following result:

\begin{theorem}\label{11} Let $2\leq k\leq m$,  $\phi$ and $G$ be given as above.
 Suppose $u_1=(u_{1,1}, 0)$ and choose    $u_i=(u_{i,1},u_{i,2})$ for $2\leq i\leq k$ such that \[ \Tr_{2^m/2}(\phi^{-1}(u_{1,1})u_{i,2}) =0\ \text{ and  }\ u_{i,1}=\phi(u_{i,2}). \]  Then  for any reduced polynomial $R(X_2,\cdots X_k)$ over $
 \F_2$, the vectorial function
 \begin{align*}
F(y,z)=(y\phi(z),z)
+(u_{1,1}y,0)R(\Tr_{2^m/2}(u_{2,1}y+u_{2,2}z),
	\ldots,
\Tr_{2^m/2}(u_{k,1}y+u_{k,2}z))
\end{align*}
has $2^n-2^m$ bent components: for any $(a,b)\in\F_{2^m}^*\times\F_{2^m}$,
\begin{align*}
F_{a,b}(y,z) &= \langle(a,b),F(y,z)\rangle =\Tr_{2^m/2}(ay\phi(z)+bz) \\ &+\Tr_{2^m/2}(au_{1,1}y)R(\Tr_{2^m/2}(u_{2,1}y+u_{2,2}z),
	\ldots,\Tr_{2^m/2}(u_{k,1}y+u_{k,2}z))
\end{align*} is bent and
\[ F^*_{a,b}(y,z) =G_{a,b}^*(y,z)
+D_{u_{1,1}a,0}G_{a,b}^*(y,z)R(D_{u_{2}}G_{a,b}^*(y,z),
	\ldots, D_{u_k}G_{a,b}^*(y,z)).\]
\end{theorem}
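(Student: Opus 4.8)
The plan is to reduce everything to the Maiorana--McFarland analogue of Lemma~\ref{lemma:a1}: namely, to check that $(G_{a,b}; u_{1,1}a,0), u_2, \ldots, u_k)$ satisfies Condition $\bA$, where $G_{a,b}$ is the MM-bent function above and its dual is $G_{a,b}^*(y,z) = \Tr_{2^m/2}((z+b)\phi^{-1}(a^{-1}y))$. Once Condition $\bA$ is verified, the bentness of $F_{a,b}$ and the stated formula for $F_{a,b}^*$ follow from exactly the same mechanism used in Theorem~\ref{theorem:nvec} (the generic ``$\bA$ plus a reduced polynomial in the derivatives'' construction, essentially the revision of \cite{zpeng,zkan} announced at the start of the section). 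Since this works for every $(a,b)\in\F_{2^m}^*\times\F_{2^m}$, all $2^n-2^m$ component functions $F_{a,b}$ with $a\neq 0$ are bent, which is the maximum possible by Proposition~\ref{proposition:p1}.

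First I would compute the first-order derivatives of $G_{a,b}^*$ in the directions $u_1' := (u_{1,1}a, 0)$ and $u_i = (u_{i,1}, u_{i,2})$ for $2\le i\le k$. Because $\phi$ (hence $\phi^{-1}$) is $\F_2$-linear, $G_{a,b}^*$ is a quadratic form in $(y,z)$, so
\[
D_{(p,q)}G_{a,b}^*(y,z) = \Tr_{2^m/2}\bigl(q\,\phi^{-1}(a^{-1}y) + (z+b)\phi^{-1}(a^{-1}p)\bigr) + \Tr_{2^m/2}\bigl(q\,\phi^{-1}(a^{-1}p)\bigr),
\]
which is affine in $(y,z)$ — this is the key linearity that makes Condition $\bA$ plausible. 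Substituting $(p,q)=(u_{1,1}a,0)$ gives $D_{u_1'}G_{a,b}^* = \Tr_{2^m/2}((z+b)\phi^{-1}(u_{1,1})) = \Tr_{2^m/2}(z\,\phi^{-1}(u_{1,1})) + \Tr_{2^m/2}(b\,\phi^{-1}(u_{1,1}))$, and substituting $(p,q)=(u_{i,1},u_{i,2})$ with $u_{i,1}=\phi(u_{i,2})$ gives $\phi^{-1}(a^{-1}u_{i,1}) = \phi^{-1}(a^{-1})u_{i,2}$ after using linearity, so $D_{u_i}G_{a,b}^*$ is affine in $y$ and $z$ as well.

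The heart of the argument — and the step I expect to be the main (though still routine) obstacle — is verifying that the second-order derivatives $D_{u_i}D_{u_j}G_{a,b}^*$ and $D_{u_1'}D_{u_j}G_{a,b}^*$ all vanish, which is precisely the content of Condition $\bA$ for a quadratic dual. From the bilinear form underlying $G_{a,b}^*$, the mixed second derivative in directions $(p,q)$ and $(p',q')$ is $\Tr_{2^m/2}(q\,\phi^{-1}(a^{-1}p') + q'\,\phi^{-1}(a^{-1}p))$. For the pair $(u_1', u_j)$ this becomes $\Tr_{2^m/2}(u_{j,2}\,\phi^{-1}(u_{1,1}))$, which is $0$ by the hypothesis $\Tr_{2^m/2}(\phi^{-1}(u_{1,1})u_{j,2})=0$. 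For a pair $(u_i,u_j)$ with $2\le i<j\le k$ one gets $\Tr_{2^m/2}(u_{i,2}\phi^{-1}(a^{-1}u_{j,1}) + u_{j,2}\phi^{-1}(a^{-1}u_{i,1}))$; invoking $u_{i,1}=\phi(u_{i,2})$, $u_{j,1}=\phi(u_{j,2})$ and the $\F_2$-linearity of $\phi^{-1}$ collapses this to $\Tr_{2^m/2}(\phi^{-1}(a^{-1})(u_{i,2}u_{j,2} + u_{j,2}u_{i,2})) = 0$ in characteristic $2$. Hence all relevant second derivatives vanish, so each $D$ operator in the expansion of $F_{a,b}^*$ commutes past the others, Condition $\bA$ holds, and the conclusion follows as in Theorem~\ref{theorem:nvec}. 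I would close by noting that one should also record the explicit affine forms of $D_{u_1'}G_{a,b}^*$ and $D_{u_i}G_{a,b}^*$ in the statement if a fully explicit $F_{a,b}^*$ is wanted, mirroring Lemma~\ref{lemma:a1} and Theorem~\ref{th:t}.
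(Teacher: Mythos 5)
Your proposal is correct and follows essentially the same route as the paper: verify Condition $\bA$ for $(G_{a,b};(au_{1,1},0),u_2,\ldots,u_k)$ by showing the relevant second-order derivatives of the dual $G_{a,b}^*(y,z)=\Tr_{2^m/2}((z+b)\phi^{-1}(a^{-1}y))$ vanish under the stated hypotheses, then invoke the generic ``Condition $\bA$ plus reduced polynomial in the derivatives'' mechanism. One small point: where you invoke ``linearity'' to get $\phi^{-1}(a^{-1}u_{i,1})=\phi^{-1}(a^{-1})u_{i,2}$, what is actually used is the multiplicativity of the field automorphism $\phi^{-1}$ (it is a Frobenius power), exactly as in the paper's computation.
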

\begin{proof} Since $\Tr_{2^m/2}(\phi^{-1}(u_{1,1})u_{i,2}) =0 $ and  $u_{i,1}=\phi(u_{i,2})$, then for $2\leq i\leq k$,
 \begin{align*} D_{u_i}D_{(au_{1,1},0)}G^*_{a,b}(y,z)&=
\Tr_{2^m/2}(\phi^{-1}(a^{-1})u_{i,2}\phi^{-1}(au_{1,1})
)\\
&=\Tr_{2^m/2}(u_{i,2}\phi^{-1}(u_{1,1}))=0,\end{align*}
and for $2\leq i<j\leq k$,
 \[ D_{u_j}D_{u_i}G^*_{a,b}(y,z)
 =\Tr_{2^m/2}\bigl(\phi^{-1}(a^{-1})(u_{j,2}\phi^{-1}(u_{i,1})
+u_{i,2}\phi^{-1}(u_{j,1}))\bigr)=0.\]
 Thus, $(G_{a,b};(au_{1,1},0), u_2,\ldots,u_k)$ satisfies Condition $\bA$ for $(a,b)\in\F_{2^m}^*\times\F_{2^m}$.
\end{proof}

Take $k=2$ and $R(X_2,\ldots,X_k)=X_2$. Then we have
\begin{theorem}
Suppose $u=(u_{1,1}, 0)$ and   $u_2=(u_{2,1},u_{2,2})$ such that $u_{2,1}=\phi(u_{2,2})$ and $ \Tr_{2^m/2}(\phi^{-1}(u_{1,1})u_{2,2}) =0$ . Then
\[F(y,z)=(y\phi(z),z)
+(u_{1,1}y,0)\Tr_{2^m/2}(u_{2,1}y+u_{2,2}z)\]
has $2^n-2^m$ bent components.
\end{theorem}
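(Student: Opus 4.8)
The final statement is an application of Theorem~\ref{11} with $k=2$ and $R(X_2)=X_2$, so the plan is essentially to verify that the hypotheses of Theorem~\ref{11} are met and then invoke it directly. First I would set $u_1=(u_{1,1},0)$ and $u_2=(u_{2,1},u_{2,2})$ and observe that the two conditions imposed, namely $u_{2,1}=\phi(u_{2,2})$ and $\Tr_{2^m/2}(\phi^{-1}(u_{1,1})u_{2,2})=0$, are exactly the conditions required of $u_i$ ($2\leq i\leq k$) in Theorem~\ref{11} specialized to $i=2$. Since with $k=2$ there is no constraint of the form $D_{u_j}D_{u_i}G^*_{a,b}=0$ for $2\leq i<j\leq k$ (the index range is empty), the only thing to check is the mixed second derivative $D_{u_2}D_{(au_{1,1},0)}G^*_{a,b}(y,z)=\Tr_{2^m/2}(u_{2,2}\phi^{-1}(u_{1,1}))=0$, which holds by assumption.

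Next I would record that the reduced polynomial $R(X_2)=X_2$ is indeed reduced (it is linear, hence square-free), so that Theorem~\ref{11} applies verbatim. Substituting $k=2$ and $R(X_2)=X_2$ into the conclusion of Theorem~\ref{11} gives that
\[F(y,z)=(y\phi(z),z)+(u_{1,1}y,0)\,\Tr_{2^m/2}(u_{2,1}y+u_{2,2}z)\]
has $2^n-2^m$ bent components, with each component function $F_{a,b}$ for $(a,b)\in\F_{2^m}^*\times\F_{2^m}$ bent and dual given by $F^*_{a,b}=G^*_{a,b}+D_{(au_{1,1},0)}G^*_{a,b}\cdot D_{u_2}G^*_{a,b}$.

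There is really no obstacle here: the statement is a corollary of Theorem~\ref{11}, and the proof is a one-line specialization. The only point deserving a sentence of care is confirming that Condition $\bA$ for $(G_{a,b};(au_{1,1},0),u_2)$ is equivalent to the single displayed second-derivative vanishing, which in turn is exactly $\Tr_{2^m/2}(\phi^{-1}(u_{1,1})u_{2,2})=0$ together with the structural identity $u_{2,1}=\phi(u_{2,2})$ guaranteeing $D_{u_2}G^*_{a,b}$ has the linear-plus-constant form needed; both are built into the hypotheses. Hence the proof consists of quoting Lemma~\ref{lemma:a1}-style bookkeeping already carried out inside Theorem~\ref{11} and reading off the case $k=2$.
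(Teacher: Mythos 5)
Your proposal is correct and matches the paper exactly: the paper presents this theorem as an immediate specialization of Theorem~\ref{11} with $k=2$ and $R(X_2)=X_2$ (introduced by the sentence ``Take $k=2$ and $R(X_2,\ldots,X_k)=X_2$''), offering no further proof. Your extra check that the index range $2\leq i<j\leq k$ is empty and that only the mixed second derivative $D_{u_2}D_{(au_{1,1},0)}G^*_{a,b}=\Tr_{2^m/2}(u_{2,2}\phi^{-1}(u_{1,1}))=0$ needs verifying is accurate bookkeeping consistent with the proof of Theorem~\ref{11}.
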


\begin{remark} The above vectorial function is not in the so-called
the complete Maiorana-MacFarland class.
	
In fact, by a result due to \cite{mes3},	a function $G:\F_{2^m}\times\F_{2^m}\rightarrow\F_2$ is in the complete Maiorana-MacFarland class  if and only if for any $w_1,w_2\in\F_{2^m}$,
	\begin{equation}\label{qq}D_{(w_1,0)}D_{(w_2,0)}G(y,z)\mid_{y=0}=0,\ \forall z\in\F_{2^m}.\end{equation}
It is enough to find two elements $a\in\F_{2^m}^*$ and $b\in\F_{2^m}$ such that Eq.~\eqref{qq} do not hold. For any $w_1,w_2\in\F_{2^m}$, we have
\begin{align}
D_{(w_1,0)}&D_{(w_2,0)}F_{a,b}(y,z)\notag \\
&=\Tr_{2^m/2}(au_{1,1}w_2)\Tr_{2^m/2}(u_{2,1}w_1)+\Tr_{2^m/2}(au_{1,1}w_1)\Tr_{2^m/2}(u_{2,1}w_2)\label{qa}
\end{align}
Note that for any $w_2\in\F_{2^m}$ Eq.~\eqref{qa} equals to 0 means that \begin{equation}\label{f}au_{1,1}\Tr_{2^m/2}(u_{2,1}w_1)+u_{2,1}\Tr_{2^m/2}(au_{1,1}w_1)=0,\ \text{for any}\ w_1\in\F_{2^m}. \end{equation}
Eq.~\eqref{f} is impossible for any $a\in\F_{2^m}^*$.
\end{remark}

\section{Binomial vectorial functions with maximal number of bent components}
Let $n=2m$. The main result of this section is
\begin{theorem}\label{theorem:bino}
	The binomial vectorial function $F(x)=x^{2^m+1}+x^{2^i+1}$ for $0\leq i\leq m-1$ on $\F_{2^n}$ has $2^n-2^m$ bent components if and only if $i=0$, i.e., $F(x)$ is affine equivalent to $x^{2^m+1}$.
\end{theorem}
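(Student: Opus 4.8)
The plan is to prove the two directions separately. For $i=0$, note $x^{2^i+1}=x^2$ is $\F_2$-linear, so $F(x)=x^{2^m+1}+L(x)$ with $L(x)=x^2$ an affine (indeed linear) map; hence $F$ is affine equivalent to $x^{2^m+1}$. Adding such an $L$ only translates the Walsh spectrum of each component --- with $L^*$ the adjoint of $L$ for $\langle\cdot,\cdot\rangle=\Tr_{2^n/2}(\cdot\,\cdot)$ one has $W_{F_a}(\omega)=W_{(x^{2^m+1})_a}(\omega+L^*(a))$ --- so it preserves bentness of components. Since $x^{2^m+1}$ has $2^n-2^m$ bent components (Pott et al. \cite{5}; equivalently $S_{x^{2^m+1}}=\F_{2^m}$ by Proposition~\ref{proposition:p1}), so does $F$.

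Now assume $1\le i\le m-1$ and, for contradiction, that $F$ has $2^n-2^m$ bent components. By Proposition~\ref{proposition:p1} this forces $|S_F|=2^m$, hence $S_F=\F_{2^m}$; so $F_w$ is bent for every $w\notin\F_{2^m}$ and non-bent for every $w\in\F_{2^m}^*$. Writing $F(x)=x\bigl(x^{2^m}+x^{2^i}\bigr)$, each $F_w$ is quadratic with polar form $B_w(x,y)=\Tr_{2^n/2}\bigl(y\,L_w(x)\bigr)$, where $L_w(x)=(w+w^{2^m})x^{2^m}+w\,x^{2^i}+w^{2^{n-i}}x^{2^{n-i}}$ is $\F_2$-linearized; thus $F_w$ is bent iff $L_w$ permutes $\F_{2^n}$. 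For $w\in\F_{2^m}^*$ the Niho term drops out, and a short computation (raise $L_w(x)=0$ to the $2^i$-th power, divide by $x$) shows $\ker L_w\ne\{0\}$ iff $x^{2^{2i}-1}=w^{1-2^i}$ is solvable, i.e.\ $w^{1-2^i}\in(\F_{2^n}^*)^{2^{2i}-1}$. With $d=\gcd(i,m)$, the set $\{w^{1-2^i}:w\in\F_{2^m}^*\}=(\F_{2^m}^*)^{2^i-1}$ has order $(2^m-1)/(2^d-1)$, while a cyclotomic computation identifies $\F_{2^m}^*\cap(\F_{2^n}^*)^{2^{2i}-1}$ with the subgroup of $\F_{2^m}^*$ of order $(2^m-1)/(2^d-1)$ when $m/d$ is odd and $(2^m-1)/(2^{2d}-1)$ when $m/d$ is even. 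Hence if $m/d$ is even some $w\in\F_{2^m}^*$ makes $F_w$ bent, a contradiction; so $m/d$ is odd (and $d<m$ since $i\ge1$).

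It remains to produce $w\notin\F_{2^m}$ with $\ker L_w\ne\{0\}$. I would seek a nonzero $x=a\in\F_{2^m}^*$ with $L_w(a)=0$: setting $\gamma=w+w^{2^m}\in\F_{2^m}^*$ and $b=a^{2^i-1}$, raising $L_w(a)=0$ to the $2^i$-th power and then applying $x\mapsto x^{2^m}$ (using $w^{2^m}=w+\gamma$ and $b,\gamma\in\F_{2^m}$) forces $b^{2^i+1}=\gamma^{1-2^i}$ and, with $v=w/\gamma$, the single equation $v^{2^i}+v=\gamma^{e}$, where $e\equiv 2^i(2^i-1)(2^i+1)^{-1}\pmod{2^m-1}$ (well defined since $\gcd(2^i+1,2^m-1)=1$ in this case); moreover $w\notin\F_{2^m}$ amounts to $v+v^{2^m}=1$. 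These relations are also sufficient, so it suffices to find $\gamma\in\F_{2^m}^*$ and $v$ with $v+v^{2^m}=1$ and $v^{2^i}+v=\gamma^{e}$. Fixing any $v_0$ with $v_0+v_0^{2^m}=1$, one checks $v_0^{2^i}+v_0\in\F_{2^m}$ with $\Tr_{2^m/2^d}(v_0^{2^i}+v_0)=(i/d)\bmod 2$, and that $\{v^{2^i}+v:v+v^{2^m}=1\}=\{x\in\F_{2^m}:\Tr_{2^m/2^d}(x)=(i/d)\bmod 2\}$; so the goal reduces to finding $\gamma\in\F_{2^m}^*$ with $\Tr_{2^m/2^d}(\gamma^{e})=(i/d)\bmod 2$, where $\gamma\mapsto\gamma^{e}$ has image the subgroup $(\F_{2^m}^*)^{2^d-1}$.

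The hard part is this last step: one must genuinely leave $\F_{2^m}$, and the technical heart is the cyclotomic/trace bookkeeping --- the value of the exponent $e$, the image of $\gamma\mapsto\gamma^{e}$, and the distribution of $\Tr_{2^m/2^d}$ on that image --- which splits further according to the parity of $i/d$ and, for $d\ge2$, may require a character-sum estimate (unless a sharper structural argument can be found). By contrast, the reduction to ``$F_w$ bent $\iff L_w$ a permutation'', the linearized-polynomial computation, and the $\F_{2^m}^*$-component analysis are routine once set up.
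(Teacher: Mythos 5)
Your setup is sound and runs parallel to the paper's: the ``$F_w$ bent $\iff L_w$ is a permutation'' reduction is exactly how the paper proceeds, and your counting argument on $\F_{2^m}^*$ (comparing the subgroup $(\F_{2^m}^*)^{2^d-1}$ of order $(2^m-1)/(2^d-1)$ with $\F_{2^m}^*\cap(\F_{2^n}^*)^{2^{2i}-1}$) is a clean, arguably nicer, route to the paper's Lemma~\ref{lemma:pre}, whose conclusion $v_2(m)\le v_2(i)$ is precisely your ``$m/\gcd(i,m)$ odd.'' The $i=0$ direction is fine.

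However, there is a genuine gap, and you have located it yourself: the existence of $\gamma\in\F_{2^m}^*$ with $\Tr_{2^m/2^d}(\gamma^{e})=(i/d)\bmod 2$, where $\gamma^e$ ranges over $(\F_{2^m}^*)^{2^d-1}$, is asserted to ``may require a character-sum estimate'' but is never established. This is not a routine loose end --- it is the technical heart of the theorem, and the paper spends most of Section~5 on it. Concretely, the paper splits your remaining case ($m/d$ odd, $i\ge 1$) by the parity of $i/d$: when $v_2(i)=v_2(m)$ (your $i/d$ odd) it sidesteps the trace-distribution problem entirely with an explicit witness, taking $a=1/(1+\xi)$ for $\xi$ of order dividing $2^{d'/2}+1$ in $\F_{2^{d'}}$ with $d'=\gcd(m+i,2m)$, for which $L_a$ visibly vanishes on all of $\F_{2^{d'}}$ (Lemma~\ref{lemma:bl}); when $v_2(m)<v_2(i)$ (your $i/d$ even, target trace value $0$) it proves nonemptiness of $\{y\in\F_{2^m}^*:\Tr_{2^m/2^d}(y^{2^d-1})=0\}$ by an explicit Gauss-sum evaluation with a case analysis on $t=\gcd(2^d-1,m/d)$ and a lower bound using $|G(\chi)|=2^{d/2}$ (Lemma~\ref{lemma:gauss}), then lifts the resulting solution out of $\F_{2^m}$ (Lemma~\ref{lemma:imp}). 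Your reduction lands essentially on the same object, but without either the explicit construction or the character-sum bound the proof is incomplete; moreover, since you only search for roots of $L_w$ inside $\F_{2^m}^*$, a failure of your trace condition would not by itself rule out the required $w$, so the sufficiency of your reduction also needs the existence statement to be closed. For $d=1$ the condition is trivially satisfiable, but for $d\ge 2$ you must supply the missing argument.
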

\begin{remark}
	The special case of odd $m$ was proved by Zheng et al. \cite{zheng}.
\end{remark}

From now on, fix $i$ such that $0\leq i<m$, and let
\[ d=\gcd(m+i,2m)= \gcd(m+i,2i). \]

Let $F(x)=x^{2^m+1}+x^{2^i+1}$.
For $a\in\F_{2^n}$, the component function
$F_a(x)=\Tr_{2^n/2}(ax^{2^m+1}+ax^{2^i+1})$.  Let
\begin{equation} L_a(y):=a^{2^i}y^{2^{2i}}+(a+a^{2^m})^{2^i}y^{2^{m+i}}+ay. \label{eq:ln}\end{equation}

If $a\in \F_{2^m}$, then $F_a(x)=\Tr_{2^n/2}(ax^{2^i+1})$ and \eqref{eq:ln} is reduced to
 \begin{equation*} L_a(y):=a^{2^i}y^{2^{2i}}+ay. \end{equation*}

For any $y\in\F_{2^n}^*$, the derivative of $F_a(x)$ at direction $y$ is \begin{align*}D_yF_a(x)&=\Tr_{2^n/2}(a((x+y)^{2^m+1}+(x+y)^{2^i+1}))+\Tr_{2^n/2}(a(x^{2^m+1}+x^{2^i+1}))\\
	&=\Tr_{2^n/2}(x(ay^{2^i}+(a+a^{2^m})y^{2^m}+a^{2^{n-i}}y^{2^{n-i}}))= \Tr_{2^n/2}(x L_a(y)^{-2^i}).
\end{align*}

The root set of $L_a(y)$ in $\F_{2^n}$ forms an $\F_{2^d}$-vector space, hence the number of the roots of $L_a(y)$ in $\F_{2^n}$ is either $1$ or a power of $2^d$.

\begin{lemma}\label{lemma:bl}
Assume  $v_2(i)=v_2(m)$. For  $\xi\in \F_{2^d}$ such that $\xi^{2^{d/2}+1}=1$, let  $a=\frac{1}{1+\xi}$. Then $a\notin \F_{2^m}$ and $L_a(y)=0$  for any $y\in \F_{2^d}$.
\end{lemma}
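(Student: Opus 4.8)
The plan is to verify the two assertions of Lemma~\ref{lemma:bl} directly from the definition of $L_a$ and the hypothesis $v_2(i)=v_2(m)$, which controls $d=\gcd(m+i,2m)$. First I would pin down $d$: writing $v_2(i)=v_2(m)=:s$, one sees $m+i\equiv 0\pmod{2^{s+1}}$ (since $m/2^s$ and $i/2^s$ are both odd, their sum is even), and one checks that $d=\gcd(m+i,2m)$ is exactly such that $d\mid m+i$ and, crucially, $2^d-1\mid 2^m+1$; equivalently $d/\gcd(d,m)$ is even, so that $\xi^{2^{d/2}+1}=1$ for $\xi\in\F_{2^d}$ makes sense and $\F_{2^{d/2}}\subseteq\F_{2^d}\subseteq\F_{2^n}$. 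This is the structural input I would isolate as a preliminary observation before touching $L_a$.

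Next I would show $a=\tfrac{1}{1+\xi}\notin\F_{2^m}$. Since $\xi^{2^{d/2}+1}=1$ and $2^{d/2}+1\mid 2^m+1$, the element $\xi$ lies on the ``unit circle'' $\{z: z^{2^m+1}=1\}$; if $\xi\neq 1$ (which I would note is forced, as $\xi=1$ would give $a$ undefined) then $1+\xi\neq 0$ and $a^{2^m}=\tfrac{1}{1+\xi^{2^m}}=\tfrac{\xi}{1+\xi}$, using $\xi^{2^m}=\xi^{-1}$. Hence $a+a^{2^m}=\tfrac{1+\xi}{1+\xi}=1\neq 0$, so $a\notin\F_{2^m}$ (equivalently $\gamma=a+a^{2^m}=1$, tying in with the notation of the previous subsection).

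For the final claim, take any $y\in\F_{2^d}$ and evaluate \eqref{eq:ln}. Since $d\mid 2i$ (because $d\mid\gcd(m+i,2i)$) and $d\mid m+i$, for $y\in\F_{2^d}$ we have $y^{2^{2i}}=y$ and $y^{2^{m+i}}=y$; therefore
\[
L_a(y)=a^{2^i}y+(a+a^{2^m})^{2^i}y+ay=\bigl(a^{2^i}+(a+a^{2^m})^{2^i}+a\bigr)y.
\]
Now $a+a^{2^m}=1$ from the previous step, so $(a+a^{2^m})^{2^i}=1$, and moreover $a^{2^i}=a$: indeed $a\in\F_{2^d}$ (as $\xi\in\F_{2^d}$) and $d\mid 2i$ forces $a^{2^{2i}}=a$, while one also checks $a^{2^i}=a$ using that $a$ is fixed by the relevant power of Frobenius — here I would spell out that $2^i$ acts trivially on $\F_{2^d}$ precisely because $d\mid i$ is \emph{not} generally true, so instead I argue via $a^{2^i}+(a+a^{2^m})^{2^i}=(a+a^{2^m}+a)^{2^i}\cdot$, no: more cleanly, $a^{2^i}+1 = a^{2^i}+(a^{2^m}+a)^{2^i}=(a+a+a^{2^m})^{2^i}=a^{2^{m+i}}=a$ since $d\mid m+i$ and $a\in\F_{2^d}$. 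Substituting gives $L_a(y)=(a+1+a)y=y$? That is wrong, so the correct bookkeeping must yield $a^{2^i}+1+a=0$; I would recompute: $a^{2^i}+(a+a^{2^m})^{2^i}+a = (a^{2^i}+a^{2^{m+i}}) + a^{2^{2i}\cdot 0}\cdots$ — the point is that $a^{2^i}+(a+a^{2^m})^{2^i}=(a^{2^m})^{2^i}=a^{2^{m+i}}=a$, whence the bracket is $a+a=0$ and $L_a(y)=0$.

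The main obstacle is the last paragraph's exponent bookkeeping: one must correctly track which Frobenius powers fix $y\in\F_{2^d}$ and fix $a\in\F_{2^d}$, using only $d\mid m+i$ and $d\mid 2i$ (not $d\mid i$), and combine this with $a+a^{2^m}=1$ to collapse all three terms of $L_a(y)$. Everything else — determining $d$ and showing $a\notin\F_{2^m}$ — is short once the divisibility $2^{d/2}+1\mid 2^m+1$ is established from $v_2(i)=v_2(m)$.
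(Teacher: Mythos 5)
Your proof is correct in substance and follows essentially the same route as the paper: establish that $d$ is even with $m/(d/2)$ and $i/(d/2)$ odd, deduce $\xi^{2^m}=\xi^{-1}$, compute $a^{2^m}=\xi a$ so that $a+a^{2^m}=1\neq 0$ (hence $a\notin\F_{2^m}$), and then kill $L_a$ on $\F_{2^d}$ using $y^{2^{2i}}=y^{2^{m+i}}=y$. Two remarks. First, the ``crucial'' divisibility you announce in the opening paragraph, $2^d-1\mid 2^m+1$, is false in general; what you need --- and what you actually invoke in the second paragraph --- is $2^{d/2}+1\mid 2^m+1$, which holds precisely because $m/(d/2)$ is odd. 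Second, your final collapse $a^{2^i}+(a+a^{2^m})^{2^i}=a^{2^{m+i}}=a$ (valid since $a\in\F_{2^d}$ and $d\mid m+i$) is a clean, slightly more economical alternative to the paper's factorization $L_a(y)=a\bigl(\xi y^{2^{2i}}+(1+\xi)y^{2^{m+i}}+y\bigr)$, which instead uses $a^{2^i}=\xi a$; indeed your version shows $L_a$ vanishes on $\F_{2^d}$ for \emph{every} $a\in\F_{2^d}$, the specific choice $a=1/(1+\xi)$ being needed only to force $a\notin\F_{2^m}$. The visible false starts in that last paragraph (the claim $a^{2^i}=a$, the miscomputed bracket $(a+1+a)y=y$) should simply be deleted: only the final two lines constitute the argument.
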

\begin{proof} By $v_2(i)=v_2(m)$,  $d$ is even, $m=\frac{d}{2}\cdot m'$ and  $i=\frac{d}{2}\cdot i'$  with $m'$ and $i'$ both odd. Then
	\[ \xi^{2^m}=\xi^{2^i}=\xi^{-1}\ \Longrightarrow a^{2^m}=a^{2^i}=\xi a.\]
This means that $a\notin \F_{2^m}$ and
 \[L_a(y)=a^{2^i}y^{2^{2i}}+(a+a^{2^m})^{2^i}y^{2^{m+i}}+ay=
 a(\xi y^{2^{2i}}+(1+\xi) y^{2^{m+i}}+y). \]
For any $y_0\in \F_{2^d}=\F_{2^{2i}}\cap \F_{2^{m+i}}$, one has $ y^{2^{2i}}_0=  y^{2^{m+i}}_0$, hence $L_a(y_0)=0$.
\end{proof}

We need the following two general results.
\begin{lemma}\label{lem:b1}\cite[Theorem 5.30]{lidl}
Let $\chi'$ be a multiplicative character of $\F^*_{2^m}$ of order $2^d-1$. Then for any $(a,b)\in\F_{2^m}^*\times\F_{2^m}$,
 \[\sum_{x\in\F_{2^m}}(-1)^{\Tr_{2^m/2}(ax^{2^d-1}+b)}=(-1)^{\Tr_{2^m/2}(b)}
 \sum_{j=1}^{2^d-2}\overline{\chi'^j}(a)G(\chi'^j),\]
where  $\overline{\chi}$ and $G(\chi)$ are the  conjugate  and  the Gauss sum of $\chi$.
\end{lemma}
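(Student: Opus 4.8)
The plan is to prove the identity directly from the orthogonality of multiplicative characters, so that the only external ingredient is a standard count of power residues rather than the full machinery of \cite{lidl}. Write $\psi(y)=(-1)^{\Tr_{2^m/2}(y)}$ for the canonical additive character of $\F_{2^m}$ and recall that the Gauss sum is $G(\chi)=\sum_{x\in\F_{2^m}^*}\chi(x)\psi(x)$. First I would pull the additive character at $b$ out of the sum, writing $\sum_{x}\psi(ax^{2^d-1}+b)=\psi(b)\sum_{x}\psi(ax^{2^d-1})$, so that it suffices to establish
\[\sum_{x\in\F_{2^m}}\psi(ax^{2^d-1})=\sum_{j=1}^{2^d-2}\overline{\chi'^j}(a)G(\chi'^j).\]
I would then separate off the term $x=0$, which contributes $\psi(0)=1$, leaving a sum over $\F_{2^m}^*$ to be analyzed.

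The core step is to regroup the sum over $\F_{2^m}^*$ according to the value $t=x^{2^d-1}$. Since $\chi'$ has order $2^d-1$, its existence in the cyclic dual of $\F_{2^m}^*$ forces $(2^d-1)\mid(2^m-1)$, hence $d\mid m$ and $\gcd(2^d-1,2^m-1)=2^{\gcd(d,m)}-1=2^d-1$. Consequently the subgroup of $(2^d-1)$-th powers has index exactly $2^d-1$, and the characters trivial on it are precisely $\chi'^0,\chi'^1,\dots,\chi'^{2^d-2}$. The standard count of $N$-th roots then gives, for $t\in\F_{2^m}^*$, that the number of $x\in\F_{2^m}^*$ with $x^{2^d-1}=t$ equals $\sum_{j=0}^{2^d-2}\chi'^j(t)$. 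Substituting this count and interchanging the two summations yields
\[\sum_{x\in\F_{2^m}^*}\psi(ax^{2^d-1})=\sum_{t\in\F_{2^m}^*}\Bigl(\sum_{j=0}^{2^d-2}\chi'^j(t)\Bigr)\psi(at)=\sum_{j=0}^{2^d-2}\sum_{t\in\F_{2^m}^*}\chi'^j(t)\psi(at).\]

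Next I would substitute $s=at$ in each inner sum; using $\chi'^j(t)=\chi'^j(a^{-1})\chi'^j(s)=\overline{\chi'^j}(a)\chi'^j(s)$, the inner sum collapses to $\overline{\chi'^j}(a)G(\chi'^j)$. The only remaining bookkeeping is the term $j=0$, where $\chi'^0$ is the trivial character, $\overline{\chi'^0}(a)=1$, and $G(\chi'^0)=\sum_{s\in\F_{2^m}^*}\psi(s)=-1$ since $\sum_{s\in\F_{2^m}}\psi(s)=0$. This $-1$ cancels exactly against the $x=0$ term isolated at the start, leaving $\sum_{j=1}^{2^d-2}\overline{\chi'^j}(a)G(\chi'^j)$. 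Restoring the factor $\psi(b)=(-1)^{\Tr_{2^m/2}(b)}$ then gives the stated formula.

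The step I expect to require the most care is the power-residue count: one must check that the hypothesis that $\chi'$ has order $2^d-1$ indeed forces $d\mid m$, so that $\gcd(2^d-1,2^m-1)=2^d-1$ and the $2^d-1$ characters $\chi'^j$ are exactly those trivial on the subgroup of $(2^d-1)$-th powers. Everything after that is the routine interchange of sums, the change of variable $s=at$, and the evaluation of the trivial-character Gauss sum. Since the identity is a textbook statement, one may alternatively simply invoke \cite[Theorem~5.30]{lidl}; the sketch above records the short self-contained argument behind it.
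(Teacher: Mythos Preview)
The paper does not supply its own proof of this lemma: it is quoted verbatim from \cite[Theorem~5.30]{lidl} and used as a black box in the proof of Lemma~\ref{lemma:gauss}. Your argument is correct and is precisely the standard textbook derivation behind that citation---factor out $\psi(b)$, count preimages of $x\mapsto x^{2^d-1}$ by orthogonality of the characters $\chi'^j$, interchange sums, and cancel the trivial-character term $G(\chi'^0)=-1$ against the $x=0$ contribution. One minor remark: you do not actually need the implication $d\mid m$; all that is required is $(2^d-1)\mid(2^m-1)$, which is immediate from the existence of $\chi'$ of order $2^d-1$ in the cyclic group $\widehat{\F_{2^m}^*}$, and this alone guarantees that the map $x\mapsto x^{2^d-1}$ is exactly $(2^d-1)$-to-$1$ onto the index-$(2^d-1)$ subgroup.
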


\begin{lemma}\label{lemma:gauss} Suppose $d<m$ is a factor of $m$. Let $\gcd(2^d-1,\frac{m}{d})=t$. Then the set  $N=\{y\in\F_{2^m}^*:\ \Tr_{2^m/2^d}(y^{2^d-1})=0\}$ has order
  \[|N| =\begin{cases}
  	\dfrac{2^m-2^d}{2^d}+\dfrac{(2^d-1)(-1)^{\frac{m}{d}-1}}{2^d}
  	\sum\limits_{\chi\in(\widehat{\F}_{2^d}^*)^{\frac{2^d-1}{t}}\backslash\{\chi_0\}}G(\chi)^{\frac{m}{d}}, &\text{if}\ t\neq1;\\ 2^{m-d}-1, \ &\text{if}\ t=1.
  \end{cases}\]
where  $\widehat{\F}_{2^d}^*$  is the set of the multiplicative characters of $\F_{2^d}^*$ and  $\chi_0$ is the trivial character. In particular, $N$ is non-empty.
\end{lemma}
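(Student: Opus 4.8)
The plan is to turn the defining condition $\Tr_{2^m/2^d}(y^{2^d-1})=0$ into an additive-character sum over $\F_{2^d}$, evaluate the inner sums by Lemma~\ref{lem:b1}, and finally convert the resulting $\F_{2^m}$-Gauss sums into $\F_{2^d}$-Gauss sums via the Hasse--Davenport lifting relation. Write $M:=m/d$ and $e:=(2^m-1)/(2^d-1)$. Using orthogonality of the additive characters of $\F_{2^d}$ together with the transitivity identity $\Tr_{2^d/2}(c\,\Tr_{2^m/2^d}(w))=\Tr_{2^m/2}(cw)$ for $c\in\F_{2^d}$, I would first write
\[
|N|=\frac{1}{2^d}\sum_{c\in\F_{2^d}}\sum_{y\in\F_{2^m}^*}(-1)^{\Tr_{2^m/2}(c\,y^{2^d-1})}.
\]
The term $c=0$ contributes $2^m-1$. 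For each $c\neq 0$ I add back the $y=0$ term (contributing $1$) and apply Lemma~\ref{lem:b1} with $b=0$, where $\chi'$ is a fixed multiplicative character of $\F_{2^m}^*$ of order $2^d-1$ (which exists since $2^d-1\mid 2^m-1$); this gives $\sum_{y\in\F_{2^m}^*}(-1)^{\Tr_{2^m/2}(c\,y^{2^d-1})}=-1+\sum_{j=1}^{2^d-2}\overline{\chi'^{\,j}}(c)\,G(\chi'^{\,j})$.

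Next I sum over $c\in\F_{2^d}^*$. Since $c$ runs through the subfield units, only the restriction $\lambda_0:=\chi'|_{\F_{2^d}^*}$ is seen. Writing $\F_{2^d}^*=\langle g^{e}\rangle$ for a generator $g$ of $\F_{2^m}^*$ and using $e\equiv M\pmod{2^d-1}$, a short order computation shows $\lambda_0$ has order exactly $(2^d-1)/t$. Hence $\sum_{c\in\F_{2^d}^*}\overline{\chi'^{\,j}}(c)$ equals $2^d-1$ when $(2^d-1)/t\mid j$ and $0$ otherwise, so the only surviving indices are $j_l=l(2^d-1)/t$ with $1\le l\le t-1$. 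In particular, when $t=1$ no index survives, and the formula collapses to $|N|=\frac{2^m-2^d}{2^d}=2^{m-d}-1$, settling that case.

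For $t\neq 1$ the crux is re-expressing the surviving sums $G(\chi'^{\,j_l})$ over $\F_{2^m}$ through $\F_{2^d}$. Because $\chi'$ has order $2^d-1$, it is trivial on $\mu:=\{y^{2^d-1}:y\in\F_{2^m}^*\}=\ker(\mathrm{Nm}_{2^m/2^d})$, so each $\chi'^{\,j_l}$ descends through the norm: $\chi'^{\,j_l}=\rho_l\circ\mathrm{Nm}_{2^m/2^d}$ for a unique $\rho_l\in\widehat{\F}_{2^d}^*$. A separate order count shows $\{\rho_l:1\le l\le t-1\}$ is exactly $(\widehat{\F}_{2^d}^*)^{(2^d-1)/t}\setminus\{\chi_0\}$, the nontrivial characters of order dividing $t$. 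The Hasse--Davenport relation $-G(\rho\circ\mathrm{Nm}_{2^m/2^d})=(-G(\rho))^{M}$ then gives $G(\chi'^{\,j_l})=(-1)^{M-1}G(\rho_l)^{M}$. Assembling, $\sum_{c\neq 0}\sum_{y\neq 0}(\cdots)=-(2^d-1)+(2^d-1)(-1)^{M-1}\sum_{\chi}G(\chi)^{M}$, and dividing by $2^d$ yields the stated expression with exponent $m/d=M$.

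The main obstacle I anticipate is the character bookkeeping: a character of $\F_{2^m}^*$ meets the subfield $\F_{2^d}$ in two genuinely different ways, namely its literal restriction $\lambda_0$ to the subfield units $\F_{2^d}^*$ (which decides \emph{which} $j$ survive) and its descent $\rho_l$ through the norm (which governs the Hasse--Davenport conversion); both live on $\widehat{\F}_{2^d}^*$, and the two order computations must be carried out independently without conflating them. Finally, for non-emptiness: when $t=1$, $|N|=2^{m-d}-1\ge 1$ since $m>d$. When $t\neq 1$ one has $d\ge 2$ (as $t\mid 2^d-1$ forces $t$ odd, so $t\ge 3$) and $t\le M$, hence $M\ge 3$; using $|G(\chi)|=2^{d/2}$ for nontrivial $\chi$ over $\F_{2^d}$, the correction term has absolute value at most $\frac{2^d-1}{2^d}(t-1)2^{m/2}$, while $\frac{2^m-2^d}{2^d}=\frac{2^d-1}{2^d}(e-1)$ with $e-1=\frac{2^m-2^d}{2^d-1}>2^{m-d}\ge (M-1)2^{m/2}\ge (t-1)2^{m/2}$, the middle inequality amounting to $2^{d(M-2)/2}\ge M-1$, valid for all $d\ge 2$, $M\ge 3$. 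Therefore $|N|\ge\frac{2^d-1}{2^d}\bigl((e-1)-(t-1)2^{m/2}\bigr)>0$, so $N$ is non-empty.
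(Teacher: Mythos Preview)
Your proof is correct and uses the same ingredients as the paper's---orthogonality over $\F_{2^d}$, Lemma~\ref{lem:b1}, and the Hasse--Davenport lifting relation---arriving at the stated formula and the non-emptiness estimate via $|G(\chi)|=2^{d/2}$. The only differences are organizational: you treat $t=1$ as the degenerate case of the general argument (the paper handles it by a separate coset computation without Lemma~\ref{lem:b1}), and you sum over $c\in\F_{2^d}^*$ first, isolating the surviving indices $j$ via the order of $\chi'|_{\F_{2^d}^*}$, before invoking Hasse--Davenport, whereas the paper lifts all $\chi'^j$ through the norm first and only afterwards picks out the contributing characters by a coset decomposition of $\F_{2^d}^*$.
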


\begin{proof} We have
  \begin{align}
    |N| &=\frac{1}{2^d}\sum_{v\in\F_{2^d}}\sum_{y\in\F^*_{2^m}}
    (-1)^{\Tr_{2^d/2}(v\Tr_{2^m/2^d}(y^{2^d-1}))}\notag\\
  &=\frac{2^m-1}{2^d}+\frac{1}{2^d}
\sum_{v\in\F^*_{2^d}}\sum_{y\in\F_{2^m}^*}
    (-1)^{\Tr_{2^m/2}(vy^{2^d-1})}.\label{eq:gu}
    \end{align}

Suppose $\F_{2^m}^*=\langle\beta\rangle$, then $\F_{2^m}^*=\bigcup_{i=0}^{\frac{2^m-1}{2^d-1}-1}\beta^i\F_{2^d}^*$. Note that $\gcd(\frac{2^m-1}{2^d-1},2^d-1)=\gcd(\frac{m}{d},2^d-1)=t$. If $t=1$, one has
 \begin{align*}
    |N|
  &=\frac{2^m-1}{2^d}+\frac{2^d-1}{2^d}
\sum_{v\in\F^*_{2^d}}\sum_{i=0}^{\frac{2^m-1}{2^d-1}-1}
    (-1)^{\Tr_{2^m/2}(v\beta^{i(2^d-1)})}\\
  &=\frac{2^m-1}{2^d}+\frac{2^d-1}{2^d}
\sum_{v\in\F^*_{2^d}}\sum_{i=0}^{\frac{2^m-1}{2^d-1}-1}
    (-1)^{\Tr_{2^m/2}(v\beta^{i})}\\
&=\frac{2^m-1}{2^d}+\frac{2^d-1}{2^d}
\sum_{v\in\F^*_{2^m}}
    (-1)^{\Tr_{2^m/2}(v)}=\frac{2^m-2^d}{2^d}\geq 1.
    \end{align*}

If $t\neq1$, suppose $\chi'$ is a multiplicative character of $\F_{2^m}^*$ of order $2^d-1$, then by Lemma~\ref{lem:b1} and Eq.~\eqref{eq:gu},
 \begin{align}
 |N|&=\frac{2^m-1}{2^d}+\frac{1}{2^d}
\sum_{v\in\F^*_{2^d}}\Bigl(\sum_{y\in\F_{2^m}}
    (-1)^{\Tr_{2^m/2}(vy^{2^d-1})}-1\Bigr)\notag\\
    &=\frac{2^m-1}{2^d}+\frac{1}{2^d}
\sum_{v\in\F^*_{2^d}} \Bigl( \sum_{j=1}^{2^d-2}
\overline{\chi'^j}(v)G(\chi'^j)-1 \Bigr)\notag\\
&=\frac{2^m-1}{2^d}+\frac{1}{2^d}
\sum_{v\in\F^*_{2^d}}\sum_{j=0}^{2^d-2}
\overline{\chi'^j}(v)G(\chi'^j).\label{eq:gu2}
 \end{align}

Suppose $\N$ is the norm mapping from $\F_{2^m}$ to $\F_{2^d}$. For $\chi\in\widehat{\F}^*_{2^d}$, it can be lifted from $\F_{2^d}$ to $\F_{2^m}$ by $\chi'=\chi\circ\N$ (see \cite[Theorem 5.28]{lidl}). Furthermore, $\chi$ is of order $2^d-1$ if and only if $\chi'$ is of order $2^d-1$. Then
\begin{align*}
  \sum_{j=0}^{2^d-2}
\overline{\chi'^j}(v)G(\chi'^j) &=\sum_{\chi\in\widehat{\F}^*_{2^d}}
\overline{\chi}(\N(v))G(\chi\circ\N)=
(-1)^{\frac{m}{d}-1}\sum_{\chi\in\widehat{\F}^*_{2^d}}
\overline{\chi}(v^{\frac{2^m-1}{2^d-1}})G(\chi)^{\frac{m}{d}}.
\end{align*}

Suppose $\delta=\beta^{\frac{2^m-1}{2^d-1}}\in\F_{2^d}^*$, then $\F_{2^d}^*=\bigcup\limits_{j=0}^{\frac{2^d-1}{t}-1}
\delta^j\langle\delta^{\frac{2^d-1}{t}}\rangle$. By Eq.~\eqref{eq:gu2}, we get
\begin{align*}
    |N|&=\frac{2^m-1}{2^d}+\frac{(-1)^{\frac{m}{d}-1}}{2^d}
\sum_{v\in\F^*_{2^d}}\sum_{\chi\in\widehat{\F}^*_{2^d}}
\overline{\chi}(v^{\frac{2^m-1}{2^d-1}})G(\chi)^{\frac{m}{d}}\\
&=\frac{2^m-1}{2^d}+\frac{(-1)^{\frac{m}{d}-1}}{2^d}
\sum_{\chi\in\widehat{\F}_{2^d}^*}G(\chi)^{\frac{m}{d}}
\sum_{j=0}^{\frac{2^d-1}{t}-1}\sum_{v\in\delta^j\langle\delta^{\frac{2^d-1}{t}}\rangle} \overline{\chi}(v^{\frac{2^m-1}{2^d-1}})\\
&=\frac{2^m-1}{2^d}+\frac{(-1)^{\frac{m}{d}-1}t}{2^d}
\sum_{\chi\in\widehat{\F}_{2^d}^*}G(\chi)^{\frac{m}{d}}
\sum_{j=0}^{\frac{2^d-1}{t}-1} \overline{\chi}(\delta^{j\frac{2^m-1}{2^d-1}}).
\end{align*}

Note that $\gcd(\frac{2^m-1}{t(2^d-1)},\frac{2^d-1}{t})=1$, then \begin{align*}\sum_{i=0}^{\frac{2^d-1}{t}-1} \overline{\chi}(\delta^{i\frac{2^m-1}{2^d-1}})
=\sum_{i=0}^{\frac{2^d-1}{t}-1} \overline{\chi}(\delta^{it})=\sum_{x\in\langle\delta^t\rangle} \overline{\chi}(x)=\begin{cases}
                     0, & \mbox{if } \chi\neq\chi_0; \\
                     \frac{2^d-1}{t}, & \mbox{if } \chi= \chi_0.
                   \end{cases}
\end{align*}

Hence we have
\begin{align*}
  |N| =\frac{2^m-2^d}{2^d}+\frac{(2^d-1)(-1)^{\frac{m}{d}-1}}{2^d}
\sum_{\chi\in(\widehat{\F}_{2^d}^*)^{\frac{2^d-1}{t}}\backslash\{\chi_0\}}
G(\chi)^{\frac{m}{d}}.
\end{align*}

 Note that $|G(\chi)|=2^{\frac{d}{2}}$ for $\chi\neq\chi_0$ (see \cite[Theorem~5.11]{lidl}), then we have
 \begin{align*}
	|N| \geq\frac{2^m-2^d-(2^d-1)(t-1)2^{\frac{m}{2}}}{2^d}.
\end{align*}

Note that $t\neq1$ and
\begin{align*}
	2^m-2^d-(2^d-1)(t-1)2^{\frac{m}{2}} = 2^m-t2^{\frac{m}{2}+d}+(t-1)2^{\frac{m}{2}}-2^d>2^m-t2^{\frac{m}{2}+d}.
\end{align*}

Since $t=\gcd(\frac{m}{d},2^d-1)\leq 2^d-1$, then $2^m-t2^{\frac{m}{2}+d}>2^m-2^{\frac{m}{2}+2d}\geq 0$ if $\frac{m}{d}\geq 4$.

 If $m=3d$, then $t=\gcd(3,2^d-1)=3$ and $d$ is even. One has
\[|N|=2^{3d}-2^d-(2^d-1)2^{\frac{3d}{2}+1}> 2^{3d}-2^{\frac{5d}{2}+1}\geq0. \]

If $m=2d$, then $t=\gcd(2,2^d-1)=1$, which contradicts to $t\neq1$.
Thus we complete the proof.
\end{proof}

Back to our situation, we have the following result.
\begin{lemma}\label{lemma:imp}
 Suppose $v_2(m)<v_2(i)$, then there exists  $a\in\F_{2^n}-\F_{2^m}$ such that
  $L_a(y)$ has  roots in $\F^*_{2^n}$.
\end{lemma}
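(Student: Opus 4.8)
The plan is to exhibit a single $a\in\F_{2^n}-\F_{2^m}$ for which $F_a$ is not bent; by Proposition~\ref{proposition:p1}(2) this already forces $S_F\neq\F_{2^m}$, hence $|S_F|>2^m$ by Proposition~\ref{proposition:p1}(1), so $F$ cannot have $2^n-2^m$ bent components. By the derivative computation recorded above, $F_a$ is not bent precisely when $L_a$ has a root in $\F_{2^n}^*$, so the task is to produce $a\notin\F_{2^m}$ and $y\neq 0$ with $L_a(y)=0$.

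I would first record the arithmetic consequences of $v_2(m)<v_2(i)$: one has $d=\gcd(m+i,2m)=\gcd(i,m)$, with $m/d$ odd, $i/d$ even, and $\gcd(2^i-1,2^m-1)=2^d-1$. Fix $\omega\in\F_{2^n}$ with $\omega+\omega^{2^m}=1$, so $\F_{2^n}=\F_{2^m}\oplus\F_{2^m}\omega$; writing $a=a'+\gamma\omega$ with $a',\gamma\in\F_{2^m}$ one has $\gamma=\Tr_{2^n/2^m}(a)$, so $a\notin\F_{2^m}$ iff $\gamma\neq 0$. Also $\omega^{2^i}=p_i+\omega$ for some $p_i\in\F_{2^m}$, since the $\omega$-coordinate of $\omega^{2^i}$ is $\Tr_{2^n/2^m}(\omega^{2^i})=1$. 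Now I restrict the search to roots $y\in\F_{2^m}^*$ and split the equation $L_a(y)=0$ into its $\F_{2^m}$- and $\omega$-components.

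The $\omega$-component is $\gamma^{2^i}y^{2^{2i}}+\gamma y=0$, which for $\gamma,y\neq 0$ is equivalent to $\gamma y^{2^i+1}\in\F_{2^d}^*$; so I put $\gamma:=y^{-(2^i+1)}$. The $\F_{2^m}$-component becomes the $\F_2$-linear equation $a'^{2^i}y^{2^{2i}}+a'y=\gamma^{2^i}p_iy^{2^{2i}}+\gamma^{2^i}y^{2^i}$ for $a'\in\F_{2^m}$; the substitution $s=y^{2^i+1}a'$ rewrites its left side as $y^{-2^i}(s^{2^i}+s)$, whose image over $\F_{2^m}$ is $\ker\Tr_{2^m/2^d}$ (by a dimension count, using $d\mid i$), so the equation is solvable for $a'$ iff $\Tr_{2^m/2^d}\!\big(y^{2^i}(\gamma^{2^i}p_iy^{2^{2i}}+\gamma^{2^i}y^{2^i})\big)=0$. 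Substituting $\gamma=y^{-(2^i+1)}$ and using both $\Tr_{2^m/2^d}(z^{2^i})=\Tr_{2^m/2^d}(z)$ (valid since $d\mid i$) and the identity $\Tr_{2^m/2^d}(p_i)=\Tr_{2^m/2^d}(\omega^{2^i}+\omega)=0$, this collapses to the single condition $\Tr_{2^m/2^d}(y^{2^i-1})=0$ (up to replacing $y$ by $y^{-1}$).

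It then remains to produce $y\in\F_{2^m}^*$ with $\Tr_{2^m/2^d}(y^{2^i-1})=0$, and this is exactly what Lemma~\ref{lemma:gauss} gives: since $\gcd(2^i-1,2^m-1)=2^d-1$, the set $\{y^{2^i-1}:y\in\F_{2^m}^*\}$ coincides with $\{y^{2^d-1}:y\in\F_{2^m}^*\}$, and Lemma~\ref{lemma:gauss} (applicable because $d\mid m$ and $d<m$) asserts that $N=\{y\in\F_{2^m}^*:\Tr_{2^m/2^d}(y^{2^d-1})=0\}\neq\emptyset$, so some $(2^d-1)$-st, hence some $(2^i-1)$-st, power has zero trace. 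For such a $y$, solving for $a'$ yields $a=a'+y^{-(2^i+1)}\omega\notin\F_{2^m}$ with $L_a(y)=0$, so $F_a$ is not bent and we are done. The step I expect to be the main obstacle is the trace bookkeeping of the third paragraph, notably the identity $\Tr_{2^m/2^d}(\omega^{2^i}+\omega)=0$: it rests on $i/d$ being even (so the ``carry'' terms $+1$ produced when reducing exponents $\geq m$ via $\omega^{2^m}=\omega+1$ occur an even number of times) and on $m/d$ being odd; everything else is routine manipulation with linearized polynomials and traces.
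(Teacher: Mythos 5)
Your proof is correct, and it shares the paper's overall strategy --- look for a root $y\in\F_{2^m}^*$, reduce everything to the solvability of $\Tr_{2^m/2^d}(y^{2^d-1})=0$, and invoke Lemma~\ref{lemma:gauss} --- but the middle of the argument is organized differently. The paper substitutes $z=av^{-1}y^{2^i+1}$ and reduces $L_a(y)=0$ to the system $z+z^{2^m}=1$, $z^{2^i}+z=y^{2^i-2^{2i}}$, which it first solves with $z_0\in\F_{2^m}$ and then corrects to $z_0+w$ with $w\in\F_{2^{2d}}$; there the hypothesis $v_2(m)<v_2(i)$ enters through $2d\mid i$ and $m/d$ odd, which give $w^{2^i}=w$ and $w^{2^m}=w^{2^d}$. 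You instead decompose $a=a'+\gamma\omega$ over the basis $\{1,\omega\}$, kill the $\omega$-component by taking $\gamma=y^{-(2^i+1)}$ (the paper's choice $v=1$), and solve the $\F_{2^m}$-component for $a'$ via surjectivity of $s\mapsto s^{2^i}+s$ onto $\ker\Tr_{2^m/2^d}$; here the hypothesis enters through the identity $\Tr_{2^m/2^d}(\omega^{2^i}+\omega)=0$, and your carry-counting justification is sound: exactly $i/d$ of the exponents $i+jd$, $0\le j\le m/d-1$, exceed $m$, and $i/d$ is even precisely because $v_2(i)>v_2(m)=v_2(d)$ (the parity of $m/d$ is not actually needed for that identity). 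The two routes are of comparable length; yours avoids the explicit lift into $\F_{2^{2d}}$ at the cost of this extra trace identity, and both terminate at the same application of Lemma~\ref{lemma:gauss}. Two cosmetic points: the appeal to Proposition~\ref{proposition:p1} in your first paragraph belongs to the proof of Theorem~\ref{theorem:bino} rather than to this lemma, and, exactly as in the paper's proof, you implicitly need $i\neq0$ so that $d<m$ and Lemma~\ref{lemma:gauss} applies.
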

\begin{proof}

For $v_2(m)<v_2(i)$, note that $d=\gcd(m,i)=\gcd(m+i, n)$. It suffices to show that there exists $a \in\F_{2^n}-\F_{2^m}$ such that  $L_a(y)$ has a root $y_0 \in \F_{2^m}^*$. Note that for $y\in\F_{2^m}^*$,
\begin{equation}\label{eq:bq1}
  L_a(y)=a^{2^i}y^{2^{2i}}+(a+a^{2^m})^{2^i}y^{2^{m+i}}+ay
=a^{2^i}y^{2^{2i}}+(a+a^{2^m})^{2^i}y^{2^{i}}+ay.
\end{equation}
Then we just need to find $(a, y)\in \F_{2^n}\times \F_{2^m}^*$   such that
\begin{equation}\label{eq:bq2}
  \begin{cases}
 a+a^{2^m}=y^{-2^i-1}v,\\
  (ay^{2^{i}+1})^{2^i}+ay^{2^{i}+1}=y^{2^{i}-2^{2i}}v,
\end{cases}
\end{equation}
for some $v\in \F_{2^d}^*$ (here $a\notin \F_{2^m}$ is automatic). Let $z=av^{-1}y^{2^{i}+1}$, then we just need to find $(z, y)\in \F_{2^n}\times \F_{2^m}^*$ such that
\begin{numcases}{}
z+z^{2^m}=1, \label{eqsystem1} \\
z^{2^i}+z=y^{2^{i}-2^{2i}}.\label{eqsystem2}
\end{numcases}

We consider Eq.~\eqref{eqsystem2}. Note that the $\F_{2^d}$-linear maps $\varphi_i:z\mapsto z^{2^i} +z$ and  $\varphi_d: z\mapsto z^{2^d} +z$ from $\F_{2^m}$ to itself have the same kernel $\F_{2^d}$ and $\varphi_i(z)=\varphi_d(z+z^{2^d}+\cdots+z^{2^{(\frac{i}{d}-1)d}})$, then
$\Im(\varphi_i)\subseteq \Im(\varphi_d)$ and hence $\Im(\varphi_d)=\Im(\varphi_i)$.  Note also that the group homomorphisms $y\mapsto y^{2^i(1-2^i)}$ and $y\mapsto y^{2^d-1}$ from $\F_{2^m}^*$ to itself have the same kernel and image. Then there is an one-to-one correspondents of  solutions $(z,y)\in \F_{2^m}\times \F_{2^m}^*$ of
 Eq.~\eqref{eqsystem2} and  of
  \begin{equation} \label{eqsystem3}
  	z^{2^d}+z = y^{2^d-1} .
  \end{equation}
 Eq.~\eqref{eqsystem3} is soluble if and only if there exists $y\in\F_{2^m}^*$ such that $\Tr_{2^m/2^d}(y^{2^d-1})=0$, which is guaranteed  by Lemma~\ref{lemma:gauss} as $d<m$ in this case.
Thus there exists $(z_0, y_0)\in \F_{2^m}^*\times \F_{2^m}^*$ such that $z^{2^i}_0+z_0=y^{2^{i}-2^{2i}}_0 $.

Let $w\in\F_{2^{2d}}\setminus\F_{2^d}$ satisfy $w^{2^d}+w=v_0$, then $w^{2^{m-d}}=w^{2^i}=w$ and $z=z_0+w\in\F_{2^n}\setminus\F_{2^m}$ is a solution of Eqs.~\eqref{eqsystem1} and ~\eqref{eqsystem2}. Thus, $y_0\in\F_{2^m}^*$ and $a=(z_0+w)y_0^{-2^{i}-1}v$ satisfy the equation $L_a(y)=0$.
\end{proof}

\begin{lemma}\label{lemma:pre}
For $0\leq i\leq m-1$, if $F(x)=x^{2^m+1}+x^{2^i+1}$ has $2^n-2^m$ bent components, then $v_2(m)\leq v_2(i)$.
\end{lemma}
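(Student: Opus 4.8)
\textbf{Proof proposal for Lemma~\ref{lemma:pre}.}
The plan is to prove the contrapositive: assuming $v_2(m)>v_2(i)$, I will produce an element $a\in\F_{2^n}-\F_{2^m}$ for which the component function $F_a$ fails to be bent, so that $S_F\not\subseteq\F_{2^m}$ and hence, by Proposition~\ref{proposition:p1}(2), $F$ cannot have $2^n-2^m$ bent components. The bentness criterion I will use is Lemma~\ref{lemma:diff}: $F_a$ is bent if and only if $D_yF_a$ is balanced for every $y\neq0$. From the computation already recorded in the excerpt, $D_yF_a(x)=\Tr_{2^n/2}\!\bigl(xL_a(y)^{-2^i}\bigr)$ for $y\neq0$ (interpreting $L_a(y)^{-2^i}$ as $0$ when $L_a(y)=0$), which is balanced precisely when $L_a(y)\neq0$. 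Therefore $F_a$ is bent if and only if $L_a$ has no nonzero root in $\F_{2^n}$, and it suffices to exhibit $a\in\F_{2^n}-\F_{2^m}$ with $L_a$ having a root in $\F_{2^n}^*$.

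The next step splits according to how much larger $v_2(m)$ is than $v_2(i)$; write $v_2(m)>v_2(i)$. When $v_2(i)<v_2(m)$ strictly, I would like to invoke Lemma~\ref{lemma:imp}, but that lemma is stated under the hypothesis $v_2(m)<v_2(i)$, i.e.\ the reverse inequality, so it does not apply verbatim. The genuinely applicable construction here is Lemma~\ref{lemma:bl}: under the hypothesis $v_2(i)=v_2(m)$ it already gives an explicit $a=\tfrac{1}{1+\xi}\notin\F_{2^m}$ (with $\xi\in\F_{2^d}$, $\xi^{2^{d/2}+1}=1$) such that $L_a$ vanishes on all of $\F_{2^d}$, hence has nonzero roots. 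So the case $v_2(i)=v_2(m)$ is finished immediately by Lemma~\ref{lemma:bl}. For the remaining case $v_2(i)<v_2(m)$, I would rerun the argument of Lemma~\ref{lemma:imp} with the roles adjusted: there $d=\gcd(m,i)$, and since $v_2(i)<v_2(m)$ we still have $d<m$, the key input being that $2^d-1\mid 2^m-1$ and that Lemma~\ref{lemma:gauss} (which only needs $d\mid m$, $d<m$) furnishes $y\in\F_{2^m}^*$ with $\Tr_{2^m/2^d}(y^{2^d-1})=0$. Tracing through \eqref{eq:bq1}--\eqref{eqsystem3} then yields $(z_0,y_0)\in\F_{2^m}^*\times\F_{2^m}^*$ solving $z_0^{2^i}+z_0=y_0^{2^i-2^{2i}}$, and adjoining $w\in\F_{2^{2d}}\setminus\F_{2^d}$ with $w^{2^d}+w=1$ produces $z=z_0+w\notin\F_{2^m}$; setting $a=(z_0+w)y_0^{-2^i-1}$ (up to the scalar $v\in\F_{2^d}^*$) gives the desired $a\in\F_{2^n}-\F_{2^m}$ with $L_a(y_0)=0$.

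The main obstacle I anticipate is making the case division clean: the excerpt's Lemma~\ref{lemma:imp} is phrased for $v_2(m)<v_2(i)$ whereas Lemma~\ref{lemma:pre} needs the regime $v_2(m)>v_2(i)$, so I must be careful that the arithmetic identities $d=\gcd(m,i)=\gcd(m+i,n)$ and the divisibility $2^d-1\mid 2^m-1$ used in the Gauss-sum counting really do hold in this regime too, and that $d<m$ (which fails only if $i\equiv0$, i.e.\ $i=0$, the excluded case). A secondary point to check is the boundary subcase where $v_2(m)>v_2(i)$ forces $d=\gcd(m,i)$ to be odd, so that the $\xi^{2^{d/2}+1}=1$ trick of Lemma~\ref{lemma:bl} is unavailable and one genuinely needs the Lemma~\ref{lemma:imp}-style argument; I expect the Gauss-sum estimate in Lemma~\ref{lemma:gauss} to cover all such $d<m$ uniformly, so no further subdivision should be needed. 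Once a nonzero root of $L_a$ is found for a single $a\notin\F_{2^m}$, the conclusion $v_2(m)\le v_2(i)$ follows formally from Proposition~\ref{proposition:p1}(2), completing the proof.
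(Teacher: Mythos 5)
Your overall strategy (exhibit some $a$ witnessing $S_F\neq\F_{2^m}$ and invoke Proposition~\ref{proposition:p1}) is sound, and your preliminary reduction via $L_a$ is correct, but the core step has a genuine gap. First, a logical slip: the contrapositive of the lemma only concerns $v_2(m)>v_2(i)$, so the subcase $v_2(i)=v_2(m)$ you dispose of via Lemma~\ref{lemma:bl} is vacuous here (that lemma is used in the proof of Theorem~\ref{theorem:bino}, not of Lemma~\ref{lemma:pre}). The real problem is your claim that the argument of Lemma~\ref{lemma:imp} can simply be ``rerun'' when $v_2(i)<v_2(m)$. It cannot. In that regime $d=\gcd(m,i)$ still holds and Lemma~\ref{lemma:gauss} still applies, but the final lifting step collapses: one has $\gcd(i,n)=\gcd(i,m)=d$ and $2d\mid m$, so $\F_{2^{2d}}\subseteq\F_{2^m}$, and \emph{every} $w$ with $w^{2^i}+w=0$ lies in $\F_{2^d}\subseteq\F_{2^m}$, hence satisfies $w+w^{2^m}=0\neq 1$. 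So there is no element $w$ that simultaneously preserves the equation $z^{2^i}+z=y_0^{2^i-2^{2i}}$ and pushes $z$ (hence $a$) outside $\F_{2^m}$; the system \eqref{eqsystem1}--\eqref{eqsystem2} cannot be solved by the paper's substitution $z=z_0+w$. You would need to show instead that $\theta^{2^i}+\theta$ (for $\theta$ with $\theta+\theta^{2^m}=1$) lands in $\Im(\varphi_i|_{\F_{2^m}})+\{y^{2^d-1}\}$, a nontrivial extra condition you do not address, and it is not clear that any $a\in\F_{2^n}-\F_{2^m}$ with $L_a$ having a nonzero root exists at all in this regime.

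The paper's proof goes the other way around Proposition~\ref{proposition:p1}: assuming maximality, every $a\in\F_{2^m}^*$ must give a \emph{non-bent} $F_a(x)=\Tr_{2^n/2}(ax^{2^i+1})$, i.e.\ $a^{2^i-1}y^{2^{2i}-1}=1$ must be soluble for every such $a$. When $v_2(m)>v_2(i)$ one has $2^d-1=\gcd(2^m-1,2^i-1)$ and $2^{2d}-1=\gcd(2^m-1,2^{2i}-1)$, so $2^d+1$ divides $2^m-1$ and is therefore coprime to $2^m+1$; choosing $a\in\F_{2^m}^*$ with $a^{2^i-1}=\alpha^{(2^m+1)(2^d-1)}$ turns solubility into the congruence $(2^d+1)x\equiv 2^m+1 \pmod{(2^d+1)\cdot\frac{2^n-1}{2^{2d}-1}}$, which is insoluble. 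This coprimality mechanism is entirely absent from your proposal, so as written the proof does not close.
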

\begin{proof}  Assume  $v_2(m)>v_2(i)$.  In this case  $d=\gcd(m,i)=\gcd(n,i)$, and $2d=\gcd(2i, m)$. This means $2^d-1=\gcd(2^m-1, 2^i-1)$ and $2^{2d}-1=\gcd(2^m-1, 2^{2i}-1)$, which then implies that $2^d+1$ is a factor of $2^m-1$ and thus prime to $2^m+1$.
	
Let $\alpha$ be a primitive element of $\F_{2^n}$. Let $a=\alpha^{k(2^m+1)}\in \F_{2^m}^*$ such that $a^{2^i-1}=\alpha^{(2^m+1)(2^d-1)}$.
By Proposition~\ref{proposition:p1}, for this $a$, $F_a(x)=\Tr_{2^n/2}(ax^{2^i+1})$
is not bent. By Lemma~\ref{lemma:diff},  $D_yF_a(x)=\Tr_{2^n/2}(x(ay^{2^i}+(ay)^{2^{n-i}}))$ is not balanced for some $y\in\F_{2^n}$, i.e., $a^{2^i-1} y^{2^{2i}-1}+1=0$ is soluble. Let
$a^{2^i-1}=\alpha^{(2^m+1)(2^d-1)}= y_0^{1-2^{2i}}$ and let $y_1\in \F_{2^n}^*$ such that
$ y_0^{1-2^{2i}}=y_1^{2^{2d}-1} $. Then the congruent equation
 \[  (2^{2d}-1) x\equiv  (2^d-1)(2^m+1) \bmod{(2^n-1)} \]
is soluble, equivalently, the equation
 \[  (2^d+1) x\equiv 2^m+1\bmod{(2^d+1)\cdot \frac{2^n-1}{2^{2d}-1}} \]
is soluble. This is not possible since  $2^d+1$ is prime to $2^m+1$.
\end{proof}

\begin{proof}[Proof of Theorem~\ref{theorem:bino}] If $i=0$, the result is trivial. We now assume $i\neq 0$ and $v_2(m)\leq v_2(i)$.

If $F(x)$ has maximal number of bent components, by Lemma~\ref{lemma:diff}, $F_a(x)$ is  bent function for all $a\in \F_{2^n}-\F_{2^m}$ and hence $D_yf_a(x)$ is balanced for any $y\in\F_{2^n}^*$. This implies $L_a(y)\neq0$ for all $y\in\F_{2^n}^*$.  Hence to show $F(x)$ does not have maximal number of bent components, it suffices to show there exists $a\in \F_{2^n}-\F_{2^m}$, such that $L_a(y)$ has a root in $\F_{2^n}^*$:
\begin{enumerate}[i]
	\item If $v_2(m)=v_2(i)$, this is implied by Lemma~\ref{lemma:bl}.
	
\item 	If $v_2(m)<v_2(i)$, this is implied
	by Lemma~\ref{lemma:imp}.
\end{enumerate}
Thus for $i\neq0$, $F(x)$ cannot have $2^n-2^m$ bent components.
\end{proof}

\begin{remark} For a general binomial vectorial function $F(x)=x^{d_1}+x^{d_2}$, our experimental result indicates that $F(x)$ is affine equivalent to
 $x^{2^m+1}$ or $x^{2^i}(x+x^{2^m})$ if $F(x)$ has maximal number of bent components, but so far we do not have a proof. We leave this as an open problem for future study.
\end{remark}

\section{Conclusion}We firstly study the  nonlinearity of $F(x)=x^{2^e}h(\Tr_{2^n/2^m}(x))$ with $2^n-2^m$ bent components, where $h(x)$ is a permutation over $\F_{2^m}$, and obtain the upper bound of the nonlinearity of $F$ based on the monomial $h$. Moreover, we give some plateaued and non-plateaued functions attaining the upper bound.
We secondly give two generic constructions of vectorial functions with maximal number of bent components, and obtain two new classes of such vectorial functions based on the Niho quadratic function and the Maiorana-MacFarland class. Moreover, our constructions  partially answer the open problem proposed by Pott et al. and contain vectorial functions outside the  complete Maiorana-MacFarland class.  We finally show that the binomial function $F(x)=x^{2^m+1}+x^{2^i+1}: \F_{2^{2m}}\rightarrow \F_{2^{2m}}$ has maximal number of bent components if and only if $i=0$.

\end{document}